\documentclass[prx,aps,superscriptaddress,groupedaddress,nofootinbib, twocolumn ]{revtex4-1}

\usepackage{amsmath,amsfonts,amsthm,amssymb,bbm,graphicx,color,enumerate,hyperref,xcolor,tikz,pgfplots,ulem,subfigure}
\usepackage{microtype}
\usepackage[noend]{algorithm2e}
\usepackage[colorinlistoftodos]{todonotes}
\usepackage{MnSymbol}
\usepackage{tikz}
\usetikzlibrary{quantikz}

\hypersetup{
	colorlinks=true,  
	linkcolor=blue,   
	citecolor=blue,   
	urlcolor=blue     
}
\def\C{\mathcal{C}}
\def\K{ {\mathcal K} }
\def\E{ {\mathcal E} }
\def\P{ {\mathcal P} }
\def\h{ {\mathcal H} }
\def\H{ {\mathcal H} }

\def\U{ {\mathcal U} }
\def\R{ {\mathcal R} }

\def\N{ {\mathcal N} }

\def\A{ {\mathcal A} }
\def\B{ {\mathcal B} }
\def\O{ {\mathcal O} }
\def\P{ {\mathcal P} }
\def\D{ {\mathcal D} }	
\def\S{ {\mathcal S} }
\def\T{ {\mathcal T} }

\def\I{ {\mathcal I} }

\def\W{{\bf{W}}}

\def\g{\boldsymbol{g}}
\def\r{\boldsymbol{r}}
\def\s{\boldsymbol{s}}
\def\p{\boldsymbol{p}}

\def\x{\boldsymbol{x}}

\def\z{\boldsymbol{z}}

\def\Tr{{\rm{tr}}}
\def\tr{ \mbox{tr} }

\def\>{\rangle}
\def\<{\langle}
\def\kv{\rangle\rangle}
\def\bv{\langle\langle}

\def\hc{^{\dagger}}

\renewcommand{\emph}{\textit}

\newcommand{\iden}{\mathbb{I}}

\newcommand{\clebsch}[6]{\langle #1,#2;#3,#4\hspace{0.5px} | \hspace{0.5px}#5,#6\rangle}

\newcommand{\cc}[1]{\textcolor{magenta}{#1}}

\newtheorem{theorem}{Theorem}

\newtheorem{lemma}[theorem]{Lemma}

\newtheorem{definition}[theorem]{Definition}

\newtheorem{corollary}[theorem]{Corollary}

%\usetikzlibrary{matrix,decorations.pathreplacing,arrows.meta}

\usetikzlibrary{calc}

\definecolor{block}{RGB}{0,162,232}

\newenvironment{blockmatrix}{%
	\left(%
	\vcenter\bgroup\hbox\bgroup
	\tikzpicture[
	x=1.5\baselineskip,
	y=1.5\baselineskip,
	]%
}{%
	\endtikzpicture
	\egroup
	\egroup
	\right)%
}

% \block[#1](#2,#3)#4(#5,#6)
% #1:      fill color
% (#2,#3): lower left corner
% #4:      text in the middle
% (#5,#6): size of the block
\newcommand*{\block}[1][block]{%
	\blockaux{#1}%
}
\def\blockaux#1(#2,#3)#4(#5,#6){%
	\draw[fill={#1}]
	let \p1=(#2,#3),
	\p2=(#5,#6),
	\p3=(#2+#5,#3+#6),
	\p4=(#2+#5/2,#3+#6/2),
	\p5 = (#2+#5+#5/2,#3+#6/2)
	in
	(\p1) rectangle (\p3)
	(\p4) node {$#4$};
}

\usepackage{ulem}

\usepackage{scalerel,stackengine}
\stackMath
\newcommand\reallywidehat[1]{%
	\savestack{\tmpbox}{\stretchto{%
			\scaleto{%
				\scalerel*[\widthof{\ensuremath{#1}}]{\kern-.6pt\bigwedge\kern-.6pt}%
				{\rule[-\textheight/2]{1ex}{\textheight}}%WIDTH-LIMITED BIG WEDGE
			}{\textheight}% 
		}{0.5ex}}%
	\stackon[1pt]{#1}{\tmpbox}%
}
\parskip 1ex
 
\begin{document}

\title{A Fourier analysis framework for approximate classical simulations of quantum circuits}

\author{Cristina C\^irstoiu}
\affiliation{Quantinuum, Terrington House, 13-15 Hills Road, Cambridge CB2 1NL, UK}

\begin{abstract}

	What makes a class of quantum circuits efficiently classically simulable on average? 
	I present a framework that applies harmonic analysis of groups to circuits with a structure encoded by group parameters. Expanding the circuits in a suitable truncated multi-path operator basis gives algorithms to evaluate the Fourier coefficients of output distributions or expectation values that are viewed as functions on the group. Under certain conditions, a truncated Fourier series can be efficiently estimated with guaranteed mean-square convergence. For classes of noisy circuits, it leads to algorithms for sampling and mean value estimation under error models with a spectral gap, where the complexity increases exponentially with the gap's inverse and polynomially with the circuit's size. This approach unifies and extends existing algorithms for noisy parametrised or random circuits using Pauli basis paths. For classes of noiseless circuits, mean values satisfying Lipschitz continuity can be on average approximated using efficient sparse Fourier decompositions.  I also discuss generalisations to homogeneous spaces,  qudit systems and a way to analyse random circuits via matrix coefficients of irreducible representations.
\end{abstract}
	\maketitle
\section{Introduction}
\iffalse
\cc{Papers to cite \cite{bravyi2021classical} -bravyi mean valuekoh \cite{koch2023speedy}
\cite{takahashi2020classically}  -- what is this
\cite{yung2017can} 
\cite{shao2023simulating}  ->like lowesa
\cite{rakovszky2022dissipation}
\cite{rall2019simulation}
\cite{nemkov2023fourier}}

 \cc{Add something on \cite{francca2021efficient} } 
\fi

Quantum algorithms with guaranteed speed-ups over classical computation often exhibit a delicate interplay between structure and randomness.  For highly structured problems such as factoring,  quantum computations give an exponential advantage.  Random circuit sampling \cite{boixo2018characterizing,hangleiter2023computational} has been used as a basis for experimental demonstration of quantum computations beyond classical methods \cite{morvan2023phase,decross2024computational}.%Sampling from random circuits has led to several experimental demonstrations of quantum advantage \cite{decross2024computational, morvan2023phase}. 
%Leveraging a problem's structure has le
%The emergence of the classical world from quantum physics can, in many situations, be attributed to the decoherence caused by interaction with the environment \cite{zurek2003decoherence}. While a full understanding of the macroscopic limits of quantum effects remains a challenging foundational question, quantum  technologies are built on actively preventing such classical behaviour from appearing.  For instance, fault tolerant quantum computations are engineered by protecting quantum information within regions of the Hilbert space not susceptible to certain decoherent mechanisms. %leads to fault-tolerant quantum computations. 
%Then, in an asymptotic sense, errors can be corrected to a level that permits arbitrarily  accurate quantum computations at the expense of increased overheads.  In a practical sense, a quantum device is a finite-sized physical system so the lowest effective logical errors attainable per operation will be bounded by that architecture. Quantum computers have reached sizes and levels of physical noise to allow first demonstrations of quantum error correction components implemented on hardware. But for the most part, current quantum computers perform noisy computations.

 % Based on these, universal 

%A first goal of this work is to show that classes of circuits can be analysed using Fourier analysis for general groups.  

A complexity separation between classical and quantum remains in the presence of noise for particular algorithms solving structured problems like Simon's and Bernstein-Vazirani \cite{chen2023complexity}, but in many cases noisy computations undermine asymptotic quantum speed-ups \cite{francca2021efficient, stilck2021limitations}. For example, the works of Aharanov et. al  \cite{aharonov2023polynomial} and Gao et. al \cite{gao2018efficient}  have led to a polynomial-time classical algorithm to approximate the task of sampling from  a random quantum circuit under a finite depolarising error. Noise effectively truncates the number of significant contributions to the outcome probabilities of noisy random circuits, thus enabling a classical simulation that is efficient in the number of qubits $n$. While the complexity is polynomial in $n$, its degree scales inversely with the error rate.  In \cite{fontana2023classical} we used similar ideas based on truncated evolution of observables to construct an efficient classical algorithm (LOWESA - low weight simulation algorithm) that approximates the expectation values of parametrised quantum circuits under incoherent Pauli noise. Different truncation schemes were subsequently used in \cite{shao2023simulating}. Non-unital noise leads to a similar behaviour where only the last log-depth number of layers in a random circuit contribute to noisy expectation values of Pauli operators \cite{mele2024noise}.

Qualitatively, given that emergence of classicality can be attributed to decoherence from interactions with an environment \cite{zurek2003decoherence} it is not surprising that ``de-quantising" quantum algorithms may be possible once enough errors occur. In an asymptotic sense, fault tolerance permits arbitrarily accurate quantum computations at the expense of  poly-logarithmic overheads. Quantum computers have reached sizes and levels of physical noise to allow first demonstrations of quantum error correction components implemented on hardware \cite{ryan2024high,wang2024fault, google2023suppressing,bluvstein2024logical}. From a practical sense, a quantum device is a finite-sized system, so a trade-off occurs between size of the logical computation and overheads in suppressing noise to a given level. 
%We seek to address the following question quantitatively, by explicit constru

\emph{When do quantum computations become easier to simulate classically, in the presence of a finite level of errors?} 

%On a qualitative level, given that emergence of classicality can be attributed to decoherence from interactions with an environment \cite{zurek2003decoherence} it is not surprising that ``de-quantising" quantum algorithms may be possible once enough errors occur.  This work addresses the question quantitatively, from a computational complexity perspective. 

We develop a framework that allows to separate randomness from the fixed structure in an ensemble of circuits by leveraging harmonic analysis of groups - a generalisation of Fourier analysis. In doing so, we can address the above question by constructing classical algorithms that approximate noisy quantum tasks such as mean value estimation and sampling for circuits subject to finite (fixed) error rate.  As particular examples, we also recover several existing results  \cite{aharonov2023polynomial, fontana2023classical,shao2023simulating} under a unified framework.

%In this work we formalise all these results constructing polynomial-time classical algorithms for noisy quantum tasks into a unifying general framework.  To that aim, we consider classes of unitaries, $\mathfrak{C}_n$, on $n$ qudits that decompose into a particular circuit structure with each unitary specified by a choice of independent elements of a given group.  Generally there are two types of quantum tasks that may be of interest (i) sampling from the output distribution of these unitaries and (ii) computing expectation values of observables. 
\iffalse
\begin{figure*}[t!]
	\begin{quantikz}  \lstick{$|0\>$}& \gate[4]{W_D} & \gate[4]{U(g_D)}   & \gate[4]{W_{D-1}} & \gate[4]{U(g_{D-1})}    & \qw & ...  \   \  \  &\gate[4]{W_1}&  \gate[4]{U(g_1)} &\qw  \\
		\lstick{$|0\>$}&\qw& \qw  & \qw & \qw    & \qw & ...  \   \  \  & \qw &  \qw &\qw  \\
		\lstick{$|0\>$}&\qw& \qw  & \qw & \qw    &\qw & ...  \   \  \ & \qw &  \qw & \qw   \\
		\lstick{$|0\>$}&\qw& \qw  & \qw & \qw    &\qw & ...  \   \  \  & \qw &  \qw  &\qw 
	\end{quantikz}
	\label{fig:stabilisermeasurement}
	\caption{Non-destructive measurement of the two generators $S_1$ and $S_2$ of the stabiliser group $\S$. Classical outcomes $x_1, x_2$ each take values in $\{0,1\}$}
\end{figure*}
\fi

We considers classes of unitaries, $\mathfrak{C}_n$, on $n$ qudit systems that decompose into a particular "circuit" structure where each unitary is either fixed or specified by a choice of independent elements of a group. The first part of the framework concerns ensembles of \emph{noiseless circuits} and we focus on two types of quantum tasks:  (i) sampling from the output distribution of these unitaries and (ii) computing expectation values of observables. 

Given the group parametrisation of the circuit ensembles, then mean values and probability distributions can be viewed as real-valued functions on the group. Therefore, they can be expressed as linear combinations of orthonormal basis functions with a natural choice being the matrix coefficients of irreducible representations.  In classical Fourier analysis this basis corresponds to trigonometric functions. For general groups the Fourier coefficient is a matrix of dimension equal to the irreducible representation it is associated with.  Generally, the resulting Fourier series is finite but can contain a number of terms that scale exponentially with the number of qudits, which is classically intractable. Therefore, we want to understand under what conditions can an appropriately truncated Fourier series (which we can efficiently evaluate classically) provide a good approximation of mean values or probability distributions for the circuit ensemble. %The approximation error is given by the mean-square error over the group parameters and is provided by a general form of Parseval's theorem.
The framework involves two ingredients. First, we need to determine a truncation set and provide a classical algorithm to evaluate the corresponding Fourier series. We show that for \emph{sparse} Fourier decompositions then the low-dimensional Fourier coefficients can be efficiently evaluated from the circuit's structure by decomposing it in terms of an appropriately truncated sum over paths with respect to a symmetry-adapted operator basis. We can employ tree structure to describe this decomposition - where the height of the trees directly correspond to the truncation parameter in the Fourier series.  Second, we need to quantify the approximation error. For this we will make use of a general form of Parseval's theorem, which will control an $L_2$ average approximation error over the group parameters. Figure 1 provides the high-level overview of the framework.

For general circuits, one shouldn't expect such sparse Fourier series to provide good approximations of mean values (or outcome distributions), however leveraging results from \cite{daher2019titchmarsh} we can guarantee that in certain cases the average mean square error converges with increasing the truncation parameter. This assumes that viewed as functions over the group parameter space the expectation values satisfy a regularity Lipschitz-type condition. It remains an interesting open question to classify what type of circuits can result in such conditions being met and explore the role of additional regularity constraints. 

The second part of the framework looks at conditions that permit de-quantisation of noisy computations (e.g computing mean values or output probabilities) with noise models that have a fixed spectral gap $\gamma$.  For simple depolarising models that have appeared in previous works, the spectral gap can be straightforwardly related to the error rate. The main idea is that particular noise models  induce an effective truncation where the most significant contributions come from a sparse Fourier series.  Under certain assumptions that we carefully highlight, we explicitly construct classical algorithms that have a polynomial runtime in the number of qubits and exponential with the inverse spectral gap of the error models. With probability at least $1-\delta$ over the  class of unitary circuits $\mathfrak{C}_n$,  we can compute  noisy expectation values  of an observable $O$ with an average approximation error $\epsilon$ using a classical algorithm with runtime $poly(n)\left(\frac{\O(1)||O||_2}{\epsilon \sqrt{\delta}}\right)^{\O(1/\gamma)}$. For the sampling task, the runtime for obtaining a sample from a probability distribution that is at most $\epsilon$-away in total variation distance from the noisy probability distribution is given by  $poly(n)\left(\frac{\O(1)}{\epsilon \sqrt{\delta}}\right)^{\O(1/\gamma)}. $  This complexity is obtained under the assumption that the residual error distribution in the approximation anti-concentrates. Intuitively, this requires that the truncated classical algorithm is a good approximation for the peaks in the noiseless probability distribution.   The result also holds when the noise-less probability distribution anti-concentrates, but this is a stronger requirement that is not fully necessary.

The harmonic analysis perspective naturally separates randomness of the ensembles (captured by the basis of functions which only depend on the group) from the fixed circuit structure (captured by the Fourier coefficients).  For example, we can view random circuit sampling broken down into functions dealing with randomness of the gates (given by matrix coefficients of irreducible representations of $SU(4)$ for two qubit gate unitaries) and the fixed structure given by the circuit geometry. We describe certain constraints on the circuit architecture may lead to improved classical simulations of noisy random circuits - a point also observed in \cite{decross2024computational}. While the strong-noise regime can be classically spoofable (and provably so in the asymptotic regimes), it remains an open question to explore the extent to which the algorithmic constructions presented here can be used as heuristics for finite-size systems with reasonable computational resources. 

We emphasise that these results are asymptotic by nature, and do not necessarily guarantee that the classical algorithms will produce a good approximation of the noisy quantum task outcome in the \emph{finite-size regime}. However, for fixed number of qubits the methods can be employed as \emph{heuristic classical simulations}.  For example, \cite{rudolph2023classical} and \cite{beguvsic2024fast} use multi-path Pauli basis decompositions with different types of truncations to classically estimate mean values of observables under a quantum quench dynamics of an Ising model.  ZX-calculus \cite{coecke2011interacting} and parametric graphical re-write rules have been used in \cite{sutcliffe2024fast} to simplify and reduce computational resources in these multi-path methods and in \cite{koch2023speedy} to perform faster contractions. 

Another important aspect is the fact that obtaining noise-induced de-quantisation in this framework relies on approximation errors that are $\emph{averaged over ensembles of circuits}$.  This is rather unsatisfactory, but also common aspect in prior literature using classical simulations based on Pauli-path decompositions.  When the target computation is an expectation value for a $\emph{fixed}$ circuit it can be misleading to infer it is ``classically simulable'' from the construction of a polynomial-time algorithm that achieves an $\epsilon$ average error over a class of circuits even when including the fixed one. This is less of an issue when discussing \emph{sampling from a class of circuits}, for which average measures might be suitable when using the same distributions. Recent work \cite{schuster2024polynomial} provide polynomial-time algorithms for noisy circuits using Pauli-path decompositions and worst-case approximation error, but rely on other assumptions that inputs form a mixed ensemble.  It remains an interesting question if such an analysis can be incorporated into this framework to guarantee pointwise convergence of the truncated Fourier series rather than the $L_2$ convergence provided by Parseval's theorem.

\section{Overview of classical simulation algorithms and path decompositions}

What do we mean when we say a \emph{quantum computation can be efficiently simulated by a classical algorithm} ? To define what a simulation of a quantum computation is one must first establish (i) what type of quantum computational task is the output of the classical algorithm approximating and (ii) what type of precision is allowed in the approximation of that task.  Typically, efficiency is claimed when the time and space resources required for the classical algorithm grow at most poly-logarithmically with the dimension of the space in which the quantum computation lives. Such scaling statements assume an asymptotic behaviour and that requires a description of the quantum computation for each system size.  

Several classes of circuits such as Clifford (or stabilizer) \cite{aaronson2004improved, gottesman1998heisenberg}, matchgates (or free-fermionic)  \cite{terhal2002classical, jozsa2008matchgates} are well known to be efficient to classically simulate in polynomial time. 

Given a quantum system of $n$ qudits described by the Hilbert space $\h_n \approx (\mathbb{C}^d)^{\otimes n} $ of $n$ we consider families $\mathfrak{C}_n$ of unitaries (polynomial-sized quantum circuits) $C\in \mathfrak{C}_n$ acting on the $n$-qudit computational input state $|{\bf{0} }\> = |0\>^{\otimes n}$. Generally, quantum applications will involve one of the following quantum computational tasks.

\emph{Born probability estimation task}  Given a circuit $C\in \mathfrak{C}_n$ compute any marginal outcome probability distribution of the quantum state $|\psi\> =C|{\bf{0}}\>$ over subsets of qudits 
\begin{align}
p(x_1,..., x_k) = \Tr( |x_1...x_k\>\<x_1... x_k| \otimes \mathbb{I} \  |\psi\>\<\psi|),
\end{align}
where $|x_1,... x_k\>$ is a computational basis state with each $x_i \in \mathbb{Z}_d$ and $\mathbb{I}$ acts as identity on the remaining qudits.

\emph{Sampling task} Produce a sample ${\bf{x}} = (x_1,..., x_n) \in \mathbb{Z}_d^{\times n}$ drawn according to the output probability distribution $p(\bf{x}) =|\<\bf{x}|\psi\>|^2$, of the state $\psi$  over measurement outcomes in the computational basis.

\emph{Mean value estimation task} Given an observable $O$, compute its expectation value 	$\<O\> = \Tr(O |\psi\>\<\psi|)$ with respect to the target state $|\psi\>$.

Classical algorithms that output Born probability estimates within a target accuracy  $\epsilon$  and runtime $poly(1/\epsilon, n)$ are said to perform an efficient \emph{strong} simulation.  By invoking a sampling to compute reduction \cite{bremner2017achieving}, strong simulation can be used to perform an efficient \emph{weak simulation} that produces samples drawn from a probability distribution $\epsilon$-close to the target distribution of $\psi$ over measurement outcomes.  However, a classical sampling algorithm can also be constructed \cite{bravyi2022simulate} without necessarily requiring estimates for all the marginal probability distributions.

Consider an $n$ qudit system and a basis $\B = \{T_i\}_{i=1}^{d^{2n}}$ for operators on $\h_n\approx (\mathbb{C}^d)^{\otimes n}$ that is orthonormal with respect to the Hilbert-Schmidt inner product $\<T, T'\> = \Tr(T\hc T')$.  
Given a state $|\psi\>$ we can express its density matrix as $ |\psi\>\<\psi| = \sum_i \psi_{i_0} T_{i_0}$ for some coefficients $\psi_{i_0} = Tr(T_{i_0}\hc |\psi\>\<\psi|)$, with analogous expansions for any operator. Similarly we can express the action of a unitary $V$ with respect to this operator basis as 
\begin{equation}
	VT_i V\hc = \sum_j   v_{ij} T_j
\end{equation}
where $v_{ij} = Tr (T_j\hc  VT_i V\hc)$ are complex coefficients. More generally, we have that the Liouville matrix representation $\mathbf{E}$ of a quantum channel $\E$ has matrix entries $e_{ij} = Tr(T_j\hc \mathcal{E}(T_i) )$ and the action of the channel on each basis element expands into $\E(T_i) = \sum_j e_{ij} T_j$. We will also use a vectorised notation $|A \>\> \in \h \otimes \h^{*}$ to represent a vector with entries given by $\Tr(T_i\hc A)$. This satisfies $U\otimes U^{*}  |A\>\> = |UA U\hc\>\>$, which implies that $U\otimes U^{*}$ corresponds to the Liouville matrix representation of the unitary channel $U$. Furthermore $\{|T_i\>\>\}$ will similarly form an orthonormal basis for $\h\otimes \h^{*}$ with inner product $\<\<A|B\>\> = \Tr (A\hc B)$.

Suppose an observable $O$ undergoes a unitary evolution by $V$, so the evolved operator is given by $V\hc O V$ in the Heisenberg picture. The expectation value of the observable with respect to the state $\psi$ can be written as $Tr(O V\psi V\hc) = \<\<O|V\otimes V^{*} |\psi\>\>$.  In terms of the basis expansion we can write  $\<\<O| = \sum_{i_0} O_{i_0} \<\<T_{i_0}|$, where $O_{i_0} = Tr( T_{i_0} O)$.  Then
\begin{align}
	V\hc OV = \sum_{i_0, i_1} O_{i_0}  v_{i_1 i_0}T_{i_1}\hc.
\end{align}

More generally if $V = V_1... V_D$ is a sequence of $D$ unitary operations  then one can expand the evolved operator sequentially to get
\begin{align}
	V \hc O V   = \sum_{i_0, i_1,..., i_D}  O_{i_0}   v^{(1)}_{i_1 i_0} ... v^{(D)}_{i_{D} i_{D-1}}  T_{i_D}\hc,
	\label{eqn:pathdecomp}
\end{align}
where $v_{i_{k} i_{k-1}}^{(k)} :=  Tr(T_{i_{k-1}}\hc V_k T_{i_{k}} V\hc)$ are the matrix coefficients for each unitary $V_i$ in the basis $\B$.   Each term corresponds to a particular path of basis operators $T_{i_0} \rightarrow  T_{i_1} \rightarrow .... \rightarrow T_{i_D}$ that we denote in short by ${\mathbf{i} }
:=  i_0 \rightarrow i_1 \rightarrow ... \rightarrow i_D$.  The number of all such possible paths is given by $d^{2nD}$, and computing all terms in the decomposition is intractable for generic states $\psi$ and unitaries $V$.  However, there are specific classes of states/unitaries for which there is a choice of basis $\{T_i\}$  so that the number of terms only scales polynomially with $n$.  
Examples of such efficient sub-theories include Clifford-based decomposition with respect to Pauli operator basis \cite{gottesman1998heisenberg}, matchgates with respect to spinors/Majorana basis \cite{jozsa2008matchgates}. Furthermore, circuits with constrained number of universality enabling resource gates (e.g T-gates for Clifford or SWAP gates for matchgates) can have a controlled number of multi-paths that can be simulated classically \cite{mocherla2023extending, bravyi2016improved, ermakov2024unified}.

\begin{figure*}[t!]
	\includegraphics[width=\textwidth]{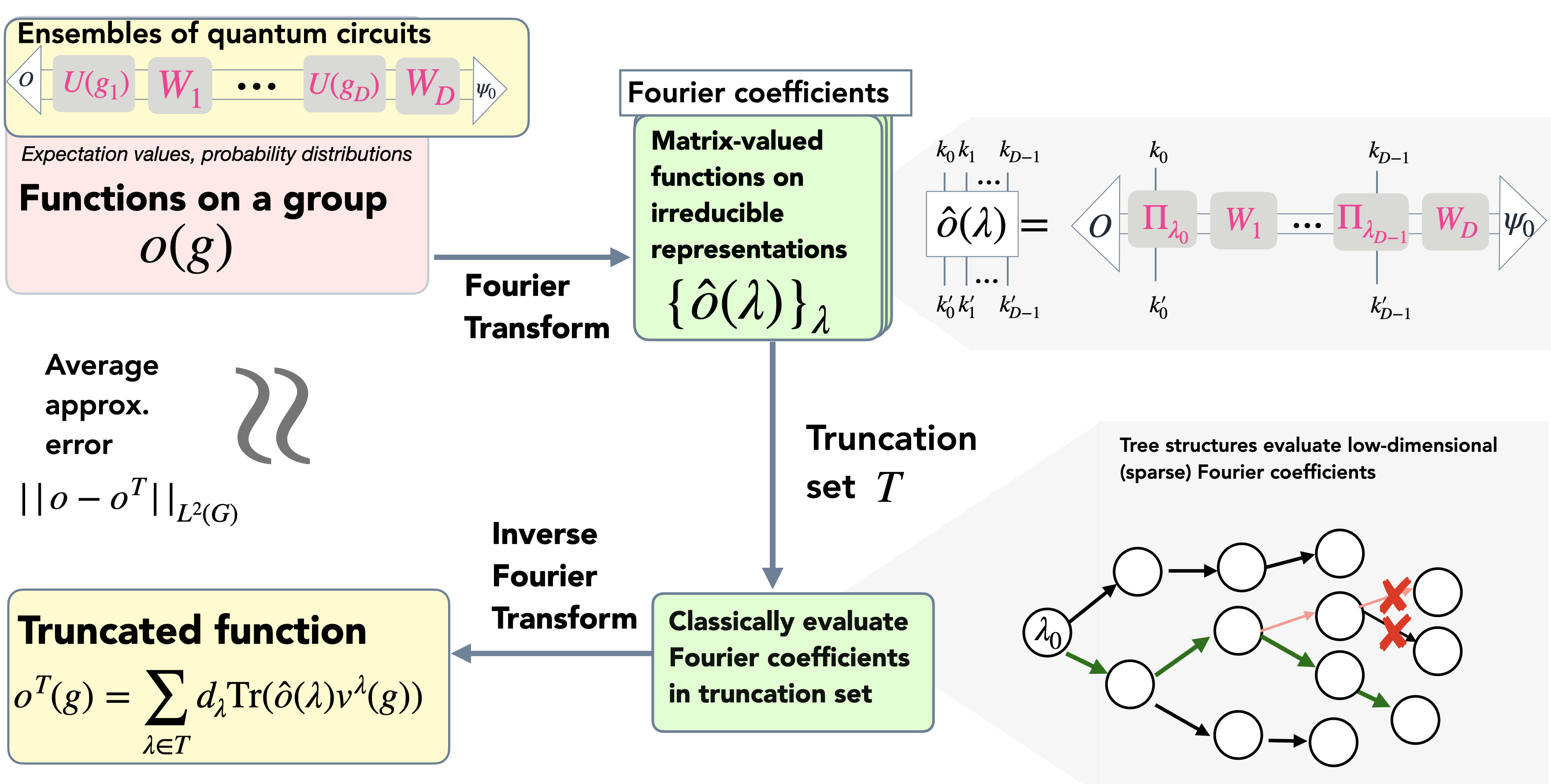}
	\label{fig:overview}
	\caption{Overview of the framework for expectation values of observable $O$. Ensembles of circuits produced by independent group elements $g=(g_1,..,g_D)$ with otherwise fixed structure results into expectation values viewed as real-valued functions  $o(g)$ on the product group $G^{\times D}$.  Applying a Fourier transform gives the Fourier coefficients which are matrix-valued functions indexed by the irreducible representations. An operator-truncated simulation will select a subset of the Fourier coefficients such that all can be efficiently evaluated. Applying the inverse Fourier transform to  the Fourier coefficients $\hat{o}(\lambda)$ in the truncation with the coefficients of matrix irreducible representations $v^{\lambda}_{kk'}(g)$ constructs a truncated function $o^T(g)$.  The $L_2$ average approximation error between $o^T$ and $o$ is given by Parseval's theorem in terms of the Hilbert-Schmidt norms of the Fourier coefficients that were not included in the truncation set. A truncation set that contains the Fourier coefficients given by the low-dimensional matrices can be evaluated by a truncated path decomposition in a symmetry-adapted basis. In a noiseless setting, smoothness will ensure the truncated function converges to $o(g)$.  In a noisy setting, we get an exponential convergence with the truncation parameter $L$ and spectral gap $\gamma$ of the noise model. An analogous set-up occurs for computing probability distributions and marginals, with the main difference being the use of bi-directional trees resulting from contractions within circuits of a particular structure.} 
\end{figure*}

\section{Main Framework and Results }
\subsection{Circuit structures  - setup}
\label{sec:defs}
We will consider classes of unitaries on $n$ qudits $\h_n$ with a particular  structure parametrised by a compact or finite group $G$, and more generally homogeneous spaces. For independent group elements $g_1,..., g_D \in G$ and fixed unitaries  $W_i$ on $\h_n$ we construct families of circuits of the form
\begin{align}
	C(\g) = U_1(g_1) W_1 U_2(g_2) W_2 \, ...  \, U_D(g_D) W_D
\end{align}
where $\g: = (g_1,..., g_D)$ is an element of a direct group product $\mathbb{G} = G^{\times D}$ and where  the unitaries $U(g)$ are parametrised by group elements $g\in G$ in such a way that they form a representation of $G$.  We denote the family of all unitaries with such a fixed structure by $\mathfrak{C}_n = \{C(\g): \g\in \mathbb{G}\}$.

\emph{Example 1: Clifford + $Z(\theta)$ }  Consider a circuit family consisting of Clifford unitaries $W_i$ and single qubit Z-rotations. 
The group of interest is $U(1)$ in this case and  $Z(\theta) = e^{iZ\theta}$ forms a representation of $U(1)$ on single qubits. With respect to the $n$-qubit Pauli basis consisting of tensor products of single qubit Pauli operators $\{I,X,Y,Z\}$ each basis operator transforms under the adjoint action of $Z(\theta)$ into a linear combination of at most two $n$-qubit Pauli operators. This follows from the fact that $Z(\theta) X Z^{\dagger}(\theta) = \cos{2\theta} X+\sin{2\theta} Y$ and $Z(\theta) Y Z^{\dagger}(\theta) = \cos{2\theta} \, Y - \sin{2\theta} \, X$.  This splits the single-qubit operator space into three components  $span\{I\}, span\{Z\}, span \{X, Y\}$ - which can be seen to be irreducible over reals (i.e not possible to split further into smaller subspaces with a linear span over the reals). 

\emph{Example 2: Random circuits} For random circuits consisting of uniform Haar-random two-qubit gates the underlying group is $G = U(4)$ and the circuit has a layered structure given by the defining representation e.g $U(g) = g$ for $g\in U(4)$.  Furthermore, $U(g) P U^{\dagger}(g)$ will involve a linear combination of at most $15$ different Pauli operators, and will fix the local two-qubit identity $I\otimes I$.
The fixed $W$'s can be thought of as either identity or a fixed permutation of the qubits. Permutation of the qubits will also corresponds to a permutation of the Pauli basis. For instance, we can choose $U_i = U$ to act on the first two qubits and any $W$ will involve a transposition of at most two elements.  Alternatively, we can pick a layered description containing tensor products of two-qubit $U(g)$ and $W$ an $n$-qubit permutation.

These examples illustrate that there is flexibility in how we realise a circuit description to fit the above structure, and that it is a  general enough framework to incorporate universal computations. This flexibility does not restrict the generality of the results, but additional care must be taken to be consistent in how we describe circuit depth or the effect of noise. For example, we may consider $W$'s to include a fixed permutation depending on which qubits the parametrised unitaries act on. In this case, unless restricted by the device's architecture, the permutations are really a mathematical convenience rather than a restriction on the circuit, so they would not be considered noisy or add to the depth of the circuit. In general, unless the scaling is explicitly stated, we will consider $D$ to be polynomial in the number of qudits, and the unitaries $U_i$ and $W_i$ may act non-trivially on the entire register. While our results will assume the group parameters in different layers $i$ to be independent, within a layer, synthesising each unitary $U_i$ may involve multiple gates that will carry a dependency on $g_i$ .

\subsection{Path decomposition in a symmetry-adapted basis}
\label{sec:multipath}

Given that $\mathfrak{C}_n$ has a well-defined structure constructed from a unitary representation $U$ of $G$, then it is natural to consider an
 \emph{Irreducible Tensor Operator (ITO) basis}, that will block-diagonalise the Liouville matrix $U(g)\otimes U^{*}(g)$ for the unitary channel of $U(g)$ %(\cdot) U\hc(g)$, 
into a block-diagonal structure independent of $g\in G$. This relies on a general result that every finite-dimensional unitary representation of a group $G$ decomposes into a direct sum of irreducible representations (for short, irreps).  A representation of the group $G$ given by a linear map $V(g):\mathcal{K} \longrightarrow \mathcal{K}$ is irreducible if $\forall |\psi\>$ and $|\phi\>$ in $\K$ with $\<\phi|\psi\>=0$  there is an element $g_0\in G$ such that  $\<\phi| V(g_0) |\psi\> \neq 0$, or equivalently if there's no strict non-trivial subspace of $\mathcal{K}$ that remains invariant under the action of $V$ for all group elements. For compact groups, one can construct all inequivalent irreps and each will be associated with a unique label $\lambda$. The irreducible matrix coefficients $v^{\lambda}_{k',k} (g) = \<k| V^{\lambda}(g) |k'\>$ are the entries of $V^{\lambda}$ with respect to a fixed orthonormal basis for the carrier space $\mathcal{K}$.

We define the ITOs as sets of operators $T^{\lambda, \alpha}_k\in \B(\h)$ that are labelled by an irreducible representation $\lambda$ with a vector component $k\in\{1,..,d_\lambda\}$ where $d_\lambda$ is the dimension of the $\lambda$-irrep. and transform according to
\begin{equation}
	U(g) T^{\lambda,\alpha}_k U^{\dagger}(g) = \sum_{k'} v^{\lambda}_{k'k}(g) T^{\lambda,\alpha}_{k'}.
\end{equation}
%In the above $v^{\lambda}_{k'k}$ are matrix entries of the irreducible representation $\lambda$. 

The $\lambda$ labels the inequivalent irreps in the decomposition of the representation $U\otimes U^{*}$. As each irrep may occur multiple times, we use $\alpha$ to label different multiplicities, but emphasise that the matrix coefficients are not dependent on the  multiplicity label. As we range over all such $\lambda$-irreps and multiplicities we can construct for $\B(\h_n)$ a complete operator basis $\{T^{\lambda,\alpha}_k\}_{\lambda,k, \alpha}$ that is orthonormal under the Hilbert-Schmidt inner product,  $\Tr[(T^{\lambda,\alpha}_{k})^{\dagger} \, T^{\mu,\beta}_{k'}]= \delta_{\mu,\lambda} \delta_{k,k'} \delta_{\alpha,\beta}$.  Recall that in terms of the vectorised notation we have $|T^{\lambda,\alpha}_k\>\> \in \h_n \otimes \h_n^{*}$ and these form a complete orthonormal basis for this space. In particular we have a resolution of identity as $\mathbb{I} = \sum_{\lambda, \alpha, k}  |T^{\lambda,\alpha}_k\>\> \<\< T^{\lambda,\alpha}_k|$.

Note that we will always have at least a trivial irrep in the decomposition of $U\otimes U^{*}$ into irreducible representations and we can always choose the identity $I$ as an element of the ITO basis, however there may generally be additional basis operators also transforming trivially if we have multiplicities. 
Of particular interest for us will be the subspace that is spanned by the $1$-dimensional trivial irreps given by the projector
\begin{align}
	\Pi_0 = \sum_{ \alpha} |T^{\lambda=0,\alpha}_{0} \>\> \<\<T^{\lambda=0,\alpha}_{0} |,
\end{align}
%\todo{Add dependency on the character here? - this becomes a bit dependent on g} 
where the $\lambda=0$ corresponds to the trivial irrep that acts invariantly as 
$U(g) \otimes U^{*}(g) |T^{\lambda=0,\alpha}_0\>\> =  |T^{\lambda=0,\alpha}_0\>\>$ such that $|v^{\lambda=0}_{00}(g)|=1$. More generally 1-dimensional  irreps will have a proportionality coefficient given by the character $\chi_{\lambda}$ of the irrep $\lambda$ so that  $v^{\lambda}_{00}(g) = \chi_{\lambda}(g)$. The character represents the trace of a representation $\chi_V(g) = \Tr(V(g)) $ and is a basis-invariant property. In particular, for the trivial representation  $\lambda=0$ then $\chi_0(\g) = 1 \  \ \forall g$.  In our case, the identity operator $|\mathbb{I}\>\>$ is one particular example that always transforms trivially under $U\otimes U^{*}$

In the definition of the circuits in $\mathfrak{C}_n$ there may be distinct representations $U_i$, and where there is no mutually block-diagonalising operator basis then we will use different ITO basis at different ``time-slices" $i$. This requires an additional label, and that the transfer matrices for each $W_i$ involving  different the basis
\begin{align}
 \Tr((\tilde{T}_{k_i}^{\lambda_i, \alpha_i} )\hc W_i  T_{k_{i+1}}^{\lambda_{i+1}, \alpha_{i+1}}  W_i\hc)   =: w^{(i)} _{(\lambda_{j-1},k_{i-1}'), (\lambda_i, k_i)} 
\end{align}
can be computed. To simplify the notation, we will incorporate the different position-dependent basis label and the multiplicity label into the irrep label e.g $\lambda_i = (\lambda_i, \alpha_i)$.  We note that the transfer matrices will generally depend on the multiplicity label. We will make the multiplicities explicit only where it needs to be.

Because of completeness and orthonormality of the operator basis, then any observable $O\in \B(\h_n)$ decomposes uniquely into $O = \sum_{\lambda_0, \alpha_0 k_0} O_{\lambda_0,\alpha_0,k_0} (T^{\lambda_0,\alpha_0}_{k_0})\hc$ for a number of $r(O)$ non-zero complex coefficients given by $O_{\lambda_0,\alpha_0,k_0} =  \Tr(O T^{\lambda_0,\alpha_0}_{k_0} ) \neq 0 $.
We refer to $r(O)$ as the \emph{rank of operator $O$} with respect to the ITO basis. 

In the Heisenberg picture, the evolved operator will be given by
\begin{equation}
	C\hc(\g) O C(\g) = \sum_{ (\boldsymbol{\lambda,k})}  O_{\lambda_0,k_0} v^{\boldsymbol{\lambda}}_{\bf{k'},\bf{k}} (\g^{-1})   w^{\boldsymbol{\lambda}, \lambda_D}_{{\bf{k'},\bf{k}}, k_D}  T^{\lambda_D}_{k_D},
\end{equation}
where each term corresponds to the path of basis operators $T^{\lambda_0,\alpha_0}_{k_0} \rightarrow T^{\lambda_0,\alpha_0}_{k_0'} \rightarrow T^{\lambda_1,\alpha_1}_{k_1} ... \rightarrow T^{\lambda_{D-1}, \alpha_{D-1}}_{k_{D-1}'} \rightarrow T^{\lambda_D, \alpha_D}_{k_D} =: (\boldsymbol{\lambda}, \boldsymbol{k}, \boldsymbol{k}')$.  We also used a shorthand notation for the products of irreducible matrix coefficients given by $v^{\boldsymbol{\lambda}}_{\boldsymbol{k'}, \boldsymbol{k}} (\g^{-1}) :=  v^{\lambda_0}_{k_0',k_0}(g_1^{-1}) v^{\lambda_1}_{k_1',k_1}(g_2^{-1}) ... v^{\lambda_{D-1}} _{k_{D-1}', k_{D-1}}(g_D^{-1}) $  and where $w^{\boldsymbol{\lambda}, \lambda_D}_{\boldsymbol{k'},\boldsymbol{k}, k_D} :=  w^{(1)} _{(\lambda_0,k_0'), (\lambda_1, k_1)} ... w^{(D)} _{(\lambda_{D-1},k_{D-1}'), (\lambda_D, k_D)} $  correspond to products of coefficients for transfer matrices involving only the fixed unitaries $W$. %only to the fixed unitaries $W_1,...  , W_D$ via products of $w^{(j)} _{(\lambda_{j-1},k_{j-1}'), (\lambda_j, k_j)} = \tr ( (T^{\lambda_j}_{k_j})\hc W_j\hc T^{\lambda_{j-1} }_{k_{j-1}'} W_j) $. 

 We can equivalently see this decomposition in the Liouville matrix representation, where it arises by introducing a resolution of identity  $I = \sum_{{\lambda,\alpha, k}} |T^{\lambda,\alpha}_{k} \> \> \< \< T^{\lambda,\alpha}_k|$ after every unitary channel to get
\begin{align}
	C(\g) \otimes C^{*}(\g) =   \sum_{ (\boldsymbol{\lambda,k,k'})}  v^{\boldsymbol{\lambda}}_{\bf{k'},\bf{k}} (\g^{-1})   w^{\boldsymbol{\lambda}, \lambda_D}_{{\bf{k'},\bf{k}}, k_D}  |T^{\lambda_0}_{k_0}\>\>\<T^{\lambda_D}_{k_D}|.
	\label{eq:pathdecomposition}
\end{align}

		Details of these decompositions are expanded on in Appendix ~\ref{app:pathdecomp}. In particular, for an input state $|\bf{0}\>$ and its vectorisation $|\bf{0}\>\>$ the expectation value of $O$ is 
		\begin{equation}
			\<\<O|C(\g)\otimes C^{*}(\g) |{\bf{0}}\>\> = \<{\bf{0}}|C\hc(\g) O C(\g) |{\bf{0}}\> .
			\label{eq:expectation}
		\end{equation}

\subsection{ Harmonic analysis of quantum circuits}

We describe how the structured unitaries in the class $\mathfrak{C}_n$ can be analysed using general harmonic analysis on compact or finite groups (or homogeneous spaces) and show this naturally relates to the symmetry-adapted path decomposition. For a full review of the topic we refer to \cite{goodman2009symmetry} and some preliminaries in Appendix \ref{app:ITO}.

Given a complex-valued function on a compact (or finite) group $\mathbb{G}$ as $f: \mathbb{G} \rightarrow \mathbb{C}$,  its Fourier transform will map irreducible representations to complex matrices 
\begin{equation}
	\hat{f}(\boldsymbol{\lambda}) = \int  f(\g) \boldsymbol{\lambda}^{*}(\g)  d\, \g,
\end{equation}
where the integration is with respect to the Haar measure on $\mathbb{G}$, $\boldsymbol{\lambda}:\mathbb{G}\rightarrow GL(d_{\boldsymbol{\lambda}})$ denotes an irreducible (unitary) representation of $\mathbb{G}$ with dimension $d_{\boldsymbol{\lambda}}$ and where $\boldsymbol{\lambda}^{*}$ is its conjugate matrix. In the finite group case the integration is replaced by an average over the entire group (i.e $\frac{1}{|G|} \sum_{\g\in G}$).  Notably, Fourier coefficients $\hat{f}(\boldsymbol{\lambda})$ carry a basis dependency that arises from the choice of basis in the carrier space of the irrep.

For circuits $\mathfrak{C}_n$ the group is given by $\mathbb{G} = G^{\times D}$,  a direct product of groups and therefore all of its (complex) irreps arise as tensor products of the irreps for the underlying group $G$.  In other words $\boldsymbol{\lambda} = \lambda_0 \otimes... \otimes \lambda_{D-1}$ for $\lambda_i$ irreps of $G$. 
Given an observable $O$, its expectation value relative to the state $C(\g) | {\bf{0}}\>$ can be viewed as a function on the  group  $o : \mathbb{G} \longrightarrow \mathbb{C}$ with $o(\g^{-1}) := \<{\bf{0}} |  C\hc(\g) O C(\g) |{\bf{0}} \>$. The Fourier inversion formula ensures that we have the following Fourier series decomposition  of $o$
\begin{equation}
	o(\g) =  \sum_{\boldsymbol{\lambda}} d_{\boldsymbol{\lambda}} \Tr(\hat{o}(\boldsymbol{\lambda})  \, \boldsymbol{\lambda}^{*}(\g^{-1})),
\end{equation} 
where the summation is over all inequivalent irreps of $\mathbb{G}$.

A general form of Plancharel's identity also holds for the inner product  $\<o, \tilde{o}\>_{L^2(\mathbb{G})} := \int o(\g) \tilde{o}^{*}(\g) d\, \g$ on $L^2(\mathbb{G})$.  This implies a generalised Parseval's identity
\begin{equation}
	||o||_{L^2(\mathbb{G})}^2  = \sum_{\boldsymbol{\lambda}} d_{\boldsymbol{\lambda}}  || \hat{o}({\boldsymbol{\lambda}}) ||_{HS}^2,
\end{equation}
where the Hilbert-Schmidt norm is $||\hat{o}(\boldsymbol{\lambda}) ||_{HS}^2 = \tr(\hat{o}\hc(\boldsymbol{\lambda}) \hat{o}(\boldsymbol{\lambda}))$.

To connect with the path-decomposition we observe that if we view $\hat{o}(\boldsymbol{\lambda})$ as a matrix in $GL(d_{\boldsymbol{\lambda}})$ with entries given by $\hat{o} (\boldsymbol{\lambda}) _{\bf{k',k}}$ then 
\begin{align}
o(\g) = \sum_{\boldsymbol{\lambda}} d_{\boldsymbol{\lambda}}  \sum_{{\bf{k}', \bf{k}}}[\hat{o}(\boldsymbol{\lambda})] _{\bf{k',k}}  \left(v^{\boldsymbol{\lambda}}_{{\bf{k,k'}}}({\bf{g}}^{-1})\right)^{*}.
\end{align}
 This assumes that the Fourier transform is taken with respect to the same basis for the carrier space that results in the $v^{\lambda}_{kk'}$ matrix coefficients. If we substitute Eqn.~\ref{eq:expectation} and Eqn.~\ref{eq:pathdecomposition} into the Fourier transform of $\hat{o}$ we obtain that the matrix of Fourier coefficients at the irrrep $\boldsymbol{\lambda}$ is given in terms of
\begin{equation}
	[\hat{o}(\boldsymbol{\lambda})]_{\bf{k',k}} = \frac{1}{d_{\boldsymbol{\lambda}}}  O_{\lambda_0,k_0}  \sum_{\lambda_D, k_D}   w^{\boldsymbol{\lambda}, \lambda_D}_{{\bf{k'},\bf{k},} k_D}  \<{\bf{0}}|T^{\lambda_D}_{k_D}|{\bf{0}}\>.
\end{equation}
Note that this follows from Schur orthogonality of irreducible matrix coefficients, which is also captured in a sense by Parseval's theorem and the fact that the irreducible matrix coefficients form an orthogonal basis for complex-valued functions on a group.

Similarly, one can define Fourier transforms for matrix-valued functions component-wise for each matrix entry.  This allows to consider operators of the form $C(\g)\otimes C^{*}(\g)$ that represent the unitary in the Liouville matrix representation. In that way, one can obtain similar connection with the path decompositions without including the path contraction with the target observable or the input state (further details and examples in Appendix~\ref{app:FTgroup} ). This allows to analyse the entire family of unitaries $\mathfrak{C}_n$ in terms of the Fourier transform 
\begin{equation}
	\widehat{C\otimes C^{*}} (\boldsymbol{\lambda}) = \int C(\g) \otimes C^{*}(\g) \otimes \boldsymbol{\lambda}^{*}(\g)  d\,\g
\end{equation}

which is an operator on  $\widehat{C\otimes C^{*}} (\boldsymbol{\lambda}) \in  \B(\h_n\otimes \h_n^{*}) \otimes GL(d_{\boldsymbol{\lambda}})$. This definition and more broadly the harmonic analysis of quantum circuits $C(\g)$ can be applied to any circuit structure with a group dependency, including the situation where unitaries $U_i$ at different positions are parametrised by the same group parameters. To evaluate the Fourier coefficients in such a case where the group parameters in different circuit layers are not independent requires to apply higher-moment integration tools such as Wiengarten calculus \cite{collins2022weingarten}.

However, for  $\mathfrak{C}_n$ as the group elements are independently chosen for the different unitaries $U_i$ then performing this integration in the Fourier transform is straightforward and can be related to the symmetry-adapted path decomposition, provided we are consistent in the choice of basis for carrier space of irreps that determine the irreducible matrix coefficients.  Specifically,  we can independently perform the integration in-situ where 
\begin{equation}
	\Pi_{\lambda_i} = \int U_{i+1}(g_{i+1}) \otimes U^{*}_{i+1}(g_{i+1}) \otimes \lambda_i^{*}(g_{i+1}) d\, g_{i+1}
\end{equation}
denotes a projector onto the $\lambda_i$ isotypical component of $U_{i+1}\otimes U^{*}_{i+1}$. Notably, if the $\lambda_i$ irrep is not found in the decomposition of $U_{i+1}\otimes U_{i+1}^{*}$ then the above quantity is zero.  In terms of the symmetry-adapted basis we have that
\begin{align}
	\Pi_{\lambda_i} = \sum_{\alpha_i, k,k'} |T^{\lambda_i,\alpha_i}_{k}\>\>\<\<  T^{\lambda_i,\alpha_i}_{k'}| \otimes |k\>\<k'|,
\end{align}
where we recall that the auxiliary system represents $GL(d_{\lambda_i})$ and  $|k\>$ labels the basis elements of the carrier irrep. 
The Fourier transform can then be viewed compactly as 
\begin{align}
	\widehat{C\otimes C^{*}} (\boldsymbol{\lambda}) = \Pi_{\lambda_0} W_1\otimes W_1^{*} ...  \Pi_{\lambda_{D-1}} W_D\otimes W_D^{*}.
\end{align}

Then we get that the Fourier coefficients, by which we mean the components of the Fourier transform at a specific irrep $\boldsymbol{\lambda}$ of $\mathbb{G}$ are
\begin{align}
	[\widehat{C\otimes C^{*}} (\boldsymbol{\lambda}) ]_{{\bf{k',k}}} = \frac{1}{d_{\boldsymbol{\lambda}}} \sum_{\lambda_D,k_D}  w^{\boldsymbol{\lambda}, \lambda_D}_{\bf{k'},\bf{k}, k_D}  |T^{\lambda_0}_{k_0}\>\> \<\< T^{\lambda_D}_{k_D}| .
\end{align}
Contracting the Fourier coefficients against an input state $|\boldsymbol{0}\>\>$ and output observable $\<\<O|$ we obtain precisely $[\hat{o}(\boldsymbol{\lambda})]_{\bf{k',k}}$ as before.
Finally, from the Fourier inversion theorem we have the Fourier series decomposition
\begin{align}
	C(\g)\otimes C^{*}(\g) = \sum_{\boldsymbol{\lambda}} d_{\boldsymbol{\lambda}} \tr_{2}[\widehat{C\otimes C^{*}} (\boldsymbol{\lambda})  \mathbb{I} \otimes \boldsymbol{\lambda}^{*} (\g^{-1})],
\end{align}
where the trace is taken over the second system representing the auxiliary carrier space of the representation $\boldsymbol{\lambda}$ whose basis elements are $|{\bf{k}}\> = |k_{0},..,k_{{D-1}}\>$. 

This connection that path decomposition can be reformulated as a Fourier decomposition is fairly natural considering that Peter-Weyl theorem ensures that the set all matrix coefficients for all irreducible representations form an orthonormal basis of the space $L^2(G)$ of complex-valued functions on the group $G$. 

\subsection{Operator truncated classical simulations} 

 An operator truncated classical simulation will involve the evaluation of only a restricted set of Fourier coefficients using the connection with the multi-path basis decomposition. The harmonic analysis of circuits we discussed in the previous section gives us a principled way to construct such a truncated set $\T$.   We will discuss two computational tasks for a given class of circuits $\mathfrak{C}_n$. 
 
 \emph{Mean value simulation} For a given observable $O$ and a truncation set $\T$ consisting of a restricted set of irreps $\boldsymbol{\lambda} = (\lambda_0,...\lambda_{D-1})$ in the Fourier decomposition of $o$, then applying the inverse Fourier transform will produce  a truncated series
 \begin{align}
 	o^{\T} (\g) = \sum_{\boldsymbol{\lambda}\in \T}  d_{\boldsymbol{\lambda}} \tr(\hat{o}(\boldsymbol{\lambda}) {\boldsymbol{\lambda}^{*}}(\g^{-1})).
 \end{align}

 \emph{Strong simulation} Given an input state $|{\bf{0}}\>$ and a truncation set $\T$ then the truncated Fourier series for the probability distribution of obtaining outcome ${\bf{x}}$ will be given  by
 \begin{align}
p^{\T} ( {\bf{x}}, \g)  = \sum_{\boldsymbol{\lambda}\in \T} d_{\boldsymbol{\lambda}} \tr [ \<\< {\bf{x}}| \widehat{C\otimes C^{*}} (\boldsymbol{\lambda})|{\bf{0}}\>\>   \boldsymbol{\lambda}^{*} (\g^{-1})].
\end{align}
In this case, the simulation requires to represent the operator $\widehat{C\otimes C^{*} }(\boldsymbol{\lambda})$ from which the truncated distributions can be obtained for \emph{all} the outcomes. 

Different truncation strategies will influence both the complexity of evaluating the Fourier coefficients for the set $\T$ and the approximation error between  $o$ and $o^{\T}$ or $p$ and $p^{\T}$. Generally, we select a number of irreps of  $\mathbb{G}$, $\boldsymbol{\lambda} = (\lambda_0,..., \lambda_{D-1})$ that scale polynomially with the number of qubits so that  $|\T| = poly(n)$ and for which the Fourier coefficients are non-zero.  We then construct algorithms to evaluate the  truncated series either exactly or approximately leading to two potential strategies

1. {\bf{Truncated Fourier series}}  - where for all  $\boldsymbol{\lambda}\in \T$ the Fourier coefficients  $[\widehat{C\otimes C^{*}}(\boldsymbol{\lambda})]_{\bf{k},{\bf{k'}}}$ (or  $[\hat{o}(\boldsymbol{\lambda})]_{\bf{k},{\bf{k'}}}$ ) can be efficiently represented  and computed in $poly(n)$ time and memory. This requires the dimensions of the irreps $\lambda_i$ to scale at most polynomially such that $d_{\boldsymbol{\lambda}} = poly(n)$.

2. {\bf{Truncated Fourier series with low-rank approximation coefficients}} - where for all $\boldsymbol{\lambda}\in \T$ the Fourier coefficients  $[\widehat{C\otimes C^{*}}(\boldsymbol{\lambda})]_{\bf{k},{\bf{k'}}}$ (or  $[\hat{o}(\boldsymbol{\lambda})]_{\bf{k},{\bf{k'}}}$ ) can be efficiently represented in $poly(n)$ time by a low-rank approximation that can be efficiently computed in $poly(n) $ time.

Importantly, we emphasise again that the Fourier coefficients depend on the choice of basis for the carrier space of the irreducible representation $\boldsymbol{\lambda}$ but the terms in the Fourier series do not.  Furthermore, we have seen how the Fourier coefficients can be expressed in terms of multi-paths decompositions in a symmetry adapted operator basis.  The choice of that operator basis $\{T^{\lambda_i}_{k_i}\}$ must be consistent with the choice of basis for the irrep  which gives the irrep's matrix coefficients $v^{\lambda}_{kk'}(\cdot)$.

In the remainder of this section we will discuss in Sec.~\ref{sec:noiselessconverge} conditions on selecting the truncation set $\T$ so that it ensures convergence of $o^\T$ to $o$. Then we consider \emph{sparse Fourier series} - where the truncation set $\T$ contains irreps $\boldsymbol{\lambda}  = (\lambda_0,..., \lambda_{D-1})$ for which at most a number of $L$ of the $\lambda_i$ are non-trivial irreps.  In this case, we construct algorithms in Sec.~\ref{sec:sparsealg} to evaluate these sparse Fourier coefficients for $\hat{o}(\boldsymbol{\lambda})$ from a path decomposition using tree structures with $L$ levels that start at basis elements $T^{\lambda_0}_{k_0}$ that have non-zero overlap with the observable $O$.  Then we extend this construction in Sec.~\ref{sec:bidirectional}  to obtain $\widehat{C\otimes C^{*}}(\boldsymbol{\lambda})$.

\subsubsection{Convergence of truncated Fourier expansions}
\label{sec:noiselessconverge}
For now, we discuss under what conditions we could construct sequences of increasingly larger truncation sets in a way that ensures convergence of $o^{\T}(\g)$ to $o(\g)$.  While point-wise convergence remains challenging, we can show  $L^2(\mathbb{G})$ approximation error bounds whenever the expectation value $o(\g) \in L^2(\mathbb{G})$ satisfies the following additional Lipschitz condition.  Given a connected Lie group $G$ we say $o$ is $\alpha$-Lipschitz ($0<\alpha \leq 1$) if  as a function of  $\x\in \mathbb{G}$, we have $ ||o(\g\cdot \x) - o(\x)||_{L^2(G)} = \O(|h|^{\alpha})$ as $|h|\rightarrow 0$, where $|h|$ is the geodesic distance on $G$ from the identity element. 
If this condition is satisfied then we can employ several of the results in \cite{daher2019titchmarsh} to determine a truncation set $\T$ that allows us to control the approximation error. In particular Theorem 3.5 from \cite{daher2019titchmarsh} ensures that 
\begin{align}
	\sum_{\boldsymbol{\lambda},  e(\boldsymbol{\lambda}) \geq N} d_{\boldsymbol{\lambda}}  ||\hat{o}(\boldsymbol{\lambda})||^2_{HS}  = \O(N^{-2\alpha})  \hspace{0.5cm} {\rm{as}} \hspace{0.5cm} N\rightarrow \infty
\end{align}
where in the above $e(\boldsymbol{\lambda})$ is related to the eigenvalue of the Casimir operator (or minus of the Laplacian $\mathcal{L}_{\mathbb{G}}$) for the irrep given by $\boldsymbol{\lambda}$ and satisfying $-\mathcal{L}_{\mathbb{G}} \boldsymbol{\lambda}(\g)=  (e(\boldsymbol{\lambda})^2-1) \boldsymbol{\lambda}(\g)$. Furthermore, as inequivalent irreps $\boldsymbol{\lambda}$ can be uniquely labelled by using highest weights then  the corresponding Laplace eigenvalue will be determined by a quadratic form on the space of weights.  In our situation, these series of results allow us to construct a truncation set $\T_{\alpha-Lip, N} = \{ \boldsymbol{\lambda} : e(\boldsymbol{\lambda})\leq N\}$  such that using Parseval's theorem and the above convergence rate result we have the following $L^2$ approximation error
\begin{align}
	||o(\g) - o^{\T_{\alpha-Lip,N}}(\g)||^{2}_{L^2(\mathbb{G})} = 	
	\O(N^{-2\alpha}) 
\end{align}

The eigenvalue of the Casimir operator for the trivial irrep  is zero. Furthermore, as we consider direct product groups here then we would have that $e(\boldsymbol{\lambda})^2 -1  =  c(\lambda_0) + ... + c(\lambda_{D-1})$, where $c(\lambda_i)$ is the eigenvalue of the Casimir operator for the irrep $\lambda_i$. (For example, in $SU(2)$ the Casimir operator is half of the total angular momentum operator $ \bf{J}^2/2$ that has eigenvalue $\frac{j(j+1)}{2}$ for ireps of dimension $2j+1$). From this we can see that the truncation set will include what we refer to as `sparse' elements $\boldsymbol{\lambda}$ where at most $N$ irreps $\lambda_0,..., \lambda_{D-1}$ will be non-trivial. We refer to this as a sparse representation and discuss in the next section an algorithm to efficiently evaluate the function $o^{\T}(\g)$, provided a series of assumptions on the circuit structure are met. In an informal way, this analysis can be summarised as
\emph{ If the expectation value over group parameters of the circuit ensemble is $\alpha$-Lipschitz then there is a truncation in terms of a sparse Fourier decomposition with guaranteed convergence in terms of the mean square error over the parameter space. 
}
Finally, a full classification of observables and circuit ensembles for which the Lipschitz property is satisfied will enable classical simulations of noiseless circuits with provably converged average approximation error over the parameter space.

\subsubsection{Algorithm for sparse truncated Fourier series} 

\label{sec:sparsealg}
Here we construct an algorithm that uses the multi-path decomposition to evaluate a sparse, truncated Fourier series of an observable that contains all the terms corresponding to irreps  $\boldsymbol{\lambda} = (\lambda_0,...,\lambda_{D-1})$ where at most a number of $L$ locations have non-trivial irreps $\lambda_i \neq 0$. Formally, we denote this truncation rule as $\texttt{truncate}[(\boldsymbol{\lambda},{\bf{k}, k'}), L] = True$  if and  only if % \  \ \boldsymbol{\lambda} = (\lambda_0,...\lambda_{D-1})  \\ &\  \  { \rm{satisfies}} \ \
$\#|\{ \lambda_i \neq 0,  i\in \{0,1,...D-1\} \}| \leq L $. In this case, given a particular $\boldsymbol{\lambda}$ in the truncation set, then the paths $(\boldsymbol{\lambda}, {\bf{k,k'}})$ that we consider will include all possible components ${\bf{k,k'}}$ (forming a number of at most  $d_{\boldsymbol{\lambda}}^2$ paths). 
%A general rule to select paths would be to penalise any complexity growth due to the unitaries that depend on the group parameters, whilst ensuring that the increase in the number of paths due to the $W$ operators is kept under control or also truncated to that effect.  We deal with the latter through \emph{Assumption 1} below involving the fixed circuit unitaries, and with the former by keeping only the low-dimensional Fourier coefficients in the expansion.  %This assumes that the growth in the number of paths coming from the $W$ operations are 
%would be to include all the paths that have a bounded number of branches. Here we show that a bi-directional tree structure can enable a polynomial-time simulation evaluating all paths in the truncation set
We denote the set of all valid paths $(\boldsymbol{\lambda, {\bf{k}}, {\bf{k}'}})$ that satisfy a given truncation rule with a  truncation parameter $L$ by 
\begin{align}
	\T &= \{ (\boldsymbol{\lambda}, {\bf{k,k'}}) : \texttt{validate}(\boldsymbol{\lambda, {\bf{k}}, {\bf{k}'}}) = True \  \    \nonumber \\  \&  & \  \texttt{truncate}[(\boldsymbol{\lambda},{\bf{k,k'}}), L] = True\}. %\  \   |\{ \lambda_i : \lambda_i = 0,  i\in \{1,D\} \}| \leq L \}.
\end{align}
The $\texttt{validate}$ function ensures that only paths with non-zero contributions are included.  Notably, when talking about paths some choice of operator basis has already been made to decompose the circuit as described in Sec.~\ref{sec:multipath}. This will depend on the circuit ensemble's structure and requires an efficient labelling scheme to represent the basis $\{T^{\lambda}_{k} \}$, which we discuss in more detail at the end of the section.  Furthermore, the following assumption ensures that we can efficiently compute the action of the $W$-type operators in this basis.  Its purpose is to explicitly state the conditions required to select the operator basis and/or construct circuit ensembles that will ensure a sparse Fourier series approximation of observables is efficient to compute (and later on, of probability distributions).

{\bf{\emph{Assumption 1} }}{\bf{(Controlled tree growth) }} \emph{For all $i\in \{1, D\}$, each unitary $W_i$ maps any given basis operator $|T^{\lambda}_{k}\>\>$ entirely into the trivial subspace or is orthogonal to the trivial subspace.  Equivalently, 
\begin{equation}
   \Pi_0 W_{i}\otimes W_i^{*} |T^{\lambda}_k \>\> = \begin{cases}
		0 &  \ \  or \\
		W_{i} \otimes W_{i} ^{*}|T^{\lambda}_k \>\> .
	\end{cases}
\end{equation}
{\bf{(Path contractions and evaluation) }}The number of basis operators $(\mu, m)$ for which the contraction is non-zero $\<\< T^{\mu}_m|W_{i}\otimes W_{i}^{*} ... W_{i+J}\otimes W_{i+J}^{*}|T^{\lambda}_k\>\> \neq 0$ is at most $s = poly(n)$ and each term can be evaluated efficiently in time at most $cost(\texttt{\textup{contract}})$.}

\emph{If we require to construct the tree by propagating the basis in the Heisenberg picture the above holds for the adjoint matrices $W_i\hc$ too. 
}

We will see that this assumption is a-priori met for large classes of unitaries, so in certain situations it will not impose any additional constraints on the $W$'s.

\emph{Example:} \emph{For particular families $\mathfrak{C}_n$ the complexity growth due to the $W$'s  can be easily kept under control, and Assumption 1 always holds. One such situation is when the fixed unitaries $W_j$ are permutations of the basis elements so that $s=1$. In such a case, the unitaries $W_j$'s impose "time-local" constraints on the valid paths so that $\tt{validate}(\boldsymbol{\lambda, {\bf{k,k'}}}) = True$ iff $W_j\hc T^{\lambda_{j-1}}_{k'_{j-1}} W_j = T^{\lambda_j}_{k_j}$ for all $j$,  which is determined by a function $w_j$ permuting the basis labels $(\lambda_j,k_j) = w_j(\lambda_{j-1},k'_{j-1})$. For all valid paths we also have $[\widehat{C\otimes C^{*}} (\boldsymbol{\lambda}) ]_{{\bf{k',k}}} = \frac{1}{d_{\boldsymbol{\lambda}}} |T^{\lambda_D}_{k_D}\>\> \<\< T^{\lambda_0}_{k_0}| $ and zero otherwise. Whilst a restricted class, nonetheless it can still apply to many interesting circuit classes, and includes the situation when the basis is given by the $n$-qubit Pauli operators and the $W_j$'s are any Clifford operations. In this case $cost(\texttt{\textup{contract}})  = O(n^2)$.
}

\SetKwComment{Comment}{$\triangleright$ }{ }

\begin{algorithm}[h!]
	\caption{\small{Generate Truncated Tree \texttt{GTT}. Truncated Fourier approximation of the observable is obtained from evaluation of paths in the generated trees.   \label{alg:two}}} %The noisy setting requires to modify values of data on edges to incorporate the noise parameters.
	\hrule

	\hrule\hrule
	\vspace{0.1cm}
	{\bf{Truncated Fourier approximation of observables}}
	\vspace{0.1cm}
	\hrule\hrule
	\vspace{0.1cm}
	\KwIn{$O$,  sequence of unitaries  $U_1(g_1),W_1, ... W_{D}$, truncation parameter $L$}
	$o^\T(\g) = 0$\;
	$roots =  \{T^{\lambda_0}_{k_0} :  O_{\lambda_0,k_0}= \<\<O|T^{\lambda_0}_{k_0}\>\> \neq 0 \}$\;
	
	\For{$T^{\lambda_0}_{k_0} \in roots$}{
		\For{$path \in \texttt{\textup{GTT}}(L, T^{\lambda_0}_{k_0} , W_1,...,W_D)$}{
			$o^\T(\g) += O_{\lambda_0,k_0}\texttt{evaluate}(path)$ }
	}
	
	\KwOut{$o^\T(\g)$}
	\vspace{0.1cm}
	\hrule \hrule
	\vspace{0.1cm}
	{\bf{Generate Truncated Tree \texttt{GTT}}}
	\vspace{0.1cm}
	\hrule
	\hrule
	\vspace{0.1cm}
	\SetKwFunction{FTree}{GTT}
	\SetKwProg{Pn}{Function}{:}{\KwRet{ $ tree$} }
	\Pn{\FTree{$level, root, W_{start},..., W_D$}}{
		$tree  = \{nodes = (root,start), edges = \{\}\}$\;
		\If{$level > 1$}{
			$	children,edge\_data = 	 \texttt{generate-children}(root,start)$\;
			$   level = level-1$\;
			$tree.append(children, edge\_data )$\;
			\For{$child, index  \in children$}{
				$tree.append(\texttt{GTT}(level, child, W_{index}, ... W_D)$)
			}
		}
		\Else{\KwRet{ tree}}
		
	}	
	\KwRet{tree}

	\vspace{0.1cm}
	\hrule
	
	\vspace{0.1 cm}
	\SetKwFunction{Fchild}{generate-children}
	\SetKwProg{Pn}{Function}{:}{\KwRet{ $children, edge_{data}$} }
	\Pn{\Fchild{$root,start$}}{
		$j=0$\;
		\While{$ \Pi_0 \W_{start+j} ... \Pi_0 \W_{start}|root\>\> \neq 0$ \  \hspace{1cm} {\bf{and}} $start+j<D$ }{$j=:j+1$}

		\For{$child, value\in $\texttt{\textup{contract}}$[\<\<root| \W_{start} ... \W_{start+j}]$}{
			\If{$start+j<D$}{      
				\For{$child_{E} \in \texttt{\textup{expand}}(child)$}{
					$children +=:\{(child_{E}, start+j+1)  \}$\;
					$edge_{data} +=:\{ (root, child_{E}) :  value\times \texttt{matrix-coeff}(child, child_{E}, start+j ) \}$
				}
			}
			\If{$start+j =D$ }{
				$children+ =:\{child, None\}; \ edge_{data}+ =:\{(root, child):value\}$
			}
		}
	}\KwRet{ $children, edge_{data}$} 
	\vspace{0.2cm}
	
	\SetKwFunction{Fcontract}{expand}
	\SetKwProg{Fn}{Function}{:}{\KwRet{$\{ T^{\lambda}_{k'} :  k' \in \{1,..., dim(\lambda)\} \}$}}
	\Fn{\Fcontract{$T^{\lambda}_k$}}{
	}
	\KwRet{$\{ T^{\lambda}_{k'} :  k' \in \{1,..., dim(\lambda)\} \}$}
	\vspace{0.2cm}
	
	\SetKwFunction{Fcontract}{contract}
	\SetKwProg{Fn}{Function}{:}{\KwRet{$\{ (|T^{\lambda}_k\>\>, value) \, {\rm{s.t}} \,  \, value=\<\< T^{\lambda}_k|\W_{start+j}  ... \Pi_0 \W_{start} |root\>\> \neq 0 \}$}}
	\Fn{\Fcontract{$\<\<root| \W_{start} ...  \W_{start+j} $}}{
	}
	\KwRet{$\{ (\<\<T^{\lambda}_k|, value) \, {\rm{s.t}} \,  \, value=\<\< root|\W_{start} \W_{start+1} ...  \W_{start+j} |T^{\lambda}_k\>\> \neq 0 \}$}
	\vspace{0.2cm}
	
	\SetKwFunction{Fmatrix}{matrix-coeff}
	\SetKwProg{Fn}{Function}{:}{\KwRet{$v^{\lambda}_{kk'}(g_{index})$}}
	\Fn{\Fmatrix{$T^{\lambda}_k, T^{\lambda}_{k'}, index$}}{
	}
	\KwRet{$v^{\lambda}_{kk'}(g_{index})$}
	\vspace{0.2cm}
	
	\SetKwFunction{Fmatrix}{evaluate}
	\SetKwProg{Fn}{Function}{:}{\KwRet{$ $}}
	\Fn{\Fmatrix{$path, state = {|\bf{0}\>}$} }{
		$nodes, edges  =: path $\;
		\Comment{Get last node in path }
		$(root\_last, index\_last) =: nodes[last]$\;
		$overlap = \<\< root\_last | \W_{index\_last} \Pi_0 ... \Pi_0 \W_D | {\bf{0}}\>\> $\;
	}
	\KwRet{$overlap\prod_{i=1}^{last-1} edges[(nodes[i], nodes[i+1])]$}

\vspace{0.1cm}
\hrule
\vspace{0.1cm}

\end{algorithm}

%\cc{Change a bit the start of the paragraph to relate better to the pseudocode}
The Algorithm  \ref{alg:two} describes how to construct the truncated, sparse Fourier series of an observable $O$ corresponding to the truncation set $\T$. It assumes an operator basis has been fixed and that $O$ decomposes with respect to this basis into at most $r(O) =poly(n)$ terms. The main construction is given by the Generate Truncated Tree $\texttt{GTT}$ function that takes as input a truncation parameter, an operator $|T^{\lambda_i}_{k_i}\>\>$, an ordered sequence of unitaries  $W_i,...,W_{i+J}$. It returns a tree where nodes hold operator basis elements (or their labels), and a position label (e.g telling us which unitary we apply next). The edges contain the values of matrix coefficients for the unitaries applied with respect to the fixed basis.  This tree structure should be viewed as representing the path expansion of 
$ U^{\hc}_{i+J}(g_{i+J})W^{\hc}_{i+J} ... U^{\hc}_i(g_i)W^{\hc}_i  T^{\lambda_i}_{k_i}W_{i} U_i(g_i) ... W_{i+J} U_{i+J}(g_{i+J})$ truncated to include all of the Fourier terms $\boldsymbol{\lambda} = (\lambda_0,...\lambda_{D-1})$ that contain non-trivial irreps in at most $L$ locations. The number of paths in the tree are going to be at most $(sd_{max})^L$, where $d_{max}$ is the maximal dimension of irreps $\lambda_i$ at any location. 

Applying a $W$ unitary expands each node to at most $s$ other nodes since $W_i T^{\lambda}_k W_i\hc$  involves a linear combination of at most $s$ orthonormal basis operators. Due to \emph{Assumption} 1, all of these $s$ nodes either live entirely into the trivial space $\Pi_0$ projects into or its orthogonal complement. The importance of this is that in the former case applying a group-parametrised unitary is a trivial step since $U(g) \otimes U^{*}(g) \Pi_0 = \Pi_0$ and the tree does not branch out. In the latter case applying the unitary $U$ to each incoming node $T^{\lambda}_k$ expands into $d_{\lambda}$ nodes corresponding to every component $k$. These edges will carry the value of irreducible matrix coefficient $v^{\lambda}_{kk'}(g)$ and values of the contractions due to products of $W$-type unitaries. In this situation, the truncation parameter increases as we want to penalise the growth in complexity due to the unitaries $U$. 

Along each branch there will be at most $L$ contractions and the function $\texttt{contract}$ determines the nodes and produces the values of those contractions between the input nodes and output nodes. Therefore evaluating each path will depend on the complexity of these two sub-functions and overall it takes $L \times cost(\texttt{contract})\times cost(\texttt{matrix-coeff})$ for each path. We will generally denote by $cost(\texttt{evaluate})$ to capture this time complexity to evaluate each path, which may also include the cost of computing the final overlap with the computational basis state as described by the function $\texttt{evaluate}$.  This leads to an overall time complexity of $r(O)(sd_{max})^L cost(\texttt{evaluate})$ to compute a truncated Fourier approximation of the observable $O$.  Notably the \texttt{matrix-coeff} function may be used \emph{parametrically} so that once the truncated trees are generated, then the expectation value for \emph{different} circuits in the ensemble can be determined by changing parameters for the group elements $g_i$ in the path evaluation, without incurring cost overhead.  

The complexity cost of determining matrix coefficients for the irreducible representations comes from evaluating $v^{\lambda}_{kk'}$ at different group parameters $g\in G$. We would only need to determine these for the irreps $\lambda$ of $G$ that appear in the truncation set at any of the locations labelled by $i$.  Many such explicit constructions have been developed, particularly for the case of unitary groups and from special functions and homogenous polynomials \cite{klimyk1995representations}. For example, irreducible representations of complex general linear groups $GL_m$ with respect to a Gelfand-Tsetlin basis for the carrier representation space can be evaluated with fast classical algorithms of $O(m^2 d_\lambda)$, where $d_{\lambda}$ is the dimension of the irrep \cite{burgisser2000computational}. In general, the labelling scheme used can have an impact on the memory requirements for storing the tree data structures. For our symmetry-adapted operator basis we can use a labelling scheme for the irrep alone,  but we will need additional labelling for different multiplicities. In general, $\O(nD)$ bits of memory will suffice to uniquely represent the basis at all locations, however to keep track of the basis transformation under the group action we will employ a labelling scheme for the poly-sized irreducible representations as used for instance by computational group theory packages such as GAP or LieArt that can evaluate matrix irreducible coefficients for various groups.

\begin{figure*}[t!]
	\includegraphics[width=\textwidth]{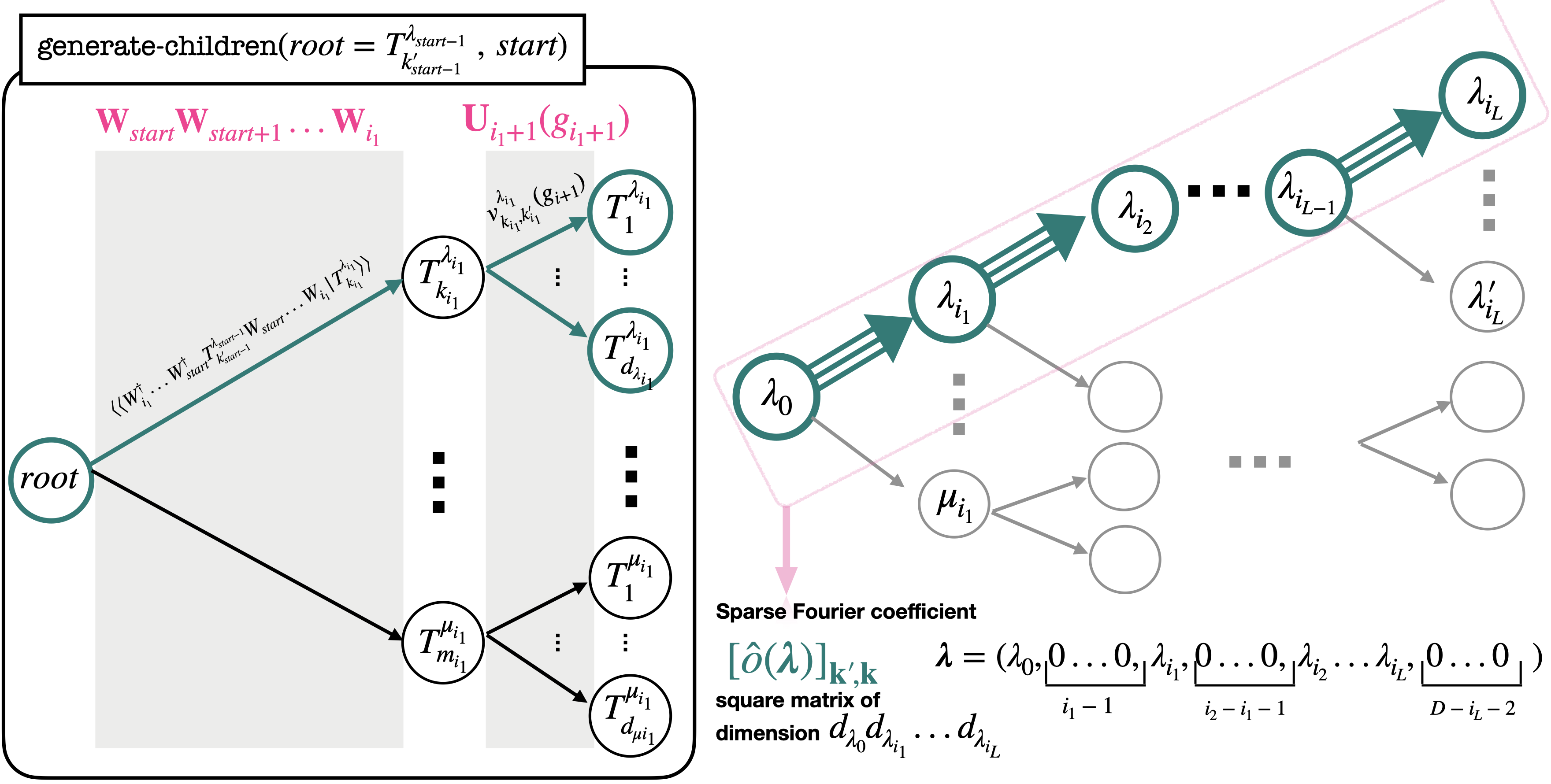}
	\label{paths}
	\caption{Visualisation of the main algorithm to evaluate the sparse Fourier coefficients from truncated path decompositions with respect to a symmetry-adapted operator basis (i.e $\{T^{\lambda}_{k}\}_{k,\lambda}$). \emph{(Left) }The function \texttt{generate-children} is the main rule used to recursively generate trees with a total of $L$ levels via $\texttt{GTT}$. The nodes contain information about the basis element (i.e $root$) and the index $start$ which labels the next unitary to be applied. Edges carry the overlap value (inner product) between the output node basis and the input node basis evolved through a sequence of one or more unitaries. \emph{ (Right) } Trees with $L$ levels generated by $\texttt{GTT}$ starting from fixed basis operators $\{T^{\lambda_0}_{k_0}\}_{k_0=1}^{dim(\lambda_0)} $ for an irreducible representation $\lambda_0$ will contain subtrees that correspond to the (sparse) Fourier coefficients evaluated at $\boldsymbol{\lambda} = (\lambda_0, \lambda_1, ..., \lambda_{D-1})$ where at most $L$ irreps $\lambda_i$ are non-trivial.  } %Schematic showing a valid path decomposition, highlighting how any particular path arises (pink)  $(\boldsymbol{\lambda}, {\bf{k}}) := (\lambda,k) \rightarrow (\lambda_2, k_2) \rightarrow (\lambda_3, k_3) .... \rightarrow(\lambda_D, k_D)$.  After applying each layer $U^{\dagger}(g_1) W^{\dagger}$  of the circuit to an input basis operator corresponding to irrep and component $(\lambda,k)$ then the output becomes a linear combination over at most $d_{\lambda}$ basis elements that transform accordingly.}
\end{figure*}

 \subsubsection{Bi-directional trees for truncated probability distributions}

 \begin{algorithm}[h!]
 	\caption{\small{Algorithm to output truncated Fourier approximation of  probability distributions using bi-directional trees constructed via \texttt{GTT}.  The truncated trees are independent of the outcome ${\bf{z}}$, which is only used for the path evaluation via \texttt{evaluate}.  }}
 	\hrule
 	
 	\hrule\hrule
 	\vspace{0.1cm}
 	{\bf{Truncated Fourier approximation of output probability distributions}}
 	\vspace{0.1cm}
 	\hrule\hrule
 	\vspace{0.1cm}
 	\KwIn{Input state $|{\bf{0}}\>$,  sequence of unitaries  $W_1, ... W_{D}$, $U_1(g_1),.. U_D(g_D)$, truncation parameter $L$. }
 	$p^T(\g)({\bf{z}}) = 0$\;
 	$len= [D/L]$\; \Comment{Min length of consecutive trivial irreps}
 	\For{$start \in\{1,...D-len\}$}{
 		%$start = i_0 + len\cdot m$\;
 		$end = start+len$\;
 		\For{ $left, right, value \in \texttt{\textup{contract-full}}(W_{start}, ... W_{end})$}{
 			$left\_tree = \texttt{\textup{check}}( \texttt{GTT}(L, left, W_{start-1}\hc, ... W_{1}\hc), len)$\;
 			%	\If{\texttt{\textup{check}}(left\_tree, len) = True}
 			%		{$right\_tree =  \texttt{GTT}(L, right, W_{end+1}, ... W_{D})$\;  }
 			%\Else{ $left\_tree = None$}
 			
 			\Comment{Checks the paths have not been included for previous start values}
 		}	
 		\If{$left\_tree$}{	
 			$right\_tree =  \texttt{GTT}(L, right, W_{end+1}, ... W_{D})$\;
 			\For{$left\_path \in left\_tree, right\_path\in right\_tree$}{
 				\If{$\texttt{\textup{length}}(left\_path)\hspace{-0.04cm}+\texttt{\textup{length}}(right\_path) \hspace{-0.1cm}\leq L$}{
 					$p^T(\g)(\z) += value\cdot (\texttt{evaluate}(left\_path, |{\bf{z}}\>))^{*}\cdot \texttt{evaluate}(right\_path, |{\bf{0}}\>)$
 				}
 			}
 		}
 	}

 	\KwOut{$p^T(\g)(\bf{z})$}
 	\vspace{0.1cm}
 	\hrule
 	\vspace{0.2cm}
 	\SetKwFunction{Fcontract}{contract-full}
 	\SetKwProg{Fn}{Function}{:}{\KwRet{}}
 	\Fn{\Fcontract{$W_{start}, ...  ,W_{end}$}}{
 		%$\Pi_0 W_{start} \Pi_0 W_{start+1} ... W_{end} \Pi_0 $\;
 	}
 	\KwRet{$\{ (|T^{\mu}_0\>\>, \<\<T^{\mu'}_0|, value) \, {\rm{s.t}} \,  \, value=\<\< T^{\mu}_0| \Pi_0 \W_{start} \Pi_0 \W_{start+1} ... \W_{end} \Pi_0  |T^{\mu'}_0\>\> \neq 0 \}$}
 	\vspace{0.2cm}
 	
 	\SetKwFunction{Fcheck}{check}
 	\SetKwProg{Fn}{Function}{:}{\KwRet{}}
 	\Fn{\Fcheck{$tree, len$}}{
 		\If{$tree$ {\rm{contains $branch$ with contractions}} $\W_{start-j}\Pi_0 ... \Pi_0\W_{start}|node\>\> \neq 0$ with $j\geq len$ }{
 			%	$bool = False$
 			$delete(branch)$\;
 			\Comment{This condition should be checked dynamically as the tree is generated.} 
 		}
 	}	\KwRet{ tree}\;

 	\vspace{0.2cm}
 	\hrule
 	\vspace{0.2cm}
 	\label{alg:prob}
 \end{algorithm}

 \label{sec:bidirectional}
For certain applications such as  computing outcome probabilities the observables that need to be measured can have a decomposition into a number of basis operators that is not polynomial (or manageable with fixed resources).  To deal with that situation, we observe that imposing a truncation $L$ representing the maximal number of irreps that are not trivial within every valid path, implies that there exists at least a value $i \in \{1, D\}$ such that projectors onto the truncation set $\Pi_0$ are applied sequentially after unitaries $W_{i},... ,W_{i+J}$ for at least $J\geq D/L$ consecutive terms. This suggests a strategy where we scan the structure of a circuit and identify the minimal value of $L$ such that for all $i\in \{1, D\}$ we have that $\Pi_0 W_i \otimes W_i^{*}\Pi_0 ...  \Pi_0  W_{i + [D/L]} \otimes W_{i+[D/L]}\Pi_0$ has support on $poly(n)$  basis operators of the form $\{|T^\mu_0\>\>\<\< T^{\mu'}_0\}_{\mu,\mu'}$, where $\mu, \mu'$ label trivial irreps including multiplicities.

 For all $i\in \{1, D\}$ we perform the contraction to the above polynomially-sized basis and take each of those as roots for constructing truncated trees of height $L$ in both directions. We have the forward action given by expanding out in the Heisenberg picture
\begin{align}
\<\< T^{\mu'}_0|   W_{i+[D/L]+1} \otimes W_{i+[D/L]+1}^{*} ...   (U \otimes U^{*})(g_{D}) W_{D} \otimes W_{D}^{*}  \nonumber
\end{align}
and the backward action given by expanding out in the Sch\o dinger picture)
\begin{align}
	(U \otimes U^{*})(g_{1}) W_{1} \otimes W_{1}^{*} ...   (U \otimes U^{*})(g_{i-1}) W_{i-1} \otimes W_{i-1}^{*}  | T^{\mu}_0 \>\>. \nonumber
\end{align}
This essentially allows to cut the circuit into two pieces  $U_1W_1 ... U_{i-1}W_{i-1}$ and $W_{i+ [D/L]+ 1} U_{i+ [D/L]+ 1} ... U_DW_{D}$, and evaluate the contributions to $C(\g) \otimes C^{*}(\g)$ from all the valid paths by evaluating the trees $\texttt{GTT}(L, T^{\mu}_0, W_i\hc,...W_1\hc)$ and $ \texttt{GTT}(L, T^{\mu'}_0, W_{i+[D/L]+1} ,...W_D ) $.

The Algorithm \ref{alg:prob} describes how to use this procedure to evaluate sparse Fourier series decompositions for output probability distributions. Note that the same bi-directional trees generated for a given truncation level can be used to determine the probability distribution for \emph{different outcomes}.
The complexity of this algorithm  is going to be at most $D\times cost(\texttt{rank}) \times cost(\texttt{contract-full}) \times (sd_{max})^{2L}  \times cost(\texttt{evaluate})$, where $cost(\texttt{\texttt{rank}})$ represents the number of different root operators resulting from the contraction of $\Pi_0 W_i \otimes W_i^{*} \Pi_0 ,... \Pi_0 {W}_{i+[D/L]}\otimes {W}_{i+[D/L]}^{*} \Pi_0$ that incurs $cost(\texttt{contract-full})$ time complexity. In order to be efficient, it relies on the \emph{Assumption 3} described in the next section that ensures the contractions are possible and reduce to a polynomial number of terms.

\subsection{Noise-induced truncation of path decompositions}
We discuss how introducing certain types of noise into the quantum circuits can exponentially suppress the magnitude of the higher dimensional Fourier coefficients.  This effectively induces a truncation where the dominant terms give a sparse Fourier series that is amenable to the algorithms described previously.

Here we assume that the unitaries are subject to a finite amount of noise. We represent this with a quantum channel $\E:\mathcal{\B(\h_n)} \longrightarrow \B(\h_n)$ that acts on each layer in the circuit and denote by $\mathbf{E}$ its Liouville representation. For simplicity, we assume the noise to be gate and time independent channel that follows the same local structure as $U(g)$ so that the noisy unitary $C(\g) \otimes C^*(\g) $ is given in the Liouville representation by
\begin{equation}
\tilde{\mathbf{C}}(\g)= \mathbf{E} U(g_1) W_1 \otimes U^{*}(g_1) W_1^{*} ... \, \mathbf{E}U(g_D)W_D \otimes U^{*}(g_D)W_D^{*}.
\end{equation}

The requirement that the channel is trace-preserving implies that $ \<\<\mathbb{I}|\mathbf{E}|T^{\lambda}_k\>\> = 1$ for all basis vectors.  Furthermore completely positive trace preserving maps can be analysed in terms of their eigenvalues $e$ for which $\mathbf{E}|X\>\>= e|X\>\>$ for some operator $X$. The eigenvalues are real or come in complex conjugates pairs and are restricted to the unit disk. Since every channel has at least one fixed point, we denote the spectral gap by
\begin{equation}
\gamma := 1 - max_{|e|\neq 1} |e|.
\end{equation}

We next analyse how noise affects the Fourier decomposition, and to make it easier to track indices for now we will assume in the discussion that follows that the noise is diagonal in the ITO basis, that is $\mathbf{E}|T^{\lambda}_k\>\> = e_{\lambda,k} |T^{\lambda}_k\>\>$.  We discuss how this restriction can be lifted in the Appendix~\ref{app:noise}.  We note however, that for example a uniform noise model such as depolarising channel will satisfy this independently of the basis. Similarly, if we work with the Pauli basis the Pauli channels satisfy these constraints. Under this restriction that the noise model is diagonal, the noisy Fourier coefficients are given by
\begin{equation}
[\widehat{\tilde{\mathbf{C}}}(\boldsymbol{\lambda}) ]_{{\bf{k',k}}}=   e_{\lambda_0,k_0} ... e_{\lambda_{D-1},k_{D-1}}  [ \widehat{C\otimes C^{*}}(\boldsymbol{\lambda})]_{\bf{k',k}} .
\end{equation}

%Clearly we can also express this channel with respect to the ITO basis, and in general it will be mixing the different irrep modes. However, the simplest way to see the effect of noise on the path decomposition is to consider that $\E$ is symmetric (or equivalently, covariant) with respect to the underlying representation and therefore admits a block-diagonal decomposition in the ITO basis. For example, a completely depolarising channel will have that structure. In particular, suppose that
%\begin{equation}
%	\E^{\dagger}(T^{\lambda}_k) = e_\lambda T^{\lambda}_k,
%\end{equation}
%where $\E^{\dagger}$ is the adjoint channel. If we impose that $\E$ is a completely positive trace preserving map then this is equivalent to $\E^{\dagger}(I) = I$. Furthermore, since the ITOs forms an orthonormal basis we have $|e_\lambda|\leq 1$ whenever  $\E$ is a unital map, result that follows from the contractivity of unital channels under the Hilbert-Schmidt norm \cite{perez2006contractivity}. With these simplifications it follows that the operator  $T^{\lambda}_k$ will evolve through the noisy circuits into the following

%\begin{equation}
%	\tilde{\C}(\g) (T^{\lambda}_k) =   \sum_{\substack{(\boldsymbol{\lambda},\boldsymbol{k}) \\ \texttt{validate}[(\boldsymbol{\lambda}, \boldsymbol{k})] = True}} e_{\lambda} e_{\lambda_1}...e_{\lambda_D} v^{\boldsymbol{\lambda}}_{\boldsymbol{k},\boldsymbol{k'}} (\g^{-1}) T^{\lambda_D}_{k_D} .
%\end{equation}

In general $|e_{\lambda,k}| \leq 1$  and a channel can have fixed points for which $e=1$ or rotating points for which the eigenvalues are on the periphery of the unit disk such that $|e|=1$.  When considering multiple applications of the channel that results in an asymptotic projection onto the unit disk eigenspace \cite{albert2019asymptotics}. 

In particular, for us this means that the Fourier coefficients will be contracted by a factor of  
\begin{equation}
|e_{\lambda_0,k_0} e_{\lambda_1, k_1} ... e_{\lambda_{D-1}, k_{D-1}} | \leq   (1-\gamma)^{L(\boldsymbol{\lambda},{\bf{k}})}%{ ||\boldsymbol{\lambda}||_0},
\end{equation} 
where  $L(\boldsymbol{\lambda},{\bf{k}}) $  denotes  the number of eigenvalues $e_{\lambda_i, k_i}$  not on the periphery of the unit disk within the path. 

%and because  $U(g) I U(g)^{\dagger } = I$ for all $g\in G$, then we can always choose the identity $I$ as an element of the ITO basis, however there may generally be additional basis operator transforming trivially if the trivial irrep has multiplicities in the decomposition of $U\otimes U^{*}$. 
If the unitary $U(\cdot)$ had no local structure and acts non-trivially on the entire Hilbert space $\H \cong (\mathbb{C}^d)^n$ then the previous bound will be trivially 1. This happens because every unitary fixes the identity on $\H$, so any path that encounters a full identity on $\H$ is trivial (e.g starts at $I$ and ends at the $I$ operator and has  weight coefficients $1$). Let's expand on then notation and meaning of  "non-trivial irreps within a path" when $U(\cdot)$ has a local structure, which will be the situation encountered in the examples considered. When $U(g)$ acts trivially on a subset of qudits, then any operator of the form $I\otimes B$  will transform trivially as  $U(g) I\otimes B U(g)^{\dagger} = I\otimes B$ for any operator $B$ acting only on the subset on which $U$ is trivial.  The noise channel is unital and has the same local structure as $U$ so in the Heisenberg picture it will also fix this operator $\E^{\dagger}(I \otimes B) = I\otimes B$, and as a result the noise in this layer has no effect on the particular path considered.  As we start from an operator $T^{\lambda}_k$ and propagate through the circuit, choosing particular paths when a branching occurs, then the number of times when a propagated operator of the form $I\otimes B$ occurs before applying a new layer in the circuit will correspond to the number of trivial irreps within that particular path. 

These observations motivate using the following condition for constructing a truncated set of paths 
\begin{align}	\label{eqn:truncation}
	\mathcal{T} :&= \{   (\boldsymbol{\lambda}, \boldsymbol{k}) : \texttt{validate}[ (\boldsymbol{\lambda}, \boldsymbol{k})] = True \\ \nonumber &  \& \ \texttt{truncate}[ (\boldsymbol{\lambda}, \boldsymbol{k})] = True    \} 
\end{align}
for a truncation rule defined for a fixed cut-off parameter $L$ by $\texttt{truncate}[ (\boldsymbol{\lambda}, \boldsymbol{k})] = False $  if and only if $ |e_{\lambda_0,k_0} ... e_{\lambda_{D-1}, k_{D-1}} | \leq   (1-\gamma)^{L}$. In other words, all the valid paths outside of the truncation set are suppressed by a factor of $(1-\gamma)^L$ or more.
%\begin{align}
%	 \texttt{truncate}[ (\boldsymbol{\lambda}, \boldsymbol{k})] &= True  \  \ iff \  \\ &|e_{\lambda_0,k_0} ... e_{\lambda_{D-1}, k_{D-1}} | \leq   (1-\gamma)^{L}.
%\end{align}
This is just one particular way of building the truncation set $\mathcal{T}$, which is quite natural for the unital, covariant noise model considered, and relies on the fact that the fixed-points of the channel lie in the subspace that transforms trivially under $U(\cdot)$.

Finally we note that any noise that may incur on the $W$ operators will result in Fourier coefficients whose norms is further dampened, and therefore the approximation bounds may in fact be strengthened. 

\subsection{Estimating mean values of noisy circuits by an operator truncated classical simulation} 
Consider a class of circuits $\mathfrak{C}_n$ as in Sec.~\ref{sec:defs} with the task of estimating expectation values of observables $o(\g)  = Tr(O \, C(\g)\psi_0 C^{\dagger}(\g))$ for an input state $\psi_0$.  We will next see what are the general sufficient conditions that ensure these classes of circuits can be simulated efficiently when subject to noise. 

\emph{{\bf{Assumption 2 (Noise penalisation)}}. For the noise model, we assume every layer is subject to a channel $\E$ with spectral gap $\gamma$ and that the only fixed or rotating points correspond to the trivial irreps which means that  $|e_{\lambda,k}| <(1-\gamma)$ whenever $\lambda\neq 0$. }

The noisy expectation value is denoted by  $\tilde{o}(\g)$ and the truncated noisy expectation value by  $\tilde{o}^{\T}(\g)$, with respect to the truncation rule in the previous section.

%In addition, we assume that we can perform efficient contractions involving only  $W$'s sub-circuits that can be efficiently evaluated in polynomial time. This is a technical assumption that requires for all $i\leq j\leq D$ that $W_{j} \otimes W_{j}^{*} \Pi_0 W_{j-1} \otimes W_{j-1} ^{*}... \Pi_0 W_i\otimes W_i^{*} |T^{\lambda}_k\>\>$ has support on at most $s = poly(n)$ terms in the basis expansion, with coefficients that can be evaluated in polynomial time. 

%Here we denoted by $\Pi_0$ the projector onto the 1-dimensional irreps, specifically $\Pi_0 := \sum_{\lambda_0 : \rm{dim}(\lambda_0) =1} |T^{\lambda_0} \>\>\<\< T^{\lambda_0}|$.

Under the Assumption 1 and 2 above, we have the following informal statement  for $(sd_{max}) =\O(1)$ that is proved in Sec.~\ref{sec:results} in more general terms.
%The noisy expectation value $\tilde{o}(\g)$ 

{{\bf{Result 1 (Theorem.~\ref{thm:mainmean})}}} \emph{If the observable $O$ decomposes into $r(O) = poly(n)$ number of terms with respect to a suitable basis, then there exists an explicit classical algorithm with runtime $poly(n) \left(\frac{||O||_{HS}}{\epsilon} \right)^{\O(1/\gamma)}$ which outputs a function $\tilde{o}^{\T}(\g)$ that approximates the noisy expectation value $\tilde{o}(\g)$ with average error at most $\epsilon$ over the entire parameter space.}

The core ideas are the following: 
\begin{enumerate}
	\item The form of the noise model determines a truncation set $\T$ such that every noisy Fourier coefficient corresponding to paths outside of the truncation set contracts by a factor of at most $(1-\gamma)^{L}$, for some fixed truncation parameter. This ensures Parseval's theorem will give us control over the average approximation error
	\begin{align}
		||\tilde{o}^{\T} (\g) - \tilde{o}(\g) ||_{L^2(\mathbb{G})} \leq  (1-\gamma)^{L}||O||_{HS}.
	\end{align}
	\item The assumption that $|e_{\lambda,k}| <(1-\gamma)$ for all $d_{\lambda} >1$ ensures that the growth in complexity always gets penalised by the truncation rule. In other words, as we go through the circuit each path branches out at a number of locations at most $L$.
	\item The number of \emph{valid paths} in the truncation set is at most $r(O) (sd_{max} )^{L}$ and each of them can pe efficiently estimated in polynomial time due to the efficiently contractible $W's$. 
	\item The classical algorithm relies on the tree data structure that explicitly enumerates all the valid paths and is given by Algorithm \ref{alg:two}. It  will consist of $sd_{max}$ -ary trees of height $L$ and there are at most $r(O)$ different trees corresponding to all $T^{\lambda_0}_{k_0}$ in the decomposition of the observable $O$. The only modification comes in the evaluation of the paths that will need to incorporate the error parameters $e_{\lambda,k}$. This can be achieved by introducing the additional factor in the function $\texttt{matrix-coeff}$ (or when evaluating the contractions if we consider further noise on the $W$ unitaries).
\end{enumerate}

%\cc{Describe how we use the algorithm to get the result} 
%\cc{Input state restriction?} 
%\cc{Add generalistion to noise models for which the dephased channel has the 1-dim + spectral gap property -  at most we ge an extra $D$ -factor in $\epsilon$. Let $\tilde{o}(\g)$ with respect to the dephased channel and $\bar{o}(\g)$ with respect to the general general channel  $\epsilon + \delta*D$ average error instead.  $\epsilon'/D$}

\subsection{Approximate sampling of noisy circuits via an operator truncated classical simulation} 
Consider a class of circuits $\mathfrak{C}_n$ as in Sec.~\ref{sec:defs} with the task of sampling, that is producing a sample drawn from the output probability distribution $p({\bf{x}}) : =  Tr(|{\bf{x}}\>\<{\bf{x}}| \, C(\g)\psi_0 C^{\dagger}(\g))$ for input state $\psi_0$ and with respect to a computational basis ${\bf{x} }\in \mathbb{Z}_d^{\times n}$. 
A known result from \cite{bremner2017achieving} (see Lemma~\ref{lemma:samplingtocompute}) shows that sampling can be reduced to computing probability distributions and its marginals, within an $l_1$-norm approximation error.  One can use this as a reduction from the task of sampling from the noisy distribution $\tilde{p}({\bf{x}})$ to that of computing a function $\tilde{p}^{\T}({\bf{x}})$ and all its marginals $\tilde{p}^{\T}(x_1,...,x_k) = \sum_{x_{k+1},...,x_{n} \in \mathbb{Z}_d} \tilde{p}^{\T}(x_1,...,x_k) $ such that $||\tilde{p} -\tilde{p}^{\T} ||_1 = \sum_{{\bf{x}}} |\tilde{p}({\bf{x}}) - \tilde{p}^{\T}({\bf{x}})|\leq \epsilon $. 

%Under the same assumptions that we discussed  in the previous section, namely on the noise model and controlled tree growth we can obtain quasi-polynomial complexity by applying the subroutines used for computing expectation value. \cc{Double check that this is true -  } 

In addition to the assumptions we've had for  noisy expectation value  we can  also consider the situations where the sub-circuits formed out of the $W$'s are efficiently contractible over the trivial subspaces induced by the unitaries $U(\cdot)$.

\emph{{\bf{Assumption 3 (Efficient contractions) }}  There exists an $M$ such that for all $i \in \{1, D-M\}$ the following operator
	\begin{align}
		\Pi_0 W_i\otimes W_i^{*} \Pi_0 ... W_{i+M}\otimes W_{i+M}^{*} \Pi_0 
	\end{align}
can be efficiently contracted to a support containing a polynomial number of  terms of the form $|T^{\lambda_0} \>\> \<\< T^{\lambda_0'}|$. 
}
%\cc{Do we keep the dependency on the depth?}

%\cc{Outline assumptions - include the contractibility generalised to $\Pi_0$ support?}
Under this more restrictive set of conditions we have the following informal result obtaining a polynomial-time complexity, proved in Sec.~\ref{sec:results}.

{{\bf{Result 2 (Theorem.~\ref{thm:mainsampling})}}} \emph{
		Suppose the 
	%probability distribution $p({\bf{x}})$ anti-concentrates and 
	probability distribution marginals $\sum_{x_{i_1},... x_{i_j}} \<{\bf{x}}|T^{\lambda}_{k} |\bf{x}\>$ can be efficiently evaluated for all basis operators. 
	% Suppose all the marginals $\tr ( |\bf{x}\>\< \bf{x}| T^{\lambda}_k)$ can be efficiently evaluated for 
	Given Assumptions 1-3 then there exists an explicit classical algorithm which outputs a function $\tilde{p}^{\T}({\bf{x}})$ with $l_1$-approximation error $ ||\tilde{p}- \tilde{p}^{\T} ||_1 \leq \epsilon$, provided that the noiseless residual (quasi)-distribution $p -p^{\T}$ anti-concentrates and with runtime either
	\begin{enumerate}
		\item  constant-degree polynomial scaling in $n$ and exponential scaling in the inverse spectral gap $\gamma^{-1}$ with  $\O(poly(n)) \left(\frac{\O(1)}{\epsilon} \right)^{\O(1/\gamma)}$ if $s, d_{max} = \O(1)$ % and $L = \O(1)$ or if $s, d_{max} = \O(1)$ and $L = \O(log(n))$
		\item quasi-polynomial in $n$ (for inverse polynomial precision $\epsilon$) and exponential in the inverse spectral gap $\gamma^{-1}$ with  $\O( n^{\O(\log{\epsilon^{-1}}/\gamma)})$ if $sd_{max} = poly(n)$  and  $\O(poly(n) \log{n}^{\O(\log{\epsilon^{-1}}/\gamma)})$ if $sd_{max} = \O(log(n))$ % and $L = \O(log(n))$. 
		\end{enumerate}
}

%	Suppose the 
	%probability distribution $p({\bf{x}})$ anti-concentrates and 
 % probability distribution marginals $\sum_{x_{i_1},... x_{i_j}} \<{\bf{x}}|T^{\lambda}_{k} |\bf{x}\>$ can be efficiently evaluated for all basis operators. 
 % Suppose all the marginals $\tr ( |\bf{x}\>\< \bf{x}| T^{\lambda}_k)$ can be efficiently evaluated for 
 % Given Assumptions 1,2,3 then there exists an explicit classical algorithm with runtime $\O(poly(n)) \left(\frac{\O(1)}{\epsilon} \right)^{\O(1/\gamma)}$ which outputs a function $\tilde{p}^{\T}({\bf{x}})$ with $l_1$-approximation error $ ||\tilde{p}- \tilde{p}^{\T} ||_1 \leq \epsilon$, provided that the noiseless residual (quasi)-distribution $p -p^{\T}$ anti-concentrates. }

The definition for anti-concentration we use is to say $q(\bf{x}, \g)$ satisfies
\begin{align}
	\sum_{\bf{x} \in \mathbb{Z}_d^{\times n}} \int_{\g\in G}|q({\bf{x}}, \g)|^2 d\, \g   = \frac{O(1)}{d^n}.
\end{align}

As a corollary, the result also holds for probability distributions $p$ that anti-concentrate without any additional requirement on $p-p^{\T}$.  The more general result we have involving anti-concentration of $p-p^{\T}$ is stronger in the sense that it could also be applied to circuits that do not anti-concentrate but for which the truncated approximation captures the bulk in the distribution peaks. 

The core ideas behind this result are as follows:
\begin{enumerate}
	\item Approximation error bounds involving the $l_2$-norm arise from Parseval's theorem and the truncation rule lead to
	\begin{align}
	||\tilde{p} - \tilde{p}^{\T} ||_{l_2 (L^2(\mathbb{G}))} \leq  (1-\gamma)^L  ||p - p^{\T} ||_{l_2 (L^2(\mathbb{G}))}. 
	\end{align}
  	However,  we require an $l_1$ bound to obtain samples via the reduction discussed previously which uses only $O(dn)$ calls to the above algorithm computing $\tilde{p}^{\T}$. This is the only reason we require anti-concentration for the residual (quasi)-distribution. 
  	\item %Suppose that for all $i$ there exists an $M_i$ such that $\Pi_0 W_i\otimes W_i^{*} \Pi_0  W_{i+1} \otimes W_{i+1}^{*}  \, ...  \, W_{i+M_i} \otimes W_{i+M_i}^{*} \Pi_0$ has a support over a polynomial number of terms $|T^{\lambda_0} \>\> \<\< T^{\lambda_0'}|$ and can be efficiently contracted. 
  	Let us pick $L$ to define the truncation parameter. Then pigeonhole principle implies that for any valid path there will be at least one $i \in \{1, ..., D-D/L\}$ such that the path does not branch out when applying $U(g)$ after any of the unitaries in the sequence $W_i$,... $W_{i+[D/L]}$. Assumption 3 then allows us to pick the value of $L$ that enables us to perform the \texttt{contract-full} efficiently. 
  	\item The construction directly uses Algorithm \ref{alg:prob}, with the additional modification of incorporating the error parameters in the same way as was done for the noisy expectation values.%As we sweep through the circuit, we perform contractions at each $i$ into a polynomial-set of terms of the form $|T^{\lambda_0}\>\>\<T^{\lambda_0'}|$. Each of these becomes a root for the bi-directional tree constructions that propagate the operators $T^{\lambda_0}$, $T^{\lambda_0'}$ respectively to the left and right circuit, again by leveraging $\texttt{GTT}$. 
  	
  %	\cc{Discuss how dropping the restriction to the projectors on the 0-space and how to get the scaling from the Heisenberg picture backwards }  	
  %	\item  For $i$ ranging from $1$ to $D$ we do the following procedure
  	%\begin{itemize}
  	%	\item Contract  $\Pi_0 W_i\otimes W_i^{*} \Pi_0  W_{i+1} \otimes W_{i+1}^{*}  \, ...  \, W_{i+M_i} \otimes W_{i+M}^{*} \Pi_0$ to $poly(n)$ terms.  
  	%	\item For each term in the contraction $|T^{\lambda_0}\>\>\<\< T^{\lambda_0'}|$ we consider the object
  	%	\begin{align}
  	%		U_1 W_2... U_{i-1}W_{i-1}\otimes U_1 W_2... U_{i-1}W_{i-1} |T^{\lambda_0}\>\>\<\< T^{\lambda_0'}| 	U_{i+M+1}W_{i+M+1}... U_{D}W_{D}\otimes 	U_{i+M+1}^{*}W_{i+M+1}^{*}... U_{D}^{*}W_{D}^{*}
  	%	\end{align}
%  	\end{itemize}
\end{enumerate}

%\begin{lemma} {\bf{Sampling to computing reduction}} Let  $\tilde{p}$ be a probability distribution on $\mathbb{Z}_d^{\times n}$.  Suppose that there exists an oracle computing the function $\bar{q}:  \mathbb{Z}_d^{\times n} \rightarrow \mathbb{R}$  and its marginals $\bar{q}(x_1,...,x_k) = \sum_{x_{k+1},...x_{n} \in \mathbb{Z}_d} \bar{q}(x_1,.., x_n)$ such that $||\tilde{p}-\bar{q}||_1\leq \epsilon$. If $\sum_{x} \bar{q}(x) =1$ then there is an algorithm that uses $O(dn)$ calls to the oracle to sample from a probability distribution $q$ with $||\tilde{p}-q||_1 \leq 2\epsilon$. 
%	\label{lemma:samplingtocompute}
%\end{lemma}

\section{Examples of polynomial-time classical algorithms for simulating families of noisy circuits} 
Here we look at a series of noisy circuit classes to show how existing polynomial-time classical algorithms involving operator truncations can be described within our framework. We also show how our results can lead to improved algorithms that apply to a Pauli basis expansion and general Pauli channels with a spectral gap. %show how a series of existing algorithms involving operator truncations fit into our framework. 

\subsection{Pauli path decompositions for qubits}
Here we show how the framework applies to the situation where we consider $n$ -qubit system with the basis operators consisting of the normalised $n$-qubit Pauli operators forming the quotient group
\begin{align}
	\P_n = \{ \frac{1}{\sqrt{2^n}} P_1\otimes... \otimes P_n : P_i \in \{I, X,Y,Z\} \}
\end{align}

In particular, if the $W$-type of unitaries are Clifford then we see that the efficient contractibility assumptions will always hold as follows.  First, we have that the projector $\Pi_0$  is given by  $\Pi_0 = \sum_{P^0\in \mathcal{P}^{0}} |P^0\>\>\<\< P^0| $ where the sum is taken over all $n$-qubit Pauli operators $P^0$ that satisfy  $ U(g) P^0 U\hc(g) =  P^{0}$, which forms a subgroup $\P^0$ of $\P_n$. In particular, this will always include at least the (normalised) identity element $|\frac{I}{\sqrt{2^n}}\>\>$.  In general $\P^{0}$ will have a set of at most $k\leq 2n$ generators $\P^0  = \< P^0_1,..., P^{0}_k\>/\{\pm 1 , \pm i \}$ with a total number of  $2^k$ elements. % Any $W_i$ that maps this subgroup to itself so that $W_i \P_0 W_i  = \P_0$ will not impose additional constraints and will not reduce the size of the support. 

The  main restriction on the circuit structure assumed by our results are as follows. First, we have that all trivial irreps in $U_i\otimes U_i^{*}$ must be spanned by Pauli operators. Then, if $W_i$ unitaries are Clifford then we see that all assumptions we outlined are guaranteed. For any Pauli operator
\begin{align}
	\Pi_0 W_i \otimes W_i^{*}  |P \>\> =  \pm |P_0\> \>   \     \delta_{P_0,  W_iPW_i\hc} 
\end{align}
can always be performed efficiently in at most $O(n^2)$.  Furthermore, this also always satisfies the controlled tree growth condition. In our notation, for Clifford operations $s=1$. Therefore, assuming look-up access $\O(1)$ to the evaluation of irreducible matrix coefficients, the results on computing expectation values with respect to a Pauli basis decomposition leads to a complexity of $\O(r(O) n^2 L d_{max}^L )$  in order to determine truncated noisy expectation values that approximate the noisy values up to a $e^{-\gamma L}$-decaying bound. Therefore, if $L = O(1)$ then we get a polynomial-time algorithm for approximating noisy expectation values and if $L=O(log(n))$ we get a polynomial-time algorithm whenever maximal irrep dimensions are independent of $n$ and quasi-polynomial when $d_{max} = O(poly(n))$.  

Second, the sampling results  may require an additional Assumption 4 which involves contractions of the form 
\begin{align}
	\label{eqn:contraction-Pauli}
	\Pi_0 W_i\otimes W_i^{*} ... \Pi_0 W_{j} \otimes W_{j}^{*} \Pi_0   = \hspace{-0.3cm} \sum_{P \in \P^{0}}  |P\>\> \<\< W_j\hc ... W_i\hc P W_i ...W_j|  
\end{align}
which can also be performed efficiently, where the sum is constrained to operators $P\in \P^0$ that also must satisfy $W_{i+s}\hc ... W_i \hc P W_i ... W_{i+s} \in \P_0$ for all $s\leq j-i$. In general these series of constraints can reduce the number of valid terms $P\in \P^0$ in the above decomposition. This can be efficiently tracked using the generators of $\P^{0}$. We have that for any $P\in \P^0$  and a Clifford unitary $W_i$, then the Pauli operator (up to quotienting a phase)  $W_i\hc P W_i$ lies in $\<W_i\hc P^{0}_1 W_i,... W_i\hc P^{0}_k W_i\>/\{\pm 1, \pm i\}$. If any of the new generators lie outside of  $\P_0$ then the effect of projecting onto it via $\Pi_0$ is to remove any such generators, thus the number of total terms halves with every generator we remove. This procedure is then repeated for all $W_i,W_{i+1} ... W_j$, for $j$ large enough so that we remain with $O(log(n))$ generators leading to  a total number of terms  $r(i) = poly(n)$ in Equation~\ref{eqn:contraction-Pauli}. This is will not be necessarily always guaranteed - and will depend on the structure of the circuits. In certain situations, as we sweep through the circuit the contraction could lead to the projector $|\frac{I}{\sqrt{2}}\>\<\frac{I}{\sqrt{2}}|$ - which trivialises any further propagation to the left or right.  

Finally, we remark that all of the above similarly holds should we have different trivial subgroups of the Pauli group for the different unitaries $U_i(g)$. This leads to different projectors $\Pi_0^{(i)}$ at each location so it simply becomes a matter of keeping track different generators and constraints with additional labels - at most this would incur an $\O(2n D)$ memory overhead.

%For any $P\in \P^0$  then to have $W_i\hc P W_i \in \P^0$ it means that $W_i\hc P^{0}_a W_i$

%As a subgroup of the Pauli group $\P^{0}$ will have a set of at most $2n$ generators. Any $W_i$ that maps this subgroup to itself so that $W_i \P_0 W_i  = \P_0$ will not impose additional constraints and will not reduce the size of the support. 

\subsection{Noisy random circuit sampling}
In this section we describe polynomial-time algorithms for noisy random circuit sampling. This includes the  poly-time algorithm in \cite{aharonov2023polynomial} that can  be viewed as a particular example of the framework developed here, but also we present a new construction based on a different truncation method that allows generalisation to gapped two-qubit Pauli noise models. First, we show a connection between products of irreducible matrix coefficients and random circuits that may be of broader interest, and arises from the harmonic analysis we described previously. % We also show how the 
%Furthermore, we keep things more general in our assumptions about the noise model, and therefore we obtain that the result applies to any Pauli noise model, thus extending the previous work that assume a local depolarising channel. This is also consistent with similar results for expectation values of Pauli observabels with respect to random circuits  \cc{check Eisert Quek recent paper}.

\subsubsection{Harmonic analysis of SU(4)-random circuits}
For random circuit sampling on $n$ qubits, the underlying group we are looking at is $SU(4)$ (see the following review of random circuit sampling \cite{hangleiter2023computational})and first we look at the defining fundamental representation $U(g) = g$ on two qubits $\h \approx {\mathbb{C}^{2}}\otimes  \mathbb{C}^{2}$ which is given by matrix multiplication for $g\in SU(4)$. The representation $U\otimes U^{*}$  we are interested in decomposes into two irreducible components  $U\otimes U^{*} = V^0\oplus V^{1}$ where $V^0$ is the trivial representation and $V^{1}$ is the adjoint irreducible representation, which has dimension $d_1 = 15$.  The $0$-irrep ITO is given by the identity $T^0:=\frac{1}{2} I\otimes I$ and for the $1$-irrep we can choose an orthonormal irreducible tensor operator basis  $\{T^{1}_{k}\}_{k=1}^{15}$ formed of products of (normalised)single qubit Pauli operators where at least one of them is not the identity. These will transform under $g \in SU(4)$ as $ g T^{1}_k g^{-1}  = \sum_{k'} v^1_{kk'}(g)T^{1}_{k'}$ and $g T^0 g^{-1} = T^0$. The irreducible matrix components $v^{1}_{kk'}(\cdot)$ can be given in terms of special functions for suitable parametrisations of group elements in $SU(4)$ or from Lie algebraic analysis as detailed in Appendix ~\ref{app:ITOU4}.

The random circuits  $C(\g)$ consist of a total of $D$ two-qubit unitaries randomly chosen with the uniform Haar measure on $SU(4)$ according to $g_1,..., g_D \in SU(4)$.  The pairs of qubits on which these unitaries act on is fixed, and this is captured by the permutations $W_i$ applied after each random unitary. 
 
In terms of the $n$-qubit Pauli operator basis we have that for any (normalised) Pauli operator $P\in \mathcal{P}_{n-2}$ on $n-2$ qubits terms of the form  $T^{\lambda}_k \otimes P$ transform as a $(\lambda,k)$ ITO and the multiplicities are captured by all $P$ so that
\begin{align}
	(U(g) \otimes \mathbb{I}) \otimes (U(g)^{*} \otimes \mathbb{I}) |T^1_k \otimes P\>\>  = \sum _{k'}  v^{1}_{k'k} (g)  |T^1_{k'} \otimes P\>\> .
\end{align}
Similarly  $|T^{0} \otimes P\>\>$ is left invariant for all $g$,  so the projector onto the subspace on which $U\otimes U^{*}$ acts trivially is given by
\begin{align}
	\Pi_0 : = \sum_{P\in \P_{n-2}} |T^{0}\otimes P \>\>\< \< T^{0} \otimes P|
\end{align}
Furthermore we can also define the projector onto the 1-irrep isotypical component
\begin{align}
\Pi_1 : = \sum_{P\in \P_{n-2}}  \sum_{k,k'} |T^{1}_k \otimes P\>\>\<\< T^{1}_{k'} \otimes P| \otimes |k'\>\<k|,
\end{align}
where we note that $\Pi_0 + \tr_{2}( \Pi_1) = \mathbb{I}$ and that the second system is a space of dimension $15$, viewed as the carrier space for the 1-irrep.  
Then the Fourier modes for the random circuit $C(\g)$ at the inequivalent irreps (i.e ``frequencies") $\boldsymbol{\lambda}  = (\lambda_1,... ,\lambda_{D})$ of $U(4)^{\times D}$ labelled by $\lambda_i \in \{0,1\}$ corresponding to the trivial and adjoint irrep respectively are 
\begin{align}
	\widehat{C\otimes C^{*} } (\boldsymbol{\lambda}) = \Pi_{\lambda_1} W_1 \otimes W_1^{*} \Pi_{\lambda_2} \  ...  \  \Pi_{\lambda_D} W_D\otimes W_D^{*}.
\end{align}
We understand the above expression as a matrix on the space representing the $n$ qubits and carrier space for the irrep $v^{\lambda_1} \otimes... \otimes v^{\lambda_D}$, specifically $\h_n\otimes \h_n^{*} \otimes \mathbb{C}^{d_1} \otimes ... \mathbb{C}^{d_{\lambda_D}}$.
Finally, the random circuit $C(\g)$ can be expressed in terms of the Fourier modes and the products of irreducible matrix coefficients  $v^{\boldsymbol{\lambda}}_{\bf{k}, \bf{k'}} (\g):= v^{\lambda_1} _{k_1,k_1'}(g_1) ... v^{\lambda_D}_{k_D, k_D'}(g_D)$ as
\begin{align}
	C(\g)\otimes C^{*}(\g)  =   \sum_{\boldsymbol{\lambda, {\bf{k}}, {\bf{k}'}} } v^{\boldsymbol{\lambda}}_{\bf{k}, \bf{k'}} (\g)  \tr_2 [\widehat{C\otimes C^{*}}(\boldsymbol{\lambda}) \mathbb{I}\otimes |\bf{k'} \> \<\bf{k} | ]
\end{align}
As we have seen this before, the harmonic decomposition allows us to separate circuit structure from any parameter-dependency.

\subsubsection{Efficient simulation algorithm for truncated Fourier series}
The number of components in the Fourier mode expansion can be controlled through the number of irreps that are non-trivial in $\boldsymbol{\lambda} = (\lambda_1,..., \lambda_D)$.  Let's define the following truncation set
\begin{align}
	\T = \{ \boldsymbol{\lambda} :  \sum_{i} \lambda_i \leq L \}.
\end{align}
This condition implies that all paths in the truncation must satisfy the following: there exists $i \in  \{1,D- [D/L]\}$ such that  $\lambda_i  = \lambda_{i+1} = ... \lambda_{i+[D/L]} = 0$.
Furthermore, we look at the object $\Pi_0 W_{i} \otimes W_{i}^{*} ... W_{i+D/L -1 } \otimes W_{i+D/L -1 }^{*} \Pi_0 =  \sum |T^0\otimes P\>\>\<\< T^{0} \otimes P' | $, where the summation involves a \emph{restricted} set of $r(i)$ pairs of  Pauli operators $P$ and $P'$.  Note that $\Pi_0$ involves $4^{n-2}$ terms and each of the $W_i$'s is a fixed permutation that swaps the first two qubits with another pair of qubits. This means that if we look at $\Pi_0 W_i \otimes W_i^{*} \Pi_0$ this will involve a number of terms $4^{n-4}$ if the two qubit pairs do not overlap or $4^{n-3}$ if the two qubit pairs overlap on one qubit.   What this means is that in general we obtain $r(i)$ by counting the number of of distinct qubits involved in the unitaries - namely if $W_i$ acts on qubit pairs $(a_i, b_i)$ and if there are $(n-c-2)$ distinct qubits involved in $W_i$,... $W_{i+D/L-1}$  for some constant or at most log-scaling $c$ then $r(i)  = 4^c$. % \leq 4^{n- D/L-2}$.  Is this true?}  
In the example considered here we have $s=1$ as the $W$'s are permutations and $d_{max} = 15$, the dimension of the 1-irrep. Then, the assumptions in our framework  (in the main general sampling result)  require  $r(i) = poly(n)$ for all $i\in  \{1, D- D/L\}$ and $L $ at most $\O(log(n))$ in order to obtain a simulation algorithm that has polynomial-time scaling in the number of qubits. If these hold, then the complexity of the simulation algorithm that computes the truncated (quasi)-probability distribution and all its marginals $p^{\T}(\bf{ x})$ will be $\O(n^2) \O(L 15^{L} ) \sum_{i=1}^{D-D/L} r(i)  = \O(D)\O(poly(n)) \O(L \,15^{L})$. In the most general case, these conditions will be met whenever $D = O(n L)$ %$D = O(n log(n))$, 
 which results in an overall complexity $\O(poly(n) ) \O(L15^L)  =\O(poly(n))$ when $L$ is at most  $\O(log(n))$ or equivalently for random circuits consisting of at most $\O(log(n))$ layers.
 
 %, as this implies we can pick $L \leq  $ 

% we can say that if $L = O(log(n))$ and $D/L = n - O(log(n))$ then we'd expect the conditions to hold whenever $D = \O(n \, log(n))$ for some value of $L$ that scales logarithmically in the number of qubits. 

%Equivalently, this means we are considering circuits consisting of $\O(log(n))$ layers. 
In the above, we have made no assumptions on the connectivity or on the repeating pattern of layers dictated by the $W$'s - generally this structure will impose a bound on the degree of the polynomial for each $r(i)$ and that can lead to improved complexity compared to the worst case analysed.

\emph{Example 2D random circuits}  

Let's consider circuits with a lattice architecture %$n = n_1 \times n_2$
applying a random unitary between qubits on all the $n_{edges}$ edges. %$n_{edge} = 2n_1n_2-n_1-n_2 $. 
These are repeated for $n_{layers}$ . In the notation used in this framework, such circuits correspond to $C(\g) = U(g_1) W_2... U(g_D) W_D$  with with $D = n_{layers}n_{edges}$ and where the random unitary $U$ is applied  on the edge $(1,2)$ between first and second qubit and $W$ unitaries consist of  $SWAP_{(a, 1)}  SWAP_{(b,2)}$ for every edge $(a,b)$ ($a<b$).
We now discuss how to deal with contractions of the form
$\Pi_0 W_i \Pi_0 ... W_{i+M} \Pi_0 $.  The projector $\Pi_0$ involves the Pauli operators generated by the single qubit Pauli's $\< X_{3},... X_{n}, Z_3,..., Z_{n}\>$ The effect of the swaps removes $X_{a}, Z_a , X_b , Z_b$ from the generator set. Therefore if $W_{i},..., W_{i+M}$ involves all qubits we have $\Pi_0 W_i \otimes W_i^{*}\Pi_0 ... W_{i+M} \otimes W_{i+M}^{*}\Pi_0  \propto  |I\>\>\<\<I| $% |\frac{I}{2^n}\>\>\<\< \frac{I}{2^n}|$ - 
 so it contracts trivially. Otherwise if they involve all but a set of qubits $i_1, ..., i_c$ then 
\begin{align}
\Pi_0 W_i  \otimes W_i^{*} ... \Pi_0  &= \sum_{P,P' \in \< X_{i_1}, Z_{i_1},... X_{i_c}, Z_{i_c} \> }| P\>\>\<\<P' |,
\end{align} 
which consists of $4^c$ terms.  

Just to have an example of a structure to work with we consider $n =2n_1 \times 2n_1$ qubits with each layer consisting of four depth-1 layers such that the first and second layer have the edges in the x direction and third an fourth layer have the edges in the y direction.  Each depth-1 layer involves  either all $n$ qubits and $2n_1^2$ unitaries or $n- 4n_1$ qubits and $2n_1^2-2n_1$ unitaries. Therefore, if we pick sequential $W$-unitaries with $W_i,..., W_{i+M}$ with $M = 4n_1^2-2n_1-c$ then each of the contractions, as we sweep through the circuit for  $i = 1, ..., D-M$, will reduce to at most $4^c$ terms, for a fixed constant $c$.
Now for each such Pauli $P$ we are propagating it forwards through $W_{i+M+1} ... U(\g_D)W_D$ and backwards through $ U(\g_1)W_1.... U(\g_{i-1})W_{i-1}$ to construct trees of height at most $L \geq  D/M = n_{layers} \frac{8n_1^2-4n_1}{4n_1^2-2n_1-c}$ %[\frac{n_{layers} }{1- (c+4)/8n_1}]$ 
which corresponds to including paths that branch-out at most $L $ times. Therefore  each of the trees will involve $15^L$ branches and each can be evaluated in $O(L)$ time. The number of such trees we need to construct is going to be at most $ 2\cdot 4^c (D-M)$, with many of the contractions being trivial for this particular connectivity.  %which reduces to $2*4^c \O(\sqrt{n})$ trees,% where each has a memory footprint of $\O(LD)=  \O(n^2 \log(n))$.
\subsubsection{Noisy circuits and  approximation bounds} 
Consider a noise model where the 2-qubit random unitary is followed by a 2-qubit Pauli channel $\E$ with spectral gap $\gamma$. This means that the identity on both qubits is the only basis fixed by the noise channel.  All the paths outside of the truncation set defined previously will have Fourier modes with magnitudes contracted by $(1-\gamma)^{L}$. 
This will imply that we can use the previous simulation algorithm -  where the tree structures now take into account the noisy coefficients instead given by $e_{k} v^{1}_{kk'}(\g) $ with $|e_{1,k}|\leq (1-\gamma)$ the eigenvalue corresponding to the two-qubit Pauli labelled by $k$.   This will produce a truncated noisy probability distribution $\tilde{p}^{\T}$ that approximates the noisy distribution $\tilde{p}$ within a bound that decays as  $(1-\gamma)^{L} = \frac{1}{n^{\O(1/\gamma)}}$.  
In the situation where the connectivity of the random circuits match the topology of the device then the $W$ operators are not noisy so there will be no additional decay, however generally noisy swaps are expected to further decay the Fourier modes, in which case the previous bounds will not be tight. 

\subsubsection{Connections with other poly-time algorithms for noisy random circuits}
 A poly-time algorithm for noisy  RCS is described in the seminal work \cite{aharonov2023polynomial}. Here we see how that fits into our framework, for a basis decomposition also based on the Pauli operators.  First, the noise model considered there is a single qubit depolarising channel with error $p$ and the truncated set of paths are based on the total Hamming weight of each path.   With our notation, the two-qubit noise model will have eigenvalues $|e_{0,0}|=1$ for the identity Pauli basis and $ |e_{1, k}| =  (1-3/4p)$ whenever $k$ labels  $I\otimes P$  or $P\otimes I$, with $ |e_{1, k}| =  (1-3/4p)^2$ otherwise. This means that for each path we have a bound of
\begin{align}
| e_{\lambda_0,k_0} ....  e_{\lambda_D, k_D} | \leq (1-3/4p)^{Hamming(\boldsymbol{\lambda},\bf{k})}
\end{align} 
This justifies the connection between the truncation parameter $L$ and total Hamming weight of a full Pauli path, as explored in \cite{aharonov2023polynomial}. The approximation error bounds can also be viewed as a consequence of Parseval's theorem, with the mention that the orthogonality of paths employed in this previous work can in fact be attributed to the Schur's orthgonality relations that arises as a consequences of the Peter-Weyl theorem.  
We note how the enumeration algorithm used in \cite{aharonov2023polynomial} is different than the algorithmic constructions we have here. 

\subsection{Mean values of U(1)-parametrised noisy circuits } \subsubsection{Connection with LOWESA}
We discuss next how the results and the polynomial-time algorithm for expectation values of noisy circuits from \cite{fontana2023classical} can also be viewed within the framework here, provided we extend to decompositions involving \emph{real} irreducible representations, which come with some modifications arising from the decomposition of the group algebra $\mathbb{R}[G]$ outlined in Appendix~\ref{app:realU1}.  The class of noisy circuits considered there are parametrised by angles $g_i \in [0,2\pi]$ with $\g \in [0,2\pi]^{\times D}$ and compiled into a Clifford and $Z$-rotations with the task to estimate expectation values $O$ that decompose into at most $r(O) = poly(n)$ Pauli operators on $n$ qubits.

With the notation here, the unitary $U(g)= e^{iZg}$ for $Z$ - single qubit Pauli operator can be viewed as a representation of the $U(1)$ group on single qubit $\h = \mathbb{C}^{2}$. The $0$-irrep ITOs will occur with multiplicity so that $T^0_{1} = \frac{1}{\sqrt{2}} {I}$ and $T^0_{2} = \frac{1}{\sqrt{2}} {Z}$ transform trivially  $U(g) T^{0}_{k} U\hc(g) = T^{0}_{k}$ for all $g\in U(1)$.  Finally the Pauli operators $T^{1}_1  = X/\sqrt{2}$ and $T^{1}_2 = Y/\sqrt{2}$ will form a 2-dimensional \emph{real} irreducible representation, that we label by $\lambda= 1$ with matrix coefficients given by trigonometric polynomials $\cos{2g}$ and $\sin{2g}$. On the full $n$-qubit system, the $\lambda=1$ irreps will occur with multiplicity so that $\{|X\otimes P\>\>, |Y\otimes P\>\>\}_{P\in \P_{n-1}}$, and similarly basis operators that transform as $\lambda=0$  irrep are given by $\{|I\otimes P\>\>\}_{P\in\P_{n-1}}$ and $\{|Z\otimes P\>\>\}_{P\in\P_{n-1}}$.

The class of circuits will then consist of fixed Clifford unitaries $W_i$, that will also include a potential swap operation between the first qubit and any other qubit; just as as before the swap is noise-less as it is just a convenience for notation and labelling.  However, the  remaining  Clifford operations can be noisy. %These satisfy the conditions for efficient contractible $W$'s 
The truncation used in \cite{fontana2023classical} relies on including paths $(\boldsymbol{\lambda},{\bf{k}})$ that branch out in at most $L$ locations, or equivalently all the valid paths that have at most $L$ irreps equivalent to $\lambda=1$. 
In our notation, it corresponds to  
\begin{align}
\T = \{ \texttt{validate}((\boldsymbol{\lambda},{\bf{k}}) )  = True  \   \   \&  \sum_{i} \lambda_i \leq L \  \},
\end{align}
where the validation comes from starting out from an output Pauli operator, that needs to be in the support for the decomposition of the observable $O$, and propagating it through the circuit backwards (Heisenberg picture). Furthermore, the Clifford unitaries $W$'s can be efficiently contracted in $O(n^2)$ time, which allows to construct trees of paths with height at most $L$. Therefore \emph{Result 1} applies leading to the polynomial-time algorithm from \cite{fontana2023classical} for expectation values of noisy parametrised circuits.
%\cc{Discuss how we deal with noise }
   %at one Pauli operator in the support of the expectation value $O$.
%\subsubsection{Generalised circuit classes of parametrised noisy circuits}

%\subsubsection{Generalised simulation technique with mid-circuit contractions}

\section{Further generalisations} 
In the discussion so far we have considered circuits $\mathfrak{C}_n$ that contain unitary representations of a compact or finite group $G$, and employed the decomposition of the space of integrable functions on the group $L^2(G)$ into irreducible components to develop a harmonic analysis for these quantum circuits. The framework can also be applied even when the parametrised unitaries do not form a group but carry the structure of a homogeneous space - or equivalently when there is a transitive group action on the parameter space. The results that we have used could be applied to smaller parameter spaces such as group quotients $G/H$. In such a situation, the only change is that the irreducible matrix coefficients are replaced by zonal spherical functions \cite{klimyk1995representations} and that the average approximation errors are taken over the space of $L^2(G/H)$ with an integration measure on the homogeneous space $G/H$.  Specifically, for $H$ a closed subgroup of $G$ and $\mu$ a normalised $G$-invariant measure on $G/H$ the Parseval's theorem generalises to
\begin{align}
	\sum_{\lambda \in  \widehat{G/H}} c_{\lambda}|| \hat{f}(\lambda)||^2_2 = ||f||^2_{L^2(G/H)}
\end{align}
for complex-valued functions $f : G/H \longrightarrow \mathbb{C}$ and norm $||f||^2_{L^2(G/H)} = \int |f(gH)|^2 d\, \mu (gH)$, with the integration over coset elements $gH$.  The inequivalent irreps $\lambda\in \widehat{G/H}$ correspond to irreps $v^\lambda$ of $G$ for which $\int_H v^{\lambda}(h)dh \neq 0$. The $c_\lambda$ corresponds to the multiplicity of $\lambda$-irrep in $L^2(G/H)$; for example if $H$ is a massive subgroup then $c_\lambda=1$. This ensures that the average approximation error between the noisy and truncated noisy quantities (observables or output probability distributions) can be bounded in an analogous way for parameters $g_i \in G/H$ as we have done for compact groups. In particular, an application of this extension is that the results we have for noisy random circuits consisting of two-qubit gates drawn according to the Haar measure on $SU(4)$ also hold for $SU(4)/H$ for closed subgroups $H\subset SU(4)$. This covers practical situations where the gates are randomly chosen from a restricted gate set given by native operations of the quantum device.

Another direction in which the framework directly applies is the situation where unitaries at different locations are parameterised by different groups $G_i$ so that we will need to  consider a harmonic analysis over $\mathbb{G} =G_1\times... \times G_{D}$.  The main difference is that we would need to carefully consider operator basis expansions at different locations that are both suitable for the action of the local group $G_i$ and for which transition matrix coefficients for the fixed $W$ operators can be efficiently evaluated. 
%however extensions of Peter-Weyl's theorem

Finally, we emphasise again that the main algorithmic construction we rely on for evaluating the sparse truncated Fourier decomposition is based on projectors onto the trivial subspaces of $U_i\otimes U_i^{*}$. However, one could have multiple irreducible subspaces that are 1-dimensional and are inequivalent to the trivial. In such cases the growth of complexity due to the $U_i$ unitaries is also controlled  because the path basis expansion does not incur additional branches. So, we can analogously consider multiple projectors onto not only the trivial subspace but also other 1-dimensional irreps $\lambda_0$  with characters $\chi_{\lambda_0}(g)$ that for the complex case correspond to roots of unity.

\section{Discussion}
There are three largely modular contributions in this work. First, we describe how classes of unitaries with a structure that encodes a group action can be viewed in terms of harmonic analysis. This provides a way to separate the group parameter dependencies that is captured by matrix coefficients of irreducible representations from the fixed structure that is captured by the generalised Fourier coefficients. Furthermore, we show this analysis is similar to decomposing the vectorised sequence of unitary channels into multiple paths with respect to a symmetry-adapted basis of operators. This provides a perspective on random circuits  that allows to separate the randomness (i.e which particular gates are chosen from a given distribution) from the fixed architecture (i.e where those gates are applied). % [\cite{Lloyd, gsim]}]

Second, we discuss how truncated Fourier coefficients can be evaluated from tree structures that enumerate the valid paths. For our purposes, the tree structures were enough to establish a connection between particular truncations and the number of valid paths. In general, one cannot guarantee convergence of the truncated Fourier decomposition when increasing the truncation parameter. This presents the challenge of identifying which Fourier coefficients should be kept in the truncation to ensure a target approximation error. It also raises the question of what are the conditions on the unitaries that can lead to an approximation by a polynomial number of terms.  

Thirdly, we consider the effects of noise and show that gapped error models local to the unitaries parameterised by group elements effectively induce a truncation that allows to bound the approximation error. Combined with the tree-based algorithm that evaluates all the valid paths in the truncation this guarantees a polynomial-time simulation for sampling noisy unitaries and computing noisy expectation values.

The main results on classical simulability of tasks for noisy circuits involve bounds on average errors between noisy quantities and their truncated noisy approximations.  However, these do not necessarily carry over when considering how well the noisy values are approximated by the truncated noiseless quantities. This point also applies to previous works constructing polynomial-time algorithms for random circuit sampling \cite{aharonov2023polynomial} and expectation values of parameterised circuits \cite{fontana2023classical} covered within this framework. This means that we obtain provable guarantees on the efficiency of classical simulation algorithms in a situation where the parameters of the error model are incorporated into the operator truncation. 

Finally, we remark that the approximation guarantees we derive are with respect to mean square errors averaged over the entire parameter space. Whilst we have seen the classes of unitaries may be further restricted by generalising to parameters representing elements of a coset, the bounds arising from Parseval's theorem still involve average errors, albeit over a smaller parameter space.  For sampling tasks this can be well motivated when we select unitaries (or circuits) randomly from a given ensemble.  However, as we are considering circuits constructed from unitary representations the results here also apply to correlated parameters - as long as the different parametrised layers $U_i(g_i)$ are still independent.

We have discussed throughout how the framework here generalises the results of \cite{aharonov2023polynomial, gao2018efficient} and \cite{fontana2023classical} dealing with noisy circuits.  In the noiseless setting, several other works actively use the underlying group structures and symmetries to investigate classical simulability of evaluating expectation values when the circuits involve a symmetry principle \cite{anschuetz2023efficient}.  The simulation framework of $\mathfrak{g}$-sim  \cite{goh2023lie} uses dynamical Lie algebras to compute basis expansions of evolved observables for parametrised circuits by identifying small invariant subspaces for the simulation to be tractable.  %. Structure constants can be used for   % parametrised unitaries    
While no trucation and only parameterised unitaries are involved, it can be viewed in our framework as an example where $U_i(g)$ are the defining representation of the dynamical Lie group whose action transforms a fixed algebra basis in the adjoint representation.  Finally we note that previous work \cite{nemkov2023fourier} considered trigonometric Fourier series of mean values in variational algorithms in the context of classical simulation. The relation between Fourier series and parametrised circuits has also been used in the context of quantum machine learning. This work describes a formalism to generalise these connections.  %relate parametrisation of circuit structure and 

In the late stages of preparing this manuscript, related work \cite{schuster2024polynomial}, \cite{gonzalez2024pauli} was published that uses Pauli-based path decompositions for classical simulations of noisy circuits and \cite{angrisani2024classically} in the noiseless case.  It will be interesting to compare if the approximation error guarantees obtained in the noiseless case for expectation values of random circuits \cite{angrisani2024classically} can also be related to the regularity conditions that ensure convergence of the truncated Fourier approximation. 

\section{Acknowledgements}
I would like to express my gratitude and thanks to Pablo Andres-Martinez, Fred Sauvage and David Amaro for feedback that helped to improve the manuscript and to Ross Duncan for supporting this research direction.  Many thanks also to Peter Burgisser for suggesting the reference \cite{burgisser2000computational}.  I acknowledge support from the Simons Institute, to visit during the ``Summer Cluster on Quantum Computing" programme. 
\bibliography{main}
\bibliographystyle{unsrt}

\clearpage
\onecolumngrid
\appendix
%\pagenumbering{num}

\section{Mathematical background}
\subsection{Matrix coefficients of irreducible representations}
For any $\lambda\in \hat{G}$ with unitary irreducible representation $V^{\lambda}$, its matrix coefficients with respect to a fixed orthonormal basis $\{\ket{i}\}_{i=1}^{\rm{dim}(\lambda)}$ for the carrier space are given by:
\begin{equation}
	v^{\lambda}_{ij} (g) := \<j| V^{\lambda}(g) |i\>.
\end{equation}
These can be viewed as complex valued functions on the group $v^{\lambda}_{ij}:G\longrightarrow\mathbb{C}$. Moreover they satisfy several properties, particularly Schur orthogonality which we will use extensively.
\begin{lemma}
	Let $G$ be a compact group. Then the following hold:
	\begin{enumerate}
		\item For any $\lambda,\mu \in \hat{G}$
		\[ \int_{G}v^{\lambda}_{ij}(g) (v^{\mu}_{mn}(g))^{*} d\,g = \frac{\delta_{\mu,\lambda}\delta_{im}\delta_{jn}}{\rm{dim}(\lambda)} \]
		\item The dual $\lambda^{*}$ of an irreducible representatation $\lambda$ is irreducible and
		\[ v^{\lambda^{*}}_{mn}(g) =(v^{\lambda}_{mn}(g))^{*}\]
	\end{enumerate}
\end{lemma}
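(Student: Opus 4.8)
The plan is to recognise that this is the Schur orthogonality relations for a compact group, and to prove it by the standard averaging trick combined with Schur's lemma. The two essential inputs are the existence of a bi-invariant normalised Haar measure $dg$ on $G$ (so that integration is both left- and right-invariant and assigns total mass $1$), and the fact that—since every finite-dimensional representation of a compact group is unitarisable—we may take the irreps $V^\lambda$ to be unitary, as is already assumed in the body. First I would reduce the $L^2$ integral to a single matrix element of an averaged intertwiner. Concretely, for a linear map $A:\K_\mu\to\K_\lambda$ define $\Phi(A):=\int_G V^\lambda(g)\,A\,V^\mu(g^{-1})\,dg$, and note that the defining convention $v^\lambda_{ij}(g)=\langle j|V^\lambda(g)|i\rangle$ together with unitarity $(v^\mu_{mn}(g))^{*}=\langle m|V^\mu(g^{-1})|n\rangle$ gives
\begin{equation}
\int_G v^\lambda_{ij}(g)\,(v^\mu_{mn}(g))^{*}\,dg = \langle j|\,\Phi(|i\rangle\langle m|)\,|n\rangle .
\end{equation}

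Next I would show $\Phi(A)$ is an intertwiner and invoke Schur. A change of variables $g\mapsto hg$ using left-invariance of the Haar measure yields $V^\lambda(h)\Phi(A)=\Phi(A)V^\mu(h)$ for all $h\in G$. Schur's lemma then splits into two cases: if $\lambda\not\cong\mu$ the only intertwiner is $0$, producing the factor $\delta_{\mu\lambda}$; if $\lambda=\mu$ then $\Phi(A)=c(A)\,\iden$ for a scalar $c(A)$ because the carrier space is complex. Taking $A=|i\rangle\langle m|$ and using cyclicity of the trace with $V^\lambda(g^{-1})V^\lambda(g)=\iden$ gives $\Tr\Phi(A)=\Tr(A)=\delta_{im}$, while $\Tr(c\,\iden)=c\,\dim(\lambda)$, so $c=\delta_{im}/\dim(\lambda)$. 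Substituting back, $\langle j|\Phi(A)|n\rangle=c\,\delta_{jn}=\delta_{im}\delta_{jn}/\dim(\lambda)$, which is exactly the claimed identity.

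For the second part, I would define the dual representation on the conjugate space by $V^{\lambda^{*}}(g):=\overline{V^\lambda(g)}$, which for unitary $V^\lambda$ coincides with $(V^\lambda(g^{-1}))^{T}$ and is manifestly a representation; reading off entries immediately gives $v^{\lambda^{*}}_{mn}(g)=(v^\lambda_{mn}(g))^{*}$. Irreducibility of $\lambda^{*}$ I would establish directly from the paper's criterion: given orthogonal $|\psi\rangle,|\phi\rangle$, their componentwise conjugates $|\bar\psi\rangle,|\bar\phi\rangle$ are again orthogonal since $\langle\bar\phi|\bar\psi\rangle=\overline{\langle\phi|\psi\rangle}$, so irreducibility of $\lambda$ supplies a $g_0$ with $\langle\bar\phi|V^\lambda(g_0)|\bar\psi\rangle\neq0$, and conjugating this scalar gives $\langle\phi|V^{\lambda^{*}}(g_0)|\psi\rangle\neq0$, as required.

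I do not expect a genuine conceptual obstacle here—the statement is classical and follows entirely from Schur's lemma and Haar averaging. The only real care needed is bookkeeping: the paper's convention $v^\lambda_{ij}=\langle j|V^\lambda(g)|i\rangle$ transposes the usual index order, so I must track which index is the ket and which is the bra throughout, and I should be explicit that unitarity (hence compactness) is what converts the conjugate $(v^\mu_{mn})^{*}$ into a matrix element of $V^\mu(g^{-1})$, which is in turn what lets a single Haar average produce the intertwiner $\Phi(A)$.
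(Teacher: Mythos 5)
Your proof is correct. Note that the paper itself does not prove this lemma at all: it is stated as standard background (the classical Schur orthogonality relations, with \cite{goodman2009symmetry} and \cite{klimyk1995representations} serving as the references for this material), so there is no in-paper argument to compare against. Your route is the standard textbook one — Haar-averaging $A\mapsto\int_G V^\lambda(g)\,A\,V^\mu(g^{-1})\,dg$ to build an intertwiner, Schur's lemma to reduce it to $0$ or a multiple of the identity, and the trace to fix the scalar — and you have handled the two points where care is actually needed: the paper's transposed index convention $v^\lambda_{ij}(g)=\langle j|V^\lambda(g)|i\rangle$, and the use of unitarity to turn $(v^\mu_{mn}(g))^{*}$ into $\langle m|V^\mu(g^{-1})|n\rangle$ so that a single group average produces the intertwiner. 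The argument for part 2 via componentwise conjugation and the paper's own irreducibility criterion is likewise sound.
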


 For general groups ($GL_{n}(\mathbb{C})$, $U(n)$) explicit formulas for the matrix irreps with respect to Gelfand-Tsetlin basis for the carrier representation space can be found in Chapter 18 of \cite{klimyk1995representations}.

\subsubsection{Example: SU(2) - irreps and matrix coefficients}
SU(2) is the group of $2\times 2$ matrices with determinant one and can be characterised by:
\begin{equation}
	SU(2)=\{\left(\begin{array}{cc}\alpha&-\beta^{*}\\ \beta&\alpha^{*}\end{array}\right): |\alpha |^{2}+|\beta |^{2}=1\}.
\end{equation}

It is a compact simply-connected Lie group whose irreducible representations have a simple characterisation.
They are labelled by integer or half-integer numbers $\lambda$ -- these are the heighest weight vectors that correspond to the $d=2\lambda+1$-dimensional irrep. Each $\lambda$-irrep can be realised on the $d$-dimensional vector space of homogeneous polynomials of degree $d-1$ in two real variables $z_1$ and $z_2$ with complex coefficients.  Any element of $H^{d}$ will be a homogeneous polynomial $f(z_1,z_2)=\sum\limits a_{k}z_1^k z_2^{d-1-k}$ for some arbitrary complex coefficients $a_{k}$. $SU(2)$ acts irreducibly on $H^{d}$ via the group action:
\begin{equation}
	g\cdot f(z_1,z_2)=f(g^{T}(z_{1},z_{2}))= f(\alpha z_1+\beta z_2, -\beta^{*} z_1+\alpha^{*} z_2)
\end{equation}
for any group element $g=\left(\begin{array}{cc}\alpha&-\beta^{*}\\ \beta&\alpha^{*}\end{array}\right)$.  This explicitly gives the irreducible representation $V^{\lambda}:SU(2)\longrightarrow GL(H^{d})$ with  $V^{\lambda}(g)[f(z_1,z_2 )]  = f(\alpha z_1 + \beta z_2, -\beta^{*}z_1 +\alpha^{*}z_2)$. This allows to compute the corresponding matrix coefficients $v^{\lambda}_{kk'}$ with respect to the orthonormal basis of homogeneous monomials. In particular:
\begin{equation}
	V^{\lambda}(g)[z_1^k z_2^{d-1-k}] = (\alpha z_1+\beta z_2)^{k}(-\beta^{*} z_1+\alpha^{*} z_2)^{d-1-k}
\end{equation}

From this, one can then obtain explicit formulas for each matrix coefficient through binomial expansion. These are also related to Wigner matrices that express the representations of SO(3) in terms of the Euler rotation angles.

For example, $\lambda = 1$ has a carrier space given by homogeneous polynomials of degree two with (normalised) basis given by $\mathbf{z} = (z_1^2/\sqrt{2}, z_1z_2, z_2^{2}/\sqrt{2})^{T}$. With respect to this basis the 1-irrep matrix of $SU(2)$ takes the form
%\begin{equation}
%\begin{split}
%&\frac{z_1^{2}}{\sqrt{2}}\longrightarrow \alpha^2\frac{ z_1^2}{\sqrt{2}}+\sqrt{2}\alpha\beta z_1z_2+ \beta^{2}\frac{z_{2}^{2}}{\sqrt{2}}\\
%&z_1z_2\longrightarrow -\sqrt{2}\alpha\beta^{*}\frac{z_1^2}{\sqrt{2}}+(|\alpha|^{2}-|\beta|^{2})z_1z_2+ \sqrt{2}\alpha^{*}\beta\frac{z_{2}^{2}}{\sqrt{2}}\\
%&\frac{z_2^{2}}{\sqrt{2}}\longrightarrow (-\beta^{*})^2\frac{z_1^2}{\sqrt{2}}-\sqrt{2}\alpha^{*} \beta^{*} z_1z_2+ (\alpha^{*})^{2}\frac{z_{2}^{2}}{\sqrt{2}}
%\end{split}
%\end{equation}
%This implies the following form for the standard 1-irrep matrix of SU(2):
\begin{equation}
	V^{1}(g)=\left(\begin{array}{ccc}\alpha^{2}&\sqrt{2}\alpha\beta&\beta^{2}\\
		-\sqrt{2}\alpha\beta^{*}& (|\alpha|^{2}-|\beta|^{2}) & \sqrt{2}\alpha^{*}\beta\\
		(\beta^{*})^{2}& -\sqrt{2}\alpha^{*}\beta^{*}& (\alpha^{*})^{2}\end{array}\right)
	\label{Eq:1-irrep}
\end{equation}

\subsubsection{Example: SU(4) acting on 2 qubits}
Here we establish the matrix coefficients for the irreps in the decomposition of $U\otimes U^{*}$ for $U\in SU(4)$. 

We have that for $U\in SU(4)$ the following decomposition into irreps 
\begin{align}
	U\otimes U^{*} = \bf{0} \oplus \bf{1}
\end{align}
where ${\bf{0}}$ labels the trivial irrep and ${\bf{1}}$ labels the adjoint irrep of dimension $d_1 = 15$. 
There are several ways to construct the matrix coefficients depending on how we choose to describe the group elements. For example we can do so using entries of a $4\times 4$ unitary matrix and then every irrep matrix coefficient will be a homogeneous polynomial in the entries of the matrix $U$ \cite{burgisser2000computational}. Another approach is to look at generators of the Lie algebra, which is particularly suitable for describing the adjoint irrep. %For general groups ($GL_{n}(\mathbb{C})$, $U(n)$) explicit formulas for the matrix irreps with respect to Gelfand-Tsetlin basis for the carrier representation space can be found in Chapter 18 of \cite{klimyk1995representations}.
\iffalse
The group $U(4)$ is generated via exponentiation by the following matrices (forming a basis of the Lie algebra)
\begin{align}
	&iE_{jj} \  \   j\in\{1,2,3,4\} \\
	&E_{j,j-1} - E_{j-1,j}   \  \  \text{and} \  \ i( E_{j,j-1}+ E_{j-1,j}) 
\end{align}
where $E_{ij}$ corresponds to a $4\times 4$  matrix with $1$ entry at row $i$ and column $j$ and $0$ everywhere else.

Irreducible representations of $U(4)$ are labelled by highest weight vectors $\lambda = (m_{1,4}, m_{2,4},m_{3,4},m_{4,4})$ with integers $m_1\geq m_2\geq m_3 \geq m_4$. To construct the Gelfand-Tsetlin basis for the carrier irrep representation we look at subgroup restrictions $U(1)\subset U(2) \subset U(3) \subset U(4)$. This can be used to label basis states $|\lambda,M\>_{M=1}^{d_\lambda}$ via a pattern with a triangular arrangement of integers satisfying the in-betweeness conditions
\begin{align}
M  =  	\left(\begin{array}{cccccccc} m_{1,4} &  & m_{2,4} & & m_{3,4} && m_{4,4}\\ 
& & \hspace{-1cm} m_{1,3}&  & \hspace{-1cm}  m_{2,3} & &  \hspace{-1cm} m_{3,3}\\ 
 & & m_{1,2} & &  m_{2,2} &  & & \\ 
& & && \hspace{-1cm} m_{1,1} & \\ \end{array}\right).
\end{align}

 The $\lambda =  {\bf{1}}$ irrep has weight vector  $(4,0,0,0)$. 
\fi

\subsection{Fourier transform on general compact and finite groups} 
\label{app:FTgroup}
We will be concerned with functions that map elements of a compact (or finite) group $G$ to complex values. Consider an irreducible representation $\lambda$  of $G$ with dimension $d_{\lambda}$ so that $\lambda: G \longrightarrow GL_{d_\lambda}(\mathbb{C})$. For a complex-valued function $f:G\longrightarrow \mathbb{C}$ its Fourier transform is given by a function that maps a representation to a complex-valued matrix of dimension equal to the representation's dimension.
\begin{definition}
	For any  compact group, the Fourier transform of $f:G\longrightarrow \mathbb{C}$ with $f\in L^2(G)$ evaluated at the irreducible unitary representation $\lambda$
	\begin{equation}
	\hat{f}(\lambda) =  \int_{g\in G} f(g) \lambda^{*}(g) d\, g,
	\end{equation}
where $d\,  g$ is the uniform Haar measure on the group $G$. A similar definition holds if $G$ is a finite group, where the measure $\int (\cdot) d\, g$  is replaced by the finite (normalised) sum $\frac{1}{|G|} \sum_{g\in G} (\cdot)$ corresponding to a discrete topology with a normalised counting measure.
\end{definition}

We will denote by $\hat{G}$ the set that indexes all irreducible representations of $G$ (up to isomorphism). We have the following properties (similar to the usual Fourier transform for periodic or real functions).
\begin{lemma}
	The convolution of $f_1,f_2 \in L^2(G)$ is defined by
	 \begin{equation}
	 	f_1\star f_2 (g) = \int f_1(h) f_2(gh^{-1} ) d\,h
	 \end{equation}
	Then, the  Fourier transform of the convolution is the product of individual Fourier transforms for any $\lambda\in\hat{G}$
		\begin{equation}
			\widehat{f_1\star f_2}(\lambda) = \hat{f_1}(\lambda) \hat{f_2}(\lambda)
		\end{equation}
\end{lemma}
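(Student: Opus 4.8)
The plan is to prove the identity by unfolding both definitions, interchanging the two integrations by Fubini, and performing a single Haar-invariant change of variables, the only structural input being that the conjugate $\lambda^*$ of a representation is itself a representation. First I would substitute the definition of the convolution into the definition of the Fourier transform, so that
\begin{equation}
\widehat{f_1 \star f_2}(\lambda) = \int_{g \in G} \int_{h \in G} f_1(h)\, f_2(gh^{-1})\, \lambda^*(g)\, dh\, dg.
\end{equation}
Since $G$ is compact and $f_1, f_2 \in L^2(G)$, the double integral is absolutely convergent and Fubini's theorem permits reordering, so I would fix $h$ and carry out the $g$-integration first.

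Next I would change variables in the inner integral by setting $k = g h^{-1}$, equivalently $g = kh$. The Haar measure on a compact (hence unimodular) group is right-invariant, so $dg = dk$ and $k$ sweeps out all of $G$; this replaces $f_2(gh^{-1})\lambda^*(g)$ by $f_2(k)\lambda^*(kh)$. The key algebraic step is the homomorphism property $\lambda^*(kh) = \lambda^*(k)\lambda^*(h)$, which follows from $\lambda^*(g) = \overline{\lambda(g)}$ together with $\overline{\lambda(kh)} = \overline{\lambda(k)}\,\overline{\lambda(h)}$. The constant matrix factor $\lambda^*(h)$ then pulls out of the $k$-integral, leaving $\hat{f_2}(\lambda)$, and the remaining $h$-integral collapses $\int_h f_1(h)\lambda^*(h)\, dh$ into $\hat{f_1}(\lambda)$, so that the two transforms factor. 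The finite-group case is identical after replacing $\int(\cdot)\, dg$ by $\frac{1}{|G|}\sum_g(\cdot)$, whose normalised counting measure enjoys the same right-invariance.

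The step I expect to demand the most care is the matrix ordering. Because $G$ is in general non-abelian, $\hat{f_1}(\lambda)$ and $\hat{f_2}(\lambda)$ do not commute, so one cannot be cavalier about which factor ends up on the left: the order is dictated jointly by the placement of $gh^{-1}$ in the convolution and by the direction of the homomorphism property of $\lambda^*$. Verifying that the order produced by this computation matches the product written in the statement — and, if needed, aligning the convolution convention (for instance by choosing to substitute in $h$ rather than $g$, or by defining $f_1\star f_2$ with the shifted argument on $f_1$) — is the one genuinely delicate point, with every other step reducing to Fubini and the invariance of the Haar measure.
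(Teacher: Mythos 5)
The paper states this lemma without any proof (it is quoted as a standard fact before the inversion formula and Parseval's identity, which are the statements the paper does prove), so there is no internal argument to compare yours against. Judged on its own, your strategy is the standard one and every individual step is valid: for a compact group, $L^2(G)\subset L^1(G)$ so Fubini applies, the Haar measure is bi-invariant so the substitution $k = gh^{-1}$ is measure-preserving, and the conjugate representation is a genuine homomorphism, $\lambda^{*}(kh)=\lambda^{*}(k)\lambda^{*}(h)$, since entrywise conjugation respects matrix products.

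The gap is that the point you defer as ``the one genuinely delicate point'' is not a verification that can be made to succeed: it is exactly where the statement, with the paper's conventions, fails. Carrying your computation to the end,
\begin{align}
\widehat{f_1\star f_2}(\lambda) &= \int_{h}\int_{k} f_1(h)\, f_2(k)\,\lambda^{*}(k)\,\lambda^{*}(h)\; dk\, dh \nonumber\\
&= \left(\int_{k} f_2(k)\lambda^{*}(k)\, dk\right)\left(\int_{h} f_1(h)\lambda^{*}(h)\, dh\right) \;=\; \hat{f_2}(\lambda)\,\hat{f_1}(\lambda),
\end{align}
because $\lambda^{*}(h)$ lands to the \emph{right} of the $k$-integral and only the scalars $f_1(h)$, $f_2(k)$ move freely; performing the substitution in the $h$-variable instead (your suggested alternative) reproduces exactly the same reversed order. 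So with the paper's definitions $\hat{f}(\lambda)=\int f(g)\lambda^{*}(g)\,dg$ and $f_1\star f_2(g)=\int f_1(h)f_2(gh^{-1})\,dh$, the identity as literally written holds only when the two Fourier matrices commute (abelian $G$, one-dimensional irreps); the correct general conclusion is $\widehat{f_1\star f_2}(\lambda)=\hat{f_2}(\lambda)\hat{f_1}(\lambda)$. Of the two escape hatches you name, only one works: redefining the convolution with the shifted argument on $f_1$, i.e. $f_1\star f_2(g)=\int f_1(gh^{-1})f_2(h)\,dh$, does yield $\hat{f_1}(\lambda)\hat{f_2}(\lambda)$, whereas ``choosing to substitute in $h$ rather than $g$'' changes nothing. (Reading $\lambda^{*}$ as the conjugate \emph{transpose} would also restore the stated order, since that is an anti-homomorphism, but it contradicts the paper's explicit definition of $\lambda^{*}$ as the conjugate matrix and its use as a homomorphism in the inversion-formula proof.) To have a complete proof you must commit to one of these resolutions; as written, you have proven the true theorem but left the final claim — that your factorisation matches the product in the statement — unestablished, and for general non-abelian $G$ it is in fact false.
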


The Fourier inversion formula holds
\begin{equation}
	f(g) = \sum_{\lambda\in\hat{G}} d_\lambda \< \lambda^{*}(g),  \hat{f}(\lambda)\>_{HS},
\end{equation}
where the Hilbert-Schmidt inner product is given by $ \< \lambda^{*}(g),  \hat{f}(\lambda)\>_{HS} = \tr( \lambda^{*}(g^{-1}) \hat{f}(\lambda))$ and we note that each of $\hat{f}(\lambda)$ and $\lambda(g)$ are in $GL_{d_\lambda}(\mathbb{C})$ and we use the fact that the $\lambda$'s are assumed to be unitary representations.  The inversion formula can be shown as follows

\begin{align}
	\sum_{\lambda\in \hat{G}} d_{\lambda}  \tr( \lambda^{*}(g^{-1}) \hat{f}(\lambda)) &= 	 \sum_{\lambda\in \hat{G} }d_{\lambda} \tr\left(   \lambda^{*}(g^{-1}) \int_{h\in G} f(h) \lambda^{*}(h) \, d\,h \,\right)\\
	& =  \sum_{\lambda\in \hat{G} }d_{\lambda} \tr\left(\int_{h\in G} f(h) \lambda^{*}(g^{-1}h)\  d\,h \,\right)\\
	&  = \int_{h\in G} f(h) \sum_{\lambda\in \hat{G} }d_{\lambda} \tr\left( \lambda^{*}(g^{-1}h)\  \,\right)  d\,h\\
	& =  \int_{h\in G} f(h) \sum_{\lambda\in \hat{G} }d_{\lambda}  \chi_{\lambda}^{*}(g^{-1}h)\  \, d\, h \\
	&= f(g)
\end{align}
where the last equation  arises from orthogonality of characters for irreducible representations in the form of $\sum_{\lambda\in \hat{G}} d_{\lambda} \chi_{\lambda}^{*}(g) = \delta_{eg} $ for $e\in G$, the identity element. %Furthermore, the inversion formula is a way of explicitly writing down the 

Similarly one can show that the Parseval identity also holds:
\begin{lemma} {(\bf{Parseval's identity})}
	For $f\in L^2(G)$ we have that
	\begin{equation}
		||f||_{L^2(G)}^2 = \sum_{\lambda\in \hat{G}} d_{\lambda} ||\hat{f}(\lambda)||^2_{GL_{d_\lambda}(\mathbb{C})},
	\end{equation}
	where
	$||f||_{L^2(G)}^2 = \int_{G} |f(g)|^2 d\, g$ and $||\hat{f}(\lambda) ||^2 _{GL_{d_\lambda}(\mathbb{C})} = \< \hat{f}(\lambda) , \hat{f}(\lambda) \>_{HS} = \Tr (\hat{f}(\lambda)\hc\hat{f}(\lambda))$. $d_{\lambda}$ corresponds to the dimension of the $\lambda$-irrep.
\end{lemma}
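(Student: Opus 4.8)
The plan is to establish Parseval's identity by the same strategy used above for the Fourier inversion formula: expand $||f||_{L^2(G)}^2$, push the sum over irreps inside the integral, and collapse it using orthogonality of characters. First I would write $||f||_{L^2(G)}^2 = \int_G f(g)\,\overline{f(g)}\,dg$ and, on the spectral side, use the definition $[\hat f(\lambda)]_{ij} = \int_G f(g)\,\overline{[\lambda(g)]_{ij}}\,dg$ together with $||\hat f(\lambda)||_{HS}^2 = \sum_{ij} [\hat f(\lambda)]_{ij}\,\overline{[\hat f(\lambda)]_{ij}}$ to express each Hilbert--Schmidt norm as a double integral over $G\times G$ against the weight $f(g)\overline{f(h)}$.

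The key step is to interchange the sum over $\hat G$ with the double integral and simplify the resulting matrix-index sum by unitarity of $\lambda$:
\begin{equation}
\sum_{ij}\overline{[\lambda(g)]_{ij}}\,[\lambda(h)]_{ij} = \Tr\!\big(\lambda(g)\hc\lambda(h)\big) = \Tr\!\big(\lambda(g^{-1}h)\big) = \chi_\lambda(g^{-1}h).
\end{equation}
The bracketed spectral factor then becomes $\sum_\lambda d_\lambda\,\chi_\lambda(g^{-1}h)$, which equals $\sum_\lambda d_\lambda\,\chi_\lambda^{*}(g^{-1}h)$ since $\lambda\mapsto\lambda^{*}$ permutes $\hat G$ with $d_{\lambda^{*}}=d_\lambda$ and $\chi_{\lambda^{*}}=\chi_\lambda^{*}$. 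This is precisely the kernel shown in the inversion-formula proof to act as the delta distribution at the identity via $\sum_\lambda d_\lambda\,\chi_\lambda^{*}(x)=\delta_{e,x}$, so integrating against $\overline{f(h)}$ over $h$ reproduces $\overline{f(g)}$, and a final integration over $g$ returns $\int_G f(g)\overline{f(g)}\,dg = ||f||_{L^2(G)}^2$, as claimed.

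The main obstacle is analytic rigor rather than algebra: for compact $G$ the sum over $\hat G$ is generally infinite, so the interchange of summation and integration and the distributional character identity require justification. The cleanest route is to prove the identity first on the dense subspace of trigonometric polynomials---finite linear combinations of matrix coefficients---where Schur orthogonality (the Lemma above) makes every step a finite, elementary computation, and then extend to all $f\in L^2(G)$ by continuity, using boundedness of the map $f\mapsto\hat f$. Equivalently, one may bypass these subtleties by invoking the Peter--Weyl theorem directly: Schur orthogonality shows $\{\sqrt{d_\lambda}\,[\lambda(\cdot)]_{ij}\}_{\lambda,i,j}$ is orthonormal in $L^2(G)$ and Peter--Weyl supplies completeness, so the numbers $\sqrt{d_\lambda}\,[\hat f(\lambda)]_{ij}$ are exactly the coefficients of $f$ in a complete orthonormal basis and the statement reduces to the abstract Hilbert-space Parseval relation $||f||^2=\sum|\text{coefficients}|^2$.
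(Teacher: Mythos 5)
Your proof follows essentially the same route as the paper's own: expand the spectral side $\sum_\lambda d_\lambda \|\hat f(\lambda)\|_{HS}^2$ as a double integral over $G\times G$, interchange sum and integral, use unitarity to collapse the matrix-index sum to a character, and invoke the delta-type identity $\sum_\lambda d_\lambda \chi_\lambda^{*}(h^{-1}g)=\delta_{g,h}$ to recover $\int_G |f(g)|^2\,dg$. Your closing remarks on rigor (density of trigonometric polynomials, or reduction to abstract Parseval via Peter--Weyl) go beyond the paper's formal manipulation and are a welcome strengthening, but the core argument is the same.
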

\begin{proof}
	Expanding out we have
	\begin{align}
		\sum_{\lambda\in \hat{G} }  d_{\lambda}  \tr (\hat{f}(\lambda)\hc \hat{f}(\lambda)) &= \sum_{\lambda\in \hat{G}} d_{\lambda} \int_G f(g) f^{*}(h) \Tr (\lambda^{*}(h))\hc \lambda^{*}(g)) d\, g  \ d\, h\\
		& =\int_G f(g) f^{*}(h)  \sum_{\lambda \in \hat{G}}  d_{\lambda} \tr(\lambda^{*}(h^{-1}g) ) d \,g \ d\,h\\
		& =\int_G f(g) f^{*}(h)  \sum_{\lambda \in \hat{G}}  d_{\lambda} \chi_{\lambda}^{*}(h^{-1}g)d \,g \ d\,h\\
		& = \int_G |f(g)|^2 dg  =||f||_{L^2(G)}^2
	\end{align}
where we used that  $\sum_{\lambda\in \hat{G}} d_{\lambda} \chi_{\lambda}^{*}(h^{-1}g) = \delta_{g,h}$
\end{proof}

\subsection{Extension to real-valued functions on $L^2(G,\mathbb{R})$}

In certain situations,  we will see it convenient to represent Fourier coefficients with respect to real representations.  For example, where the assumptions that lead to classical simulability cannot be met by the decomposition of $L^2(G)$ over matrix irreducible representations over the complex numbers, but can be met over the reals. Many of the results from Peter-Weyl decomposition over the complex numbers  carry over, with some modifications. 

Consider $\lambda_{\mathbb{R}}:G \longrightarrow GL_{d_{\lambda}}(\mathbb{R})$ an irreducible representation over the field of real numbers.  Then, for real-valued functions $f:G \longrightarrow \mathbb{R}$  we can similarly define the real-valued Fourier transform
\begin{align}
	\hat{f}(\lambda_{\mathbb{R}})    = \int_{g\in G}  f(g) \lambda_{\mathbb{R}}(g) d\, g
\end{align}
where here $\hat{f}(\lambda_{\mathbb{R}})$ is a real-valued $d_{\lambda} \times d_{\lambda}$ matrix.

A complex irreducible representation of dimension $d_{\lambda}$ can give rise to real irreducible representations of dimensions $d_{\lambda}$ or $2d_{\lambda}$  This can be determined for instance by using the Frobenius-Schur indicator. In particular for real-valued functions on compact groups $L^2(G, \mathbb{R})$ also decomposes into irreducible components over the reals. A modified form of Parseval's theorem will apply
\begin{align}
	||f||_{L^2(G)}^2 &=  \sum_{\lambda_{\mathbb{R}}}d_{{\lambda}_{\mathbb{R}}}||\hat{f}(\lambda_{\mathbb{R}})||^2_{HS}
\end{align}
where now $d_{\lambda_{\mathbb{R}}}$ can be either $d_{\lambda}$ (when complexification of real irrep is irreducible), $d_{\lambda}/2$(complex type) or $d_{\lambda}/4$ (quaternionic type).  
\subsection{Irreducible Tensor Operators }
\label{app:ITO}

The notion of \emph{irreducible tensor operators} illustrates the structure that emerges from the symmetry lifted to the space of operators $\B(\h)$.

\begin{definition}
	Let $G$ be a compact or finite group with $U$ a unitary representation on $\h$. Then for any irrep $\lambda\in\hat{G}$ the set of operators $\{T^{\lambda}_{k}\}_{k=1}^{\rm{dim}(\lambda)}$ in $\B(\h)$ are called irreducible tensor operators if they transform under the group action as
	\begin{equation}
		U(g) T^{\lambda}_{k} U(g)\hc = \sum_{k'} v^{\lambda}_{k'k}(g) T^{\lambda}_{k'}
		\label{eq:ITO}
	\end{equation}
	for all vector label $k$ and $g \in G$.
\end{definition}

As representation spaces for the group $G$ we have that $\B(\h)\cong \h\otimes\h^{*}$ are isomorphic, where the dual space $\h^{*}$ carries the dual representation $U^{*}$. As Hilbert spaces, $\h$ and $\h^{*}$ are isomorphic so we can view $\B(\h)$ as two copies of $\h$ with the tensor product representation $U\otimes U^{*}$. In this sense, the irreducible decomposition of $U\otimes U^{*}$ will be the same as the decomposition of $\mathcal{U}$. Moreover every irreducible tensor operator set $\{T^{\lambda}_{k}\}_{k=1}^{\rm{dim}(\lambda)}\subset \B(\h)$ gives an orthogonal basis for a $\lambda$-irrep subspace of $\B(\h)$. Indeed, from Schur orthogonality of matrix coefficients it follows that any irreducible tensor operators $\{T^{\lambda}_k\}$ and $\{T^{\mu}_j\}$ transforming under the $\lambda$-irrep and $\mu$-irrep respectively are orthogonal with respect to the Hilbert-Schmidt inner product
\begin{equation}
	\Tr((T^{\lambda}_k)\hc T^{\mu}_{j}) = N_{\lambda}\delta_{\lambda,\mu} \delta_{kj} 
\end{equation}
where $N$ is a normalisation factor independent on the vector component. Typically, for convenience we may assume that $N=1$.

The irreducible tensor operators give the decomposition of $\B(\h)$ into irreducible subspaces such that:
\begin{equation}
	\B(\h) = \bigoplus_{\lambda, {m_{\lambda}}} \{ T^{\lambda}_{k} : 1\leq k \leq \rm{dim}(\lambda)\}
\end{equation}
where the direct sum is over all irrep $\lambda$ in $U\otimes U^{*}$ of multiplicity $m_{\lambda}$. Conversely, any orthogonal basis for each irreducible subspace of $\B(\h)$ must transform according to equation \ref{eq:ITO} so that it forms a set of irreducible tensor operators.

For the case of compact Lie groups, there is an alternative definition of irreducible tensor operators in terms of the generators $\{J_1,..., J_n\}$ of the Lie algebra representation on $\B(\h)$. Then the set of operators $\{T^{\lambda}_{k}\}_{k=1}^{\rm{dim}(\lambda)}$ in $\B(\h)$ transform as:
\begin{equation}
	[J_m, T^{\lambda}_{k}] = \sum_{k'} \<\lambda,k|J_{m}|\lambda,k'\> T^{\lambda}_{k'}
\end{equation}
for all generators $J_{m}$ and all vector labels $k$, where the $\<\lambda,k|J_{m}|\lambda,k'\>$ are given by the matrix coefficients of the generators with respect to an eigenstate basis $|\lambda,k\>$ corresponding to weight vectors.

In order to explicitly construct an ITO basis for $\B(\h)$ we need to specify a basis for the $\lambda$-irrep carrier space with respect to which the matrix coefficients $v^{\lambda}_{k'k}$ is constructed.  Once we have fixed that, we can vectorise the transformation to get an equivalent definition 
\begin{align}
	U(g) \otimes U^{*} (g) |T^{\lambda}_k\>\> =  \sum_{k'} v^{\lambda}_{k'k}(g )|T^{\lambda}_{k'} \>\>.
\end{align}
If $\lambda$ is an irrep in the decomposition of $\mu\otimes \nu$ for $\mu$, $\nu$ irreps then given the basis for the carrier irrep of  $\mu$, $|e^{\mu}_m \>$  and of $\nu$, $|e^{\nu}_n\>$ then a basis for the $\lambda$-irrep subspace is given by
\begin{align}
	|e^{\lambda}_ k\> := \sum_{m,n} \<\mu,m;\nu,n |\lambda, k\> |e^{\mu}_m\> \otimes |e^{\nu}_n\>
	\label{eqn:CG-coupling}
\end{align}
where $\<\mu,m;\nu, n|\lambda,l\>$ are the Clebsch-Gordan coefficients, which are also evaluated with respect to the same fixed basis for the carrier irreducible representations. In our case, we are interested in the decomposition of $U\otimes U^{*}$ into irreducible components. If $U$ itself is irreducible then so is its dual  in which case we can employ Equation~\eqref{eqn:CG-coupling} to construct the ITO. We identify the the basis constructed from coupling via Clebsch-Gordan coefficients and the vectorised form of the operators  $|T^{\lambda}_k\>\> = |e^{\lambda}_k\>$. 
In general we may have that a representation $U$ that splits into irreps as $U  = \oplus_{\mu} \mu $, which can appear with multiplicities. 

\subsubsection{Example: defining  SU(2) representation on single qubit} 
Consider a single qubit system with the defining action of $g\in SU(2) $ given by $U(g) = g$. This corresponds to the $1/2$ - irrep of $SU(2)$ and we have the decomposition  $U\otimes U^* = 1 \oplus 0$. We choose the computational basis states for the qubit given by $|0\> = |1/2,-1/2\>$  and $|1\> = |1/2, 1/2\>$. In terms of Gelfand-Tsetlin patterns these are given respectively by $\left(\begin{array}{ccc} 1 & \hspace{-1cm}& 0 \\ &0	&\end{array}\right)$ and $\left(\begin{array}{ccc} 1 & \hspace{-1cm}& 0 \\ &1	&\end{array}\right)$. 
If we were to label the ITOs via the GT-patterns these would be given by
\begin{align}
	\left(\begin{array}{ccc} 2 & \hspace{-1cm}& 0 \\ &2&\end{array}\right),  \  \ 	 \left(\begin{array}{ccc} 2 & \hspace{-1cm}& 0 \\ &1&\end{array}\right),  \  \	\left(\begin{array}{ccc} 2 & \hspace{-1cm}& 0 \\ &0&\end{array}\right)  \  \  \text{and} \ \
	\left(\begin{array}{ccc} 0 & \hspace{-1cm}& 0 \\ &0&\end{array}\right)  \  \
\end{align}

With respect to the standard computational basis we get the irreducible tensor operators that span these irreducible subspaces in $U\otimes U^{*}$ 
\begin{equation}
	\begin{split}
		T^{1}_{-1}&=\left(\begin{array}{cc} 0 & -1 \\ 0 &0 \end{array} \right) \  \  \
		T^{1}_{0}=\frac{1}{\sqrt{2}}\left(\begin{array}{cc} 1 &0 \\0  &-1\end{array} \right)\\
		T^{1}_{1}&=\left(\begin{array}{cc}  0&0 \\ 1 &0\end{array} \right) \  \  \ T^{0}=\frac{1}{\sqrt{2}}\iden .
	\end{split}
	\label{eqn:ITO}
\end{equation}

These will transform according to the 1-irrep of $SU(2)$ and with respect to the carrier space and matrix irrep coefficients in Equation~\ref{Eq:1-irrep}.
Although the Pauli matrices form an orthogonal basis for the $1$-irrep subspace of $\B(\h)$, they will transform according to \emph{different} matrix coefficients given by $RV^{1} R\hc$ where explicitly have the following transformation between the column vectors of Pauli operators $\boldsymbol{\sigma}^1 = (X,Y,Z)$ on single qubits and the set of ITOs above ${\bf{T}}^1 = (T_{-1}^1, T_0^, T_1^1)$ 
\begin{equation}
	\boldsymbol{\sigma}^{1}=R\mathbf{T^1}:=\left(\begin{array}{ccc}\frac{1}{\sqrt{2}}&0&-\frac{1}{\sqrt{2}}\\0&1&0\\\frac{i}{\sqrt{2}}&0&\frac{i}{\sqrt{2}}\end{array}\right) \mathbf{T^{1}}.
\end{equation}

\subsubsection{Example: ITOs for U(4) action on two qubits}
%\cc{ADD Herere how the matrix coeffs change if the basis for the adjoint irrep is going to be Pauli}
\label{app:ITOU4}
For the adjoint irrep the generators of a Lie algebra can be used to construct corresponding irreducible tensor operators. We illustrate this for the $SU(4)$. 
For a Lie algebra the adjoint representation is given by the action on itself via the Lie bracket
\begin{align}
	&ad: \mathfrak{g} \longrightarrow GL(\mathfrak{g})\\
	&ad_A[B] = [A,B].
\end{align}

For $\mathfrak{su(4)}$ we take products of Pauli operators $I\otimes \sigma_i$, $\sigma_i\otimes I$ and $\sigma_i\otimes \sigma_j$ where $\boldsymbol{\sigma} = (X,Y,Z)$. We denote their normalised form by $\{T^{1}_{k}\}_{k=1}^{15}$. These form a basis for trace-less hermitian matrices that generate $\mathfrak{su(4)}$ (using a physics convention that maps the Lie algebra to the matrix Lie group $exp:\mathfrak{g} \rightarrow  G$ via $A \rightarrow e^{iA} $).
Lie brackets of pairs of generators expand in terms of the structure constants as
\begin{align}
	[T^1_i, T^1_j] = \sum_k f_{i,j}^{k} T^1_k
\end{align}
And furthermore any $A\in \mathfrak{su(4)}$ expands as $A = \sum_{i} A_i T^{1}_i$ so that
\begin{align}
	[A, T^{1}_j] = \sum_k A_i f^k_{i,j} T^1_k
\end{align}

\subsection{Liouville representation of quantum channels}
We will consider CPTP maps $\mathcal{E}: \B(\h) \longrightarrow \B(\h)$. Suppose that $\B(\h)$ has an operator basis given by $\{B_i\}_{i=1}^{d^2}$  where $d = dim(\H)$.  Furthermore assume this is orthonormal with respect to the Hilbert-Schmidt inner product $\Tr(B_i^{\dagger} B_j) = \delta_{ij}$. Then we define the Liouville representation as the matrix $\boldsymbol{\mathcal{E}}$ corresponding to $\mathcal{E}$ and given by
\begin{align}
	\boldsymbol{\mathcal{E}}_{ij} = \Tr(B_i^{\dagger} \mathcal{E}(B_j))
\end{align}
Similarly as $B_i \in \B(\h)$ form operator basis, then in vectorised form $|B_i\kv =  \sum_{k,k'} B_{kk'}^{(i))} |e_k\>|e_{k'}^{*}\>$, for a basis $\{|e_k\>\}_{k}$ for $\h$ and $\{|e_k^{*}\>\}_{k}$ for $\h^{*}$ and $B_{ki}\in\mathbb{C}$ given by $B_{kk'} =\<e_k| B_i |e_{k'}\>$
\begin{lemma}
	Consider a unitary channel $\U$ given by $\U(\rho) = U\rho U^{\dagger}$ for all $\rho \in \B(\h)$. Then its Liouville matrix  $\boldsymbol{\mathcal{U}}$ is
	\begin{equation}
		\boldsymbol{\mathcal{U}} = U\otimes U^{*}
	\end{equation}
\end{lemma}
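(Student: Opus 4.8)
The plan is to read the lemma off from two facts already in hand: the vectorisation identity $U\otimes U^{*}|A\>\> = |UAU\hc\>\>$ established earlier in the excerpt, and the Hilbert--Schmidt inner product $\<\<A|B\>\> = \Tr(A\hc B)$. Together these identify $U\otimes U^{*}$ as the matrix of the superoperator $\U$ in the vectorised picture, so the proof is essentially a matching of definitions once the conventions are pinned down.

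First, should one wish to verify the vectorisation identity rather than simply cite it, I would do so at the level of components. Writing $A = \sum_{k',l'} A_{k'l'}\,|e_{k'}\>\<e_{l'}|$ and recalling that the dual factor carries the conjugate representation, so $U^{*}|e_{l'}^{*}\> = \sum_m \overline{U_{ml'}}\,|e_m^{*}\>$, one expands both sides and finds
$$(U\otimes U^{*})|A\>\> = \sum_{k',l'}A_{k'l'}\sum_{k,m} U_{kk'}\,\overline{U_{ml'}}\,|e_k\>|e_m^{*}\>,$$
while computing $UAU\hc = \sum_{k',l'}A_{k'l'}\sum_{k,m}U_{kk'}\overline{U_{ml'}}\,|e_k\>\<e_m|$ and vectorising yields the identical expression. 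This confirms $(U\otimes U^{*})|A\>\> = |UAU\hc\>\> = |\U(A)\>\>$ for every $A\in\B(\h)$.

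The second and decisive step is to read off the Liouville matrix. By its definition its entries are $\boldsymbol{\mathcal{U}}_{ij} = \Tr(B_i\hc\,\U(B_j))$ for the orthonormal operator basis $\{B_i\}$. Applying the inner-product identity and then the vectorisation identity in turn,
$$\Tr(B_i\hc\,\U(B_j)) = \<\<B_i|\U(B_j)\>\> = \<\<B_i|\,U\otimes U^{*}\,|B_j\>\>,$$
so $\boldsymbol{\mathcal{U}}_{ij}$ is exactly the $(i,j)$ matrix element of the operator $U\otimes U^{*}$ in the orthonormal basis $\{|B_i\>\>\}$ of $\h\otimes\h^{*}$. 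Since this holds for all $i,j$, the Liouville matrix of $\U$ coincides with $U\otimes U^{*}$, as claimed.

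The only place requiring genuine care — and the usual source of sign errors — is the bookkeeping of complex conjugation: the dual space $\h^{*}$ must carry $U^{*}$ rather than $U$ or $U^{T}$, and the vectorisation must send the bra index to the conjugated tensor factor, consistently with the convention $\<\<A|B\>\> = \Tr(A\hc B)$ fixed earlier. Once those conventions are aligned the computation is routine; the substance of the lemma is precisely that the superoperator $\U$ factorises through this tensor-product structure on $\h\otimes\h^{*}$.
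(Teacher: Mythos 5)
Your proof is correct and follows essentially the same route as the paper's: a component-wise expansion establishing $|\U(B_j)\kv = (U\otimes U^{*})|B_j\kv$, followed by identifying the Liouville matrix entries $\Tr(B_i\hc\,\U(B_j))$ with the matrix elements of $U\otimes U^{*}$ in the vectorised basis. Your added remark on conjugation conventions is a sensible precaution but does not change the substance of the argument.
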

\begin{proof} We have $\boldsymbol{\mathcal{U}}_{ij} = \bv B_i| \mathcal{U}(B_j) \kv$. We need to show that $|\mathcal{U}(B_j) \kv = U\otimes U^{*} |B_j\kv$ for any $j$. This follows from  $|\mathcal{U}(B_j) \kv = \sum_{k,k'} \<e_k| UB_j U^{\dagger} |e_{k'}\> | e_k\>|e_{k'}^{*}\> = \sum_{m,n} \<e_m| B_j |e_n\> (\sum_{k,k'} \<e_k|U|e_m\>\<e_n|U^{\dagger}|e_{k'}\> |e_k\> |e_{k'}^{*}  \> = \sum_{m,n} \<e_m| B_j |e_n\>  U|e_m\> U^{*}|e_n\> $
\end{proof}

\subsection{Fourier analysis of $U(g)\otimes U^{*}(g)$ type operators}
In this section we will look at expanding Fourier analysis to operators, and we can do that for example by considering matrix entries of said operator as functions on the group $G$. Suppose $O:G \longrightarrow M_{d\times d} (\mathbb{C})$ with each entry an element of $L^2(G)$.  We have 
\begin{align}
\hat{O}(\lambda) = \int_{g\in G} O(g) \otimes \lambda(g)  d\, g
\end{align}
Similarly we have
\begin{align}
	O(g) = \sum_{\lambda\in \hat{G}} d_\lambda \Tr_{2} (\hat{O}(\lambda) \iden\otimes \lambda(g))
\end{align}
In particular we will be interested in $O (g) = U(g) \otimes U^{*}(g)$ and $\reallywidehat{U\otimes U^{*}} (\lambda)$. Suppose $|B_i\kv$ basis for $\h\otimes \h^{*}$.

For each $\lambda$ we will label $'A'$ the operator space and $'B'$ the irrep space, then we have a basis decomposition (This is like the Schmidt decomposition)
\begin{align}
	\hat{O}(\lambda) = \sum_{i=1}^{max(d^2, d_\lambda^2)} o_i(\lambda) B_i \otimes \lambda_i
\end{align}

\subsection{Connection between ITOs and Fourier transform}
Consider the evolved operator $ O(g)= U(g) O U^{\dagger} (g) = \sum_{\lambda,k,k'} O_{\lambda,k} v^{\lambda}_{kk'}(g) T^{\lambda}_{k'}$ 
The Fourier transform is given by
\begin{equation}
	\hat{O}(\lambda) = \int O(g) \otimes (v^{\lambda} (g) )^{*} d\, g
\end{equation}
Plugging in the thing before we get
\begin{align}
	\hat{O}(\lambda) =\sum_{kk'} O_{\lambda,k} T^{\lambda}_{k'} \otimes \int  v^{\lambda}_{kk'}(g) (v^{\lambda}(g))^{*} d\, g
\end{align}
Orthogonality relations give that
\begin{align}
	\hat{O}(\lambda) =\sum_{kk'} O_{\lambda,k} T^{\lambda}_{k'} \otimes \frac{1}{d_{\lambda}}\delta_{k,i} \delta_{k',j} 
\end{align}
where the $\delta_{k,i} \delta_{k',j}$ is taken to represent a matrix of $d_{\lambda}\times d_{\lambda}$ with non-zero coefficient in the $k,k'$ entry. 
Now let's look at the Parseval's relation in this setting, which we see how it arises from Schur orthogonality of irreducible matrix elements
\begin{align}
	&||O(g)||_{L^2(G)}^2 = \int \Tr(O(g)^{\dagger} O(g)) d\, g   \\
	& = \sum_{\lambda,k,k', \mu, m,m'} \hspace{-0.5cm}O_{\lambda,k} O_{\mu,m}^{*} \Tr((T^{\mu}_{m'}) ^{\dagger}T^{\lambda}_{k'}) \int v^{\lambda}_{kk'}(g) (v^{\mu}_{mm'}(g))^{*} d\, g\\
	& = \sum_{\lambda, k} |O_{\lambda,k}|^2
\end{align}
We can even do the same for inner products 
\begin{align}
&	\<O(g) , \tilde{O}(g) \>_{L^2(G) } = \int \Tr((O(g))^{\dagger} \tilde{O}(g) ) d\, g\\
	& =\hspace{-0.5 cm} \sum_{\lambda,k,k', \mu, m,m'} \hspace{-0.7cm} \tilde{O}_{\lambda,k}  (O_{\mu,m})^{*} \Tr((T^{\mu}_{m'}) ^{\dagger}T^{\lambda}_{k'}) \int v^{\lambda}_{kk'}(g) (v^{\mu}_{mm'}(g))^{*} d\, g\\
	& = \sum_{\lambda,k} O_{\lambda,k }^{*}\tilde{O}_{\lambda,k }
\end{align}

\section{Path Decomposition}
\subsection{Symmetry-adapted path decompositions}
\label{app:pathdecomp}
We describe in detail the notation of the path decompositions through the circuit class $\mathfrak{C}_n$ given by unitaries of the form
\begin{equation}
	C(\g) = U(g_1) W_1 ... U(g_D) W_D.
\end{equation}

In terms of the Liouville representation it takes the following form with respect to the vectorised orthonormal basis  $\{|T^{\lambda}_k\>\> \}$  for the space $\h\otimes \h^{*}$ 
\begin{equation}
	C(\g) \otimes C^{*}(\g) = U(g_1) \otimes U^{*}(g_1)  W_1\otimes W_1^{*}  ... U(g_D) \otimes U^{*}(g_D )W_D\otimes W_D^{*}.
\end{equation}
Furthermore we have the resolution of identity operator in terms of the vectorised basis as $\sum_{\lambda,k} |T^{\lambda}_k \>\>\<\< T^{\lambda}_k|  = \mathbb{I}\otimes \mathbb{I} $, which follows from orthonormality, and where the summation is taken over the full basis.  
\begin{align}
	C(\g) \otimes C^{*}(\g) = \sum_{all} |T^{\lambda_0}_{k_0}\>\>\<\<T^{\lambda_0}_{k_0}|U(g_1) \otimes U^{*}(g_1)  W_1\otimes W_1^{*} |T^{\lambda_1}_{k_1}\>\>\<\<T^{\lambda_1}_{k_1}| ... U(g_D) \otimes U^{*}(g_D )W_D\otimes W_D^{*} |T^{\lambda_D}_{k_D}\>\>\<\<T^{\lambda_D}_{k_D}|.
\end{align}
Since the basis transforms irreducibly according to the representation $U(\cdot)$ we have that  $\<\<T^{\lambda_0}_{k_0} |  U(g_1) \otimes U^{*}(g_1) = \sum_{k_0'} v^{\lambda_0}_{k_0',k_0} (g_1^{-1}) \<\< T^{\lambda_0}_{k_0'}|$, where the summation ranges only over the vector component $k_0'$ of the irrep $\lambda_0$, so it is restricted to the $\lambda_0$ -block of dimension $d_{\lambda_0}$ equal to the dimension of the irrep.  Using this, we have that 
\begin{align}
	C(\g) \otimes C^{*}(\g) = \sum_{all} v^{\lambda_0}_{k_0',k_0}(g^{-1}_1)... v^{\lambda_{D-1}}_{k'_{D-1},k_D}(g_D^{-1}) |T^{\lambda_0}_{k_0}\>\>\<\<T^{\lambda_0}_{k_0'}| W_1\otimes W_1^{*} |T^{\lambda_1}_{k_1}\>\> ...
	%\<\<T^{\lambda_1}_{k'_1}| ...  |T^{\lambda_{D-1}}_{k_{D-1}'}\>\>
	\<\< T^{\lambda_{D-1}}_{k_{D-1}'}|W_D\otimes W_D^{*} |T^{\lambda_D}_{k_D}\>\>\<\<T^{\lambda_D}_{k_D}|.
\end{align}
Using the compact notation we introduce in the main text given by $v^{\boldsymbol{\lambda}}_{{\bf{k'}, {\bf{k}}}} (\g^{-1}) :=  v^{\lambda_0}_{k_0',k_0}(g^{-1}_1)... v^{\lambda_{D-1}}_{k'_{D-1},k_D}(g_D^{-1})$, where the  $\g$  is an element of the direct product $\mathbb{G}= G^{\times D}$,  we can see that $v^{\boldsymbol{\lambda}}_{\bf{k'}, \bf{k}}(\g)$ represents matrix coefficients of the irreducible representation of $\mathbb{G}$, as these are given precisely by tensor products of irreps of $G$.  Re-writing the above we get
\begin{align}
	C(\g) \otimes C^{*}(\g) = \sum_{all}   v^{\boldsymbol{\lambda}}_{{\bf{k'}, {\bf{k}}}} (\g^{-1}) \,  \<\<T^{\lambda_0}_{k_0'}| W_1\otimes W_1^{*} |T^{\lambda_1}_{k_1}\>\>   \<\<T^{\lambda_1}_{k_1'}| W_2\otimes W_2^{*} |T^{\lambda_2}_{k_2}\>\> ...
	%\<\<T^{\lambda_1}_{k'_1}| ...  |T^{\lambda_{D-1}}_{k_{D-1}'}\>\>
	\<\< T^{\lambda_{D-1}}_{k_{D-1}'}|W_D\otimes W_D^{*} |T^{\lambda_D}_{k_D}\>\>  \, \,|T^{\lambda_0}_{k_0}\>\>\<\<T^{\lambda_D}_{k_D}|.
\end{align}

\section{Main framework for operator-truncated classical simulation algorithms}

\subsection{General description - truncated operator paths} 
\label{sec:results}
\begin{definition}
	Given an operator basis $\B = \{B_i\}$ for $\B(\h)$  and an observable $O \in \B(\h)$  the rank of $O$ with respect to the basis $\B$ is given by the size of the set
	\begin{equation}
		r(O) =  |\{ B_i \in \B : Tr(B_i\hc O) \neq 0 \}|
	\end{equation}
\end{definition}

\begin{definition}{\bf{Operator truncated classical algorithm}}
	Given an operator basis $\B = \{B_i\}$ for $\B(\h)$ an operator truncated weak classical algorithm $\A$ takes as input an observable $O$,  a unitary circuit $C = V_1 V_2 ... V_D$  and a truncation function $\texttt{\textup{truncate}}$ to output the truncated operator
	\begin{align}
		O^{\T} = \sum_{ \texttt{\textup{truncate}}[{\bf{i}}] = True} O_{i_0} v^{(1)}_{i_0, i_1} ... v_{i_{D-1}, i_D}^{(D) } B_{i_D}
 	\end{align}
	where ${\bf{i}}$ corresponds to a particular path decomposition $B_{i_1} \rightarrow B_{i_2} ... \rightarrow B_{i_{D-1}}$.  
\end{definition}

\subsection{Noise models}
\label{app:noise}
In the main text, we assume an error model that is diagonal in the symmetry-adapted basis. Specifically that the Liouville channel $\mathbf{E}$ satisfies
\begin{align}
	\mathbf{E}|T^{\lambda}_{k} \>\>  = e_{\lambda,k} |T^{\lambda}_{k} \>\>
\end{align}
and furthermore for any basis operators $T^{\lambda_0}_0$ that transform trivially (i.e $U(g) T^{\lambda_0}_0  U\hc(g) = T^{\lambda_0}_0$)  then $e_{\lambda_0, 0} = 1$ while any other error eigenvalues are strictly less than 1. One consequence of this assumption is that the noise model will be unital and will have no effect on the trivial subspace that $\Pi_0$ projects into, but at the same time will penalise any occurrence of a basis operator that transforms non-trivially within a multi-path decomposition by a factor of at most $(1-\gamma)$.

For a general error channel $\N$ we can consider a dephasing map along a particular basis, in this case along $\{T^{\lambda}_k\}$. In the Liouville decomposition the resulting map $\bar{\N}$  is given by
\begin{align}
	\bar{\mathbf{N}} = \sum  |T^{\lambda}_k\>\>\<\< T^{\lambda}_k|  \< \< T^{\lambda}_k|\mathbf{N} |T^{\lambda}_k\>\>.
\end{align}
For example, if we were considering a Pauli basis, the above dephasing map corresponds to twirling the general error channel into a Pauli channel. The dephased map will be diagonal in the operator basis, by construction. Furthermore if we can ensure that it is a close approximation of $\N$ such that $|| \N- \bar{\N}||_{\Diamond}\leq \epsilon'$, then we have that the noisy circuits satisfy 
\begin{align}
	 || \N \U_1 (g_1)\mathcal{W}_1  ... \N \U_D(g_D)\mathcal{W}_D -  \bar{\N} \U_1(g_1) \mathcal{W}_1 ... \bar{\N} \U_D(g_D)\mathcal{W}_D||_{\Diamond}\leq \epsilon' D
\end{align}
from sub-multiplicativity of the diamond norm.  This leads to 
\begin{align}
	&||\tilde{o}(\g) - \tilde{o}^{\T}_{\bar{\N}}(\g)||_{L^2(\mathbb{G})} \leq \epsilon' D + ||\tilde{o}_{\bar{\N}}(\g) - \tilde{o}^{\T}_{\bar{\N}}(\g)||_{L^2(\mathbb{G})} 
\end{align}
whereby $\tilde{o}(\g)$ we denote expectation values with respect to a noise model given by $\N$ and by $\tilde{o}_{\bar{\N}}(\g)$ we denote the expectation values given by using the dephased map $\bar{\N}$.

\subsection{Framework for group-parametrised noisy circuit ensembles} 
We consider families of unitaries $C(\g)$ acting on the $n$-qudit system $\h_n$, indexed by elements of a compact or finite group $g_i \in G$ 
\begin{align}
	C(\g) = U_1(g_1) W_1 U_2(g_2) W_2 \, ...  \, U_D(g_D) W_D
\end{align}
where $\g: = (g_1,..., g_D)$ is an element of a direct group product $\mathbb{G} = G^{\times D}$,  $U_i$ are representations of the group $G$ on $\h_n$.  We will refer to unitaries that have this fixed structure as "circuits", although technically speaking we are not necessarily discussing qubit systems nor assuming that the individual unitaries $W_i$ and $U_i$ correspond to primitive gates.

The results we have rely on one or more assumptions related to the circuit structure, decomposition basis or noise model. \emph{We will make it explicit on which of the assumptions we rely on for each result but emphasise that we don't always assume all of them are met, as certain situation require weaker conditions.} 

The first type of assumption helps to identify convenient choices of operator basis with respect to which we can construct a truncated set of the full path decomposition that can then be evaluated. 

{\bf{\emph{Assumption 0} }}\emph{ Each $U_i$ forms a representation of $G$ and the maximal dimension of inequivalent irreps $\lambda_i$ in $U_i\otimes U_i^{*}$ that are also allowed within at least one truncated path is given by $d_{max} = poly(n)$. Furthermore, for each such irrep $\lambda_i$ we can efficiently compute matrix coefficients $v^{\lambda_i}_{kk'}(g)$ for any $g\in G$. 
}

The second type of assumption involves only the $W$-type operators - and are constructed in such a way as to constrain the complexity growth due to these sub-circuits when we essentially remove the parameterised $U_i$ unitaries.  In a way, we can think of the $W$'s as being the ``free" unitaries although we make no assumptions that they form an efficient classical sub-theory in any strict sense. In fact, the assumption is much weaker because we only require efficient evaluation on the restricted subspaces arising from paths allowed in the truncation set. However, with appropriate choice of basis, when the $W$-type unitaries do belong to classes that form efficient classical sub-theories such as Clifford or matchgate unitaries these kind of assumptions are met a-priori and do not impose additional constraints.  
 
{\bf{\emph{Assumption 1} }}
{\bf{\emph{(Controlled tree growth)} }} \emph{For all $i\in \{1, D\}$, each unitary $W_i$ maps any given basis operator $|T^{\lambda}_{k}\>\>$ entirely into the trivial subspace or is orthogonal to the trivial subspace.  Equivalently, 
	\begin{equation}
		\Pi_0 W_{i}\otimes W_i^{*} |T^{\lambda}_k \>\> = \begin{cases}
			0 &  \ \  or \\
			W_{i} \otimes W_{i} ^{*}|T^{\lambda}_k \>\> .
		\end{cases}
	\end{equation} 
	{\bf{\emph{(Path contractions and evaluation)} }}The number basis operators $(\mu, m)$ for which the contraction is non-zero $\<\< T^{\mu}_m|W_{i+J}\otimes W_{i+J}^{*} W_{i+J-1}\otimes W_{i+J-1}^{*} ... W_i\otimes W_i^{*}|T^{\lambda}_k\>\> \neq 0$ is at most $s = poly(n)$ and can be evaluated efficiently in time at most $cost(\texttt{\textup{contract}}) = poly(n)$.}

{\bf{\emph{Assumption 2 (Noise penalisation)}}}. \emph{For the noise model, we assume every unitary is subject to a channel $\E$ with spectral gap $\gamma$ and that the only fixed or rotating points correspond to the trivial irreps which means that the eigenvalues $|e_{\lambda,k}| <(1-\gamma)$ whenever $\lambda$ is not equivalent to a trivial irrep.}

\begin{theorem} {\bf{(Main result for mean values)}} 
	\label{thm:mainmean}
	Let $O$ be an observable on an $n$-qudit system $\h_n$. Denote its expectation value with respect to unitaries $C(\g) \in \mathfrak{C}_n$ and input state $|0^n\> \in \h_n$  by $o(\g) =  Tr( O \, C(\g) |0^n\>\<0^n| C^{\dagger}(\g))$  and by $\tilde{o}(\g)$ the corresponding noisy expectation value with noise model satisfying Assumption 2.  Suppose that the rank of $O$ with respect to a suitable basis is $r(O) = poly(n)$. Then, if with respect to that basis the Assumption 1 is satisfied, then  there is a classical algorithm $\A$ with a truncation rule given by a controllable parameter $L$ that outputs $\tilde{o}^{\T}(\g)$ such that   
		\begin{align}
			||\tilde{o}(\g) - \tilde{o}^{\T}(\g) ||_{L^2(\mathbb{G})}  \leq e^{-\gamma L} ||O||_{HS},
		\end{align}
		and has run-time $\O(r(O)(s\,d_{max})^L cost(\texttt{\textup{evaluate}})) $, where $d_{max}$ is the maximal dimension of the irreps in the decomposition of $U\otimes U^{*}$. 
\end{theorem}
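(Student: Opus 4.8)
The plan is to separate the statement into two essentially independent claims — the $L^2(\mathbb{G})$ approximation bound and the runtime count — and to discharge each using a different one of the hypotheses. The error bound lives entirely in the Fourier/Parseval picture already established, using Assumption 2; the runtime is an enumeration count on the trees of Algorithm~\ref{alg:two}, using Assumption 1.

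For the error bound, I would first record the multiplicative structure of the noisy coefficients, namely $[\widehat{\tilde{o}}(\boldsymbol{\lambda})]_{{\bf{k'}},{\bf{k}}} = e_{\lambda_0,k_0}\cdots e_{\lambda_{D-1},k_{D-1}}\,[\hat{o}(\boldsymbol{\lambda})]_{{\bf{k'}},{\bf{k}}}$, so that $\tilde{o}-\tilde{o}^{\T}$ is exactly the Fourier series restricted to the paths $(\boldsymbol{\lambda},{\bf{k}})\notin\T$. Applying the generalized Parseval identity to this difference then gives
\begin{align}
||\tilde{o} - \tilde{o}^{\T}||_{L^2(\mathbb{G})}^2 = \sum_{(\boldsymbol{\lambda},{\bf{k}})\notin\T} d_{\boldsymbol{\lambda}}\,\left|e_{\lambda_0,k_0}\cdots e_{\lambda_{D-1},k_{D-1}}\right|^2\,\left|[\hat{o}(\boldsymbol{\lambda})]_{{\bf{k'}},{\bf{k}}}\right|^2 .
\end{align}
By the truncation rule every dropped path obeys $|e_{\lambda_0,k_0}\cdots e_{\lambda_{D-1},k_{D-1}}|\le(1-\gamma)^L$, so I would pull this scalar out, bound the residual sum by the complete sum, and recognize the latter as $||o||_{L^2(\mathbb{G})}^2$ by Parseval applied to the noiseless function. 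To close, I would bound $||o||_{L^2(\mathbb{G})}\le\sup_{\g}|o(\g)|\le||O||_{\mathrm{op}}\le||O||_{HS}$, using that $C(\g)$-conjugation preserves purity of the input and that the Haar measure on $\mathbb{G}$ is normalized. Combining and using $(1-\gamma)\le e^{-\gamma}$ yields $||\tilde{o}-\tilde{o}^{\T}||_{L^2(\mathbb{G})}\le e^{-\gamma L}||O||_{HS}$.

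For the runtime I would invoke Algorithm~\ref{alg:two} directly. The key preliminary observation is that Assumption 2 makes the eigenvalue-product truncation coincide with bounding the number of branching locations: since $|e_{\lambda,k}|<(1-\gamma)$ for every non-trivial irrep while trivial irreps carry eigenvalue $1$ and (by $U(g)\otimes U^{*}(g)\Pi_0=\Pi_0$) produce no branching, a surviving path has at most $L$ non-trivial irreps, hence at most $L$ branchings. I would then count: there are $r(O)=poly(n)$ roots $T^{\lambda_0}_{k_0}$ with $O_{\lambda_0,k_0}\ne0$; Assumption 1 ensures each $W$-block sends a basis operator either wholly into or wholly out of the trivial subspace and, when it branches, into at most $s$ operators computable in $cost(\texttt{contract})$; and each non-trivial $U$-layer expands a node into at most $d_{max}$ components. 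Thus each tree has height $L$ and fan-out at most $s\,d_{max}$, giving at most $(s\,d_{max})^L$ paths per tree and total cost $\O(r(O)(s\,d_{max})^L\,cost(\texttt{evaluate}))$; the only change for the noisy case is to fold the factors $e_{\lambda,k}$ into \texttt{matrix-coeff}, which is free asymptotically.

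The main obstacle I anticipate is the bookkeeping that makes the two halves consistent: verifying that the path family controlled by the eigenvalue-product criterion (used in the Parseval estimate) is precisely the one enumerated by the height-$L$ trees (used in the count). This reduces entirely to the fact, supplied by Assumption 2, that branching happens only at penalized non-trivial irreps, so that ``product $>(1-\gamma)^L$'' and ``at most $L$ branchings'' describe the same set. A secondary technical point is justifying the coefficient factorization, which presumes the noise is diagonal in the ITO basis; I would handle the general case by the dephasing reduction of Appendix~\ref{app:noise}, accepting the additional $\epsilon' D$ term there.
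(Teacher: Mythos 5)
Your proposal is correct and follows essentially the same route as the paper's proof: the error bound via the generalized Parseval identity, the factorization of noisy Fourier coefficients, pulling out the $(1-\gamma)^{2L}$ factor from the paths outside $\T$, and the runtime via counting paths in the height-$L$ trees of Algorithm~\ref{alg:two} with fan-out $s\,d_{max}$ over $r(O)$ roots. The only cosmetic difference is the final step bounding $||o||_{L^2(\mathbb{G})}$ by $||O||_{HS}$ (you chain through the sup and operator norms, the paper uses Cauchy--Schwarz on the Hilbert--Schmidt inner product), which is an equivalent triviality.
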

For example, if  the $U_i$ are $k$-local  (i.e acts non-trivially on $k$ qudits) and $s = poly(n)$ then the run-time is $poly(n)^L \, d^{2kL}$ where $d$ is the qudit dimension (e.g $d=2$ for qubits).
\begin{proof}
	Denote by $\{T^{\lambda}_k\}$ the orthonormal irreducible tensor operator basis. First we explain how the noise model induces an effective truncation that allows us to apply the Algorithm 1 for sparse truncated Fourier expansions, starting from each of the basis operators in the support of $O$ as root that we propagate in the Heisenberg picture.

(Polynomial - number of tree-like structures) The observable decomposes into a sum of $r(O)$ basis operators as $O = \sum_{ \lambda_0,k_0} O_{\lambda_0,k_0} T^{\lambda_0}_{k_0}$ where the coefficients are $O_{\lambda_0,k_0} = \tr[(T^{\lambda_0}_{k_0})\hc O]$.  This  means that if we apply an operator truncated algorithm to each of the basis operators in the decomposition of $O$, then we will have an $\O(r(O)) = poly(n)$ overhead.

(Noise model induced truncation) For a noise channel  $\E$ diagonal in the ITO basis and acting after each $U(g)$ with a spectral gap $\gamma$. We denote the channel's eigenvalues by $e_{\lambda,k}$ such that in the Liouville representation $\boldsymbol{\E}|T^{\lambda}_k\>\> = e_{\lambda,k} |T^{\lambda}_k\>\>$ and we have $|e_{\lambda,k}| \le (1-\gamma)$ whenever $\lambda$ is not equivalent to the trivial irrep. 	%The eigenvalues of the noise channel define a truncation rule 
	%\begin{align}
	%	\texttt{truncate}[ (\boldsymbol{\lambda}, {\bf{k}}) ] =  True    \  \  \  \  iff  \  \  \  \   |e_{\lambda_0,k_0}... e_{\lambda_D,k_D}| \leq (1-\gamma)^L
	%\end{align}

Suppose that for each valid path $(\boldsymbol{\lambda, {\bf{k}}})$ we look at the number of locations in a given path $(\lambda_0,..., \lambda_D)$ for which at most $L$ of them correspond to irreps that are not-equivalent to the trivial one (i.e $\lambda_i\neq 0$). In other words, every path in the truncation set can branch out in at most $L$ locations.   Therefore the full truncation set $\T$  is given by

\begin{align}
	%&\texttt{truncate}[(\boldsymbol{\lambda}), L] = True \  \   iff  \  \ \boldsymbol{\lambda} = (\lambda_1,...\lambda_D)  \\ &\  \  { \rm{satisfies}} \ \
	\T:= \{(\boldsymbol{\lambda, {\bf{k}},{\bf{k'}}}) : \texttt{validate}[(\boldsymbol{\lambda, {\bf{k}},{\bf{k'}}}) ]= True   \& \ \boldsymbol{\lambda}=(\lambda_0,.., \lambda_D) \  {\rm{satisfies}} \   \#|\{ \lambda_i : \lambda_i = 0,  i\in \{1,D\} \}| \leq L \}.
\end{align}

and has the property that any valid path outside of it $(\boldsymbol{\lambda, {\bf{k}}}) \notin \T$ must obey  $|e_{\lambda_0,k_0}... e_{\lambda_D,k_D}| \leq (1-\gamma)^L$, which follows from the fact that the channel has a spectral gap and that fixed and rotating points lie within the trivial subspace.

\emph{Step 1 - Operator-truncated algorithm to evaluate the truncated noisy expectation value $\tilde{o}^{\T} (\g)$}

For each $T^{\lambda_0}_{k_0}$ in the decomposition of the target observable $O$ we construct the truncated noisy approximation of $\<0^n| C\hc(\g) T^{\lambda_0}_{k_0}C(\g)|0^n\>$  with respect to the paths in $\T$.  In particular the tree generated by Algorithm 1 $\texttt{GTT}(L,T^{\lambda_0}_{k_0}, W_1,..., W_D)$ will give the sparse Fourier path expansion truncated to include all the Fourier modes $\boldsymbol{\lambda}$ with at most $L$ non-trivial irreps.

%We construct the valid path path decomposition of the target observable 

(Efficient path evaluation) Coefficients of all the unitaries $W_i$ are given by $Tr((T^{\mu}_m) \hc W_i \,T^{\lambda}_k \, W_i\hc) $ can be computed and stored efficiently. This assumption can be restricted to the basis operators that are retained in the valid truncation set $\T$.  We used $s =poly(n)$ to denote the maximal rank $s = max_{\lambda,k} max_{ i }r(W_i T^{\lambda}_k W_i\hc)$ over all input basis for all $W$'s. The assumption that $s$ is polynomial in the number of qubits ensures a level of control on the spread of the support as we propagate each $T^{\lambda_0}_{k_0}$ in the Heisenberg picture.

(Efficient irreducible matrix coefficient evaluation) There exists an efficient algorithm to compute matrix coefficients of irreducible representation $v^{\lambda} _{kk'}(\cdot)$ for all the $\lambda$-irreps in the decomposition of $U_i\otimes U^{*}_i$ with dimension at most $d_{max}$. For a $k$ - local unitary $U$ this includes irreps with dimensions of at most $d_{max} \leq d^{2k} - 1$ (where $d$ denotes the qudit dimension)   
Ultimately, Assumption 1 also implies we can efficiently contract parts of the paths so that we obtain a tree of height $L$, with the degree of each node at most  $sd_{max}$.

\emph{Step 2 Proof of the approximation error bound } Now, let's show the approximation error bound in the truncated expectation value $\tilde{o}^{\T}(\g)$, where only the contributions from noisy paths in $\T$ are taken into account. The approximation bounds rely only on (i) the fact that the circuit class is parametrised by group elements $\g\in \mathbb{G}$ and (ii) \emph{Assumption 3} and its consequence on the truncation rule defining the set $\T$. We have from Parseval's theorem that
	\begin{align}
		||\tilde{o}(\g) - \tilde{o}^{\T} (\g) ||_{L^2(\mathbb{G})}^2  &= \sum_{\boldsymbol{\lambda}} d_{\boldsymbol{\lambda}} ||\widehat{\tilde{o} - \tilde{o}^{\T}} \  \  (\boldsymbol{\lambda}) ||_{HS}^2\\ 
		& = \sum_{(\boldsymbol{\lambda},{\bf{k}}) \notin \T} d_{\boldsymbol{\lambda}}  |e_{\lambda_0,k_0}... e_{\lambda_D, k_D}|^2 |[\widehat{ o-o^{\T}}(\boldsymbol{\lambda}) ]_{{\bf{k'}}, {\bf{k}}}|^2 \\
		&\leq (1-\gamma)^{2L}  \sum_{(\boldsymbol{\lambda},{\bf{k}}) \notin \T} d_{\boldsymbol{\lambda}}   |[\widehat{ o-o^{\T}}(\boldsymbol{\lambda}) ]_{{\bf{k'}}, {\bf{k}}}|^2 \\
		&\leq (1-\gamma)^{2L}  \sum_{(\boldsymbol{\lambda},{\bf{k}}) } d_{\boldsymbol{\lambda}}   |[\widehat{ o-o^{\T}}(\boldsymbol{\lambda}) ]_{{\bf{k'}}, {\bf{k}}}|^2  \stackrel{\rm{Parseval}}{=} (1-\gamma)^{2L} ||o(\g)||_{L^2(G)}^2\\
		&\leq e^{-2\gamma L} ||O||_{HS}^2.
	\end{align}
	In the above, we used that the noisy Fourier coefficient contract the true Fourier coefficient as $[\widehat{\tilde{o}- \tilde{o}^{\T}} \  \  (\boldsymbol{\lambda})]_{{\bf{k'}}, {\bf{k}}}  = e_{\lambda_0,k_0} ... e_{\lambda_D,k_D}[\widehat{o-o^{\T}} \  \  (\boldsymbol{\lambda})]_{{\bf{k'}}, {\bf{k}}}$.  Then, the first inequality comes from the truncation rule, the second inequality comes from including the positive contributions from the  coefficients involving paths within the truncation set. The latter allows us to relate it back to an average norm $||o(\g)||_{L^2(\mathbb{G})}$ via Parseval's. Finally, the last inquality comes from $(1-\gamma)^{2L} \leq e^{-\gamma L} $ and that $||o(\g)||_{L^2(\mathbb{G})}^2 = \int | \<0^n| C(\g)\hc OC(\g)  |0^n\>|^2 d\,\g \leq ||O||_{HS}^2 $ arises from the Cauchy-Schwarz inequality.

	\emph{Step 3 -  Complexity analysis of OTA}
	The total number of paths in each tree will be $(sd_{max})^L$, as each tree has height $L$ and degree at most $sd_{max}$. Along each path at most $L$ contractions are performed, the cost of which are at most $cost(\texttt{contract})L$ for each path. A single path evaluation within the contracted tree will take $cost(\texttt{evaluate})$, which includes the cost of evaluating the matrix irrep coefficients at specified group parameter values and the cost of all contractions within each path.  Overall, evaluating the forest (i.e each of the individual trees for all root operator basis) will produce $\tilde{o}^{\T}(\g )$ as described in Algorithm 1 in main text, and this will require $r(O) (sd_{max})^{L}  cost(\texttt{evaluate})$.
\end{proof}

The following assumption on the efficient contraction of the $W$-unitaries restricted to the trivial isotypical component allows for an improved simulation of the sampling task.

 {{\bf{\emph{Assumption 3 (Efficient contractions)} }}  There exists and $M$ such that for all $i \in \{1, D-M\}$ the following operator
		\begin{align}
			\Pi_0 W_i\otimes W_i^{*} \Pi_0 ... W_{i+M}\otimes W_{i+M}^{*} \Pi_0 
		\end{align}
		can be efficiently contracted to a support containing a polynomial number of  terms of the form $|T^{\lambda_0} \>\> \<\< T^{\lambda_0'}|$, where $\lambda_0$ and $\lambda_0'$ correspond to irreducible subspaces equivalent to the trivial irrep. The time complexity of the contraction is $cost(\texttt{contract-full})$ and $cost(\texttt{rank}) = poly(n)$ denotes the maximal number of terms in the support. 
	}

\begin{theorem} {\bf{(Main result for sampling task)}}
	\label{thm:mainsampling}
	Suppose that the marginals $\sum_{x_{i_1},... x_{i_j}} \<{\bf{x}}|T^{\lambda}_{k} |\bf{x}\>$ can be efficiently evaluated for all basis operators. %with run-time cost(\texttt{\textup{evaluate}}). 
	Given Assumptions $0-3$,  there exists an explicit classical algorithm %$poly(n) \left(\frac{1}{\epsilon} \right)^{\O(1/\gamma)}$ 
	which outputs a function $\tilde{p}^{\T}({\bf{x}})$ with $l_1$-approximation error $ ||\tilde{p}- \tilde{p}^{\T} ||_1 \leq  (1-\gamma)^{L}$, provided that $\tilde{p} -\tilde{p}^{\T}$ anti-concentrates. The run-time of the algorithm is given by at most $cost(\texttt{\textup{evaluate}}) (sd_{max})^{2L} D cost(\texttt{\textup{rank}})  \,cost(\texttt{\textup{contract-full}})  = \O(D) poly(n) \O((sd_{max})^{2L} )$.
\end{theorem}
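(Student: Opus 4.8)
The plan is to assemble the statement from three ingredients already in place: the noise-induced contraction of Fourier coefficients (Assumption 2), the generalised Parseval identity, and the bi-directional tree construction of Algorithm~\ref{alg:prob}, with the anti-concentration hypothesis acting as the bridge from an $L^2(\mathbb{G})$-averaged $l_2$ bound to the claimed $l_1$ bound. Throughout, all error statements are to be read as averages over the parameter space $\mathbb{G}$, consistent with the $L^2$ nature of Parseval's theorem.

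First I would establish the averaged $l_2$ error exactly as in the proof of Theorem~\ref{thm:mainmean}, but now applying the vectorised Parseval identity componentwise in the outcome label ${\bf{x}}$ (treating the coefficients $\<\< {\bf{x}}|\widehat{C\otimes C^{*}}(\boldsymbol{\lambda})|{\bf{0}}\>\>$) and summing over ${\bf{x}}$. Assumption 2 forces every valid path outside the truncation set $\T$ to have its Fourier coefficient damped by a factor at most $(1-\gamma)^{L}$, giving
\begin{align}
||\tilde{p} - \tilde{p}^{\T}||^2_{l_2(L^2(\mathbb{G}))} \leq (1-\gamma)^{2L}\, ||p - p^{\T}||^2_{l_2(L^2(\mathbb{G}))}.
\end{align}
To pass to the $l_1$ norm I would fix $\g$ and apply Cauchy--Schwarz over the $d^n$ outcomes, $||\tilde{p}(\cdot,\g) - \tilde{p}^{\T}(\cdot,\g)||_1^2 \leq d^n \sum_{{\bf{x}}} |\tilde{p}({\bf{x}},\g) - \tilde{p}^{\T}({\bf{x}},\g)|^2$, then integrate over $\g$ and insert the previous display; this bounds the group-averaged squared $l_1$ error by $d^n (1-\gamma)^{2L}\,||p - p^{\T}||^2_{l_2(L^2(\mathbb{G}))}$. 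The anti-concentration of the residual supplies $||p - p^{\T}||^2_{l_2(L^2(\mathbb{G}))} = \O(1)/d^n$, which cancels the $d^n$ prefactor and leaves $\O(1)(1-\gamma)^{2L}$; taking square roots yields the claimed $l_1$ bound up to the $\O(1)$ anti-concentration constant.

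Next I would analyse the cost of producing $\tilde{p}^{\T}$ through Algorithm~\ref{alg:prob}. A pigeonhole argument on the cut-off $L$ guarantees that every valid path carries a run of at least $M = \lceil D/L\rceil$ consecutive trivial irreps, so the circuit may be cut at each of the $D$ positions $i$ and the operator $\Pi_0 W_i\otimes W_i^{*}\Pi_0 \cdots \Pi_0$ contracted, by Assumption 3, to at most $cost(\texttt{rank}) = poly(n)$ root terms $|T^{\mu}_0\>\>\<\< T^{\mu'}_0|$, each at cost $cost(\texttt{contract-full})$. From every root I grow a left and a right truncated tree with $\texttt{GTT}$, each of height $L$ and branching at most $s\,d_{max}$, hence holding $(s\,d_{max})^{L}$ paths; the left--right product gives $(s\,d_{max})^{2L}$ path-pairs, each evaluated in $cost(\texttt{evaluate})$ with the noise eigenvalues $e_{\lambda,k}$ folded into \texttt{matrix-coeff} as in Theorem~\ref{thm:mainmean}. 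Multiplying these factors reproduces the stated runtime. The hypothesis that the basis-operator marginals $\sum \<{\bf{x}}|T^{\lambda}_{k}|{\bf{x}}\>$ are efficiently computable guarantees the same trees deliver $\tilde{p}^{\T}$ together with all its marginals, so that (at an extra $\O(dn)$ overhead via the reduction of Lemma~\ref{lemma:samplingtocompute}) the output is directly usable for approximate sampling.

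The principal difficulty is the $l_2 \to l_1$ passage: the Cauchy--Schwarz step loses a factor $\sqrt{d^n}$ that is exponential in the number of qudits, and it is precisely the anti-concentration of the residual quasi-distribution $p - p^{\T}$ that cancels it. I would take care to impose anti-concentration on $p - p^{\T}$ rather than on $p$ itself --- the weaker sufficient hypothesis emphasised in the surrounding discussion --- and to keep every approximation statement as an $L^2(\mathbb{G})$-average, since the guarantees derived from Parseval's identity are inherently average-case over the parameter space.
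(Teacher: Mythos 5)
Your proposal is correct and follows essentially the same route as the paper's proof: the identical Parseval-based $l_2$ bound with noise-damped Fourier coefficients outside $\T$, the same Cauchy--Schwarz passage from $l_2$ to $l_1$ in which the $\sqrt{d^n}$ factor is cancelled by anti-concentration of the residual $p-p^{\T}$, and the same bi-directional-tree cost accounting $D\, cost(\texttt{rank})\, cost(\texttt{contract-full})\,(sd_{max})^{2L}\, cost(\texttt{evaluate})$. The only detail you leave implicit is the pruning (\texttt{check}) step of Algorithm~\ref{alg:prob} that prevents double-counting of paths whose runs of trivial irreps make them visible from several cut positions $i$, but since you invoke that algorithm directly this is absorbed into your appeal to it.
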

\begin{proof}
	From the noise model assumption we have that it induces the truncation rule (see proof of Theorem \ref{thm:mainmean}  for the discussion of how it arises) 
	\begin{align}
		%&\texttt{truncate}[(\boldsymbol{\lambda}), L] = True \  \   iff  \  \ \boldsymbol{\lambda} = (\lambda_1,...\lambda_D)  \\ &\  \  { \rm{satisfies}} \ \
		\T:= \{(\boldsymbol{\lambda, {\bf{k}},{\bf{k'}}}) : \texttt{validate}[(\boldsymbol{\lambda, {\bf{k}},{\bf{k'}}}) ]= True   \& \ \boldsymbol{\lambda}=(\lambda_0,.., \lambda_D) \  {\rm{satisfies}} \   \#|\{ \lambda_i : \lambda_i = 0,  i\in \{1,D\} \}| \leq L \}.
	\end{align}

	{\emph{Step 1  - Operator Truncated Classical Algorithm that outputs $\tilde{p}^{T}(\bf{x}, \g)$ } } This follows the Algorithm 2  in the main text. The only modification in the noisy setting is that the path evaluation will incorporate the noise parameters. 
	
	From the assumption on efficient contractions of  $W$-type operator we pick the truncation parameter $L$  to ensure that  $\Pi_0W_i\otimes W_i^{*} \Pi_0 ... W_{i+len}\otimes W_{i+len}^{*} \Pi_0 $ can be efficiently contracted  to a polynomial number of terms where $len = [D/L]$.
	  	
	  We run the following procedure for each $i \in \{1, ..., D-len\}$ %of the form $i_0 + m(M+1)$ where $m$ is an integer ranging between $0$ and $[(D-i_0)/(M+1)]$ for $i_0<L$ as determined from Assumption 3 which involves the circuit structure of the $W$ unitaries.
\begin{enumerate}[(i)]
		\item Contract  $\Pi_0 W_i\otimes W_i^{*} \Pi_0  W_{i+1} \otimes W_{i+1}^{*}  \, ...  \, W_{i+len} \otimes W_{i+len}^{*} \Pi_0$ to the $r_i = poly(n)$ terms of \emph{Assumption 3}. This requires time complexity  of  $cost(\texttt{contract-full}) $
		\item For each term $|T^{\mu_0}\>\>\<\< T^{\mu_0'}|$ in the contraction set previously determined we consider the object
		\begin{align}
			U_1 \otimes U_1^{*} W_1 \otimes W_1^{*}... U_{i-1}W_{i-1}\otimes W_{i-1}^{*} |T^{\mu_0}\>\>\<\< T^{\mu_0'}| W_{i+len+1}\otimes W^{*}_{i+len+1}. ...  U_D\otimes U_D^{*}W_D \otimes W_{D}^{*}
		\end{align}
		\item Propagate $|T^{\mu_0}\>\>$ backwards with a tree of height at most $L$ first and prune branches that have already been incorporated for previous values of $i$. This condition checks that  any consecutive applications of $\Pi_0$ are applied for less $len$ times. If that is not true then the branch gets pruned (removed).  This ensures we do not double-count contributions for these paths. Then we propagate  $\<\<T^{\mu_0'}|$ forwards with tree constructions of height at most $L$ in either directions. The maximal number of paths will be $(sd_{max})^{2L}$, which is ensured by \emph{Assumption 2}. This leads to bi-directional tree constructions that are given by $\texttt{GTT}(L, T^{\lambda_0},  W_{i-1}\hc, ... W_1\hc)$ and $\texttt{GTT}( T^{\lambda_0'},  W_{i+M+1}, ... W_D)$. %In the construction, we prune the first tree in such a way as to eliminate all the branches that apply $\Pi_0$ projector consecutively after the unitaries $W_{i_0 +m(M+1)} , ...  W_{i_0 + m(M+1) + M}$ for at least one of the previous values of $m$ for which the procedure has already been run.  
		\item  The final propagated basis on either ends of the bi-directional trees can be efficiently contracted against both computational basis elements and marginals, so that each valid truncated path can be efficiently evaluated. Evaluate contribution to the $p^{\T}({\bf{x}})$ for all the paths with at most $L$ branches along a path.  The total cost will be at most $(sd_{max})^{2L} L cost(\texttt{contract}) cost(\texttt{matrix-coeff}) cost(\texttt{evaluate}) $, where the $cost(\texttt{evaluate})$ counts the complexity of contracting the ITO basis operators with the computational basis.  
\end{enumerate}

	In total there will be at most $ D\, {\rm{max}}_{i} r_i =  D\, cost(\texttt{rank})$ bi-directional trees to be constructed and evaluated, with each having at most $(sd_{max})^{2L}$ paths.  The cost to evaluate each path is given by $cost( \texttt{evaluate})$ which is the time complexity of the $\texttt{evaluate}$ function in the main text. This incorporates the at most $L$ contractions within each path involving W-type unitaries as in \emph{Assumption 2}, the evaluation of matrix coefficients  $cost(\texttt{matrix-coeff} )$ and the evaluation of basis operators with respect to the computational basis.
Overall we obtain a time complexity of $ cost(\texttt{contract-full}) D\, cost(\texttt{rank})  (sd_{max})^{2L} cost(\texttt{evaluate})  = \O(D)poly(n) \O(L (sd_{max})^{2L})$.
	
{\emph{Step 2 - Approximation bounds} } For any $\g$ the noiseless probability distribution is given by
\begin{align}
	p({\bf{x}}, \g) = \<\< {\bf{x}}|  C(\g) \otimes C^{*}(\g) |{\bf{0}}\>\> .
\end{align}

The truncated probability distribution will be similarly defined by including all the Fourier terms that satisfy the truncation condition and are validated. Specifically 
\begin{align}
	p^{\T}({\bf{x}}, \g) =  \sum_{\boldsymbol{\lambda} \in \T} d_{\boldsymbol{\lambda}}\<\< {\bf{x}}|  \Tr_{2}[ \widehat{C\otimes C^{*}} \  (\boldsymbol{\lambda}) \mathbb{I}\otimes \boldsymbol{\lambda}^{*}(\g^{-1}) ] |{\bf{0}}\>\> 
\end{align}
and the noisy truncated expectation value will be given by 
\begin{align}
	\tilde{p}^{\T}({\bf{x}}, \g) =  \sum_{(\boldsymbol{\lambda},{\bf{k}}) \in \T} d_{\boldsymbol{\lambda}} e_{\lambda_0,k_0} ... e_{\lambda_D, k_D}  v^{\boldsymbol{\lambda}}_{\bf{k'},\bf{k}} (\g^{-1})   \<\< {\bf{x}}|[\widehat{C\otimes C^{*}} \  (\boldsymbol{\lambda}) ]_{\bf{k'}, {\bf{k}}}   |{\bf{0}}\>\> 
\end{align}

We have the following bounds with respect to the norm $l^2(L^2(\mathbb{G}))$ 
\begin{align}
	|| \tilde{p} - \tilde{p}^{\T} ||_{l^2(L^2(\mathbb{G}))}^2 &= \int  ||\tilde{p} -\tilde{p}^{\T} ||^2_{l^2} d\g = \sum_{{\bf{x}}} \int |\tilde{p}({\bf{x}}, \g) - \tilde{p}^{\T}({\bf{x}}, \g) |^2  d\, \g\\
	& \stackrel{\rm{Parseval}}{=} \sum_{\bf{x}}  \sum_{\bf{\boldsymbol{\lambda}}}    d_{\boldsymbol{\lambda}} || \widehat{\tilde{p} - \tilde{p}^{\T} } \,({\bf{x}})  \, \,   (\boldsymbol{\lambda}) ||_{HS}^2		\\	
	& = \sum_{ \bf{x}} \sum_{ \boldsymbol{\lambda} }   d_{\boldsymbol{\lambda}}  \sum_{ \bf{k,k'} }  |e_{\lambda_0, k_0} ... e_{\lambda_D, k_D} |^2  | [\widehat{p -  p^{\T}} (\bf{x}) (\boldsymbol{\lambda})]_{\bf{k'}, \bf{k}} |^2 \\
	& =  \sum_{ \bf{x}} \sum_{ (\boldsymbol{\lambda}, \bf{k})\notin \T }   d_{\boldsymbol{\lambda}}   |e_{\lambda_0, k_0} ... e_{\lambda_D, k_D} |^2  | [\widehat{p -  p^{\T}} (\bf{x}) (\boldsymbol{\lambda})]_{\bf{k'}, \bf{k}} |^2 \\
	& \leq  (1- \gamma)^{2L}   \sum_{ \bf{x}} \sum_{ (\boldsymbol{\lambda}, \bf{k})\notin \T }   d_{\boldsymbol{\lambda}}    | [\widehat{p -  p^{\T}} (\bf{x}) (\boldsymbol{\lambda})]_{\bf{k'}, \bf{k}} |^2 \\
	& \stackrel{\rm{Parseval}}{=}     (1- \gamma)^{2L} ||p - p^{\T}||_{l^2(L^2(\mathbb{G}))}^2 %  \sum_{{\bf{x}}} \int |p({\bf{x}}, \g) - p^{\T}({\bf{x}}, \g) |^2  d\, \g \\
	%& \leq (1-\gamma)^{2L} ||p||_{l^2(L^2(\mathbb{G}))}^2
	%	e_{\lambda_0,k_0} ... e_{\lambda_D, k_D}  v^{\boldsymbol{\lambda}}_{\bf{k'},\bf{k}} (\g^{-1})   \<\< {\bf{x}}|[\widehat{C\otimes C^{*}} \  (\boldsymbol{\lambda}) ]_{\bf{k'}, {\bf{k}}}   |{\bf{0}}\>\>  \right)^2 d\, \g \\
	%	& \leq   \sum_{\bf{x}} \int  (1-\gamma)^{2L} \left( \sum_{(\boldsymbol{\lambda}, {\bf{k}} ) \notin \T} d_{\boldsymbol{\lambda}}   v^{\boldsymbol{\lambda}}_{\bf{k'},\bf{k}} (\g^{-1})   \<\< {\bf{x}}|[\widehat{C\otimes C^{*}} \  (\boldsymbol{\lambda}) ]_{\bf{k'}, {\bf{k}}}   |{\bf{0}}\>\>  \right)^2 d\, \g \\
	%	&  = (1-\gamma)^{2L} \,  || p - p^{\T} ||_ {l^2(L^2(\mathbb{G}))}^2 \\
	%	& \leq (1-\gamma)^{2L}  \,  ||p||_{l^2(L^2(\mathbb{G}))}^2
\end{align}
However, the sampling to compute reduction involves a $l_1$ norm, so in order to connect with the above bounds we invoke Cauchy-Schwartz so that 
\begin{align}
	||\tilde{p} - \tilde{p}^{\T} ||_{l_1(L^2(\mathbb{G}))} \leq  \sqrt{d^n} ||\tilde{p} - \tilde{p}^{\T} ||_{l_2 (L^2(\mathbb{G}))},  
\end{align}
where we recall that $d$ corresponds to the qudit dimension (e.g $d=2$ for qubits). Combining the above relations we obtain that
\begin{align}
	||\tilde{p} - \tilde{p}^{\T} ||_{l_1(L^2(\mathbb{G}))} \leq \sqrt{d^n} ||\tilde{p} - \tilde{p}^{\T} ||_{l_2 (L^2(\mathbb{G}))} \leq  (1-\gamma)^L  \sqrt{d^n} ||p - p^{\T} ||_{l_2 (L^2(\mathbb{G}))}
\end{align}
We've assumed that the residual quasi-probability distribution anti-concentrates, which means that
\begin{align}
	\sqrt{d^n} ||p - p^{\T} ||_{l_2 (L^2(\mathbb{G}))} = O(1).
\end{align}
Therefore we obtain an approximation error that decays exponentially with $L$ as $||\tilde{p}- \tilde{p}^{T} || _{l_2 (L^2(\mathbb{G}))} \leq O(1) e^{-\gamma L}$.
\end{proof}

The following lemma from \cite{bremner2017achieving} shows that  sampling can be reduced to computing probability distributions and its marginals.
\begin{lemma} {\bf{Sampling to computing reduction}} Let  $\tilde{p}$ be a probability distribution on $\mathbb{Z}_d^{\times n}$.  Suppose that there exists an oracle computing the function $\bar{q}:  \mathbb{Z}_d^{\times n} \rightarrow \mathbb{R}$  and its marginals $\bar{q}(x_1,...,x_k) = \sum_{x_{k+1},...x_{n} \in \mathbb{Z}_d} \bar{q}(x_1,.., x_n)$ such that $||\tilde{p}-\bar{q}||_1\leq \epsilon$. If $\sum_{x} \bar{q}(x) =1$ then there is an algorithm that uses $O(dn)$ calls to the oracle to sample from a probability distribution $q$ with $||\tilde{p}-q||_1 \leq 2\epsilon$. 
	\label{lemma:samplingtocompute}
\end{lemma}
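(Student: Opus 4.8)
The plan is to prove the reduction by exhibiting an explicit sampler that draws the outcome coordinate by coordinate using the conditional (quasi-)probabilities extracted from the marginals of $\bar q$, and then to control the resulting total-variation error by comparison with the globally renormalised clipping of $\bar q$. The hypothesis $\sum_{\x}\bar q(\x)=1$ together with $||\tilde p - \bar q||_1\le\epsilon$ will do all the heavy lifting.

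First I would fix the sampling procedure. Processing coordinates $x_1,\dots,x_n$ in order, suppose the prefix $x_{<k}=(x_1,\dots,x_{k-1})$ has been fixed. I query the oracle for the $d$ values $\bar q(x_{<k},b)$ with $b\in\mathbb{Z}_d$ (the $k$-th marginals of the chosen prefix extended by each candidate), clip the negatives, and sample $x_k=b$ with probability $q(b\mid x_{<k})=[\bar q(x_{<k},b)]_+/Z(x_{<k})$, where $[t]_+=\max(t,0)$ and $Z(x_{<k})=\sum_{b'}[\bar q(x_{<k},b')]_+$ (defaulting to the uniform conditional when $Z=0$). Each coordinate costs $d$ oracle calls and there are $n$ of them, giving the claimed $O(dn)$ calls; the output distribution $q(\x)=\prod_k q(x_k\mid x_{<k})$ is manifestly a genuine probability distribution.

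The conceptual heart of the error bound is the globally clipped distribution $q^\star(\x)=[\bar q(\x)]_+/N$ with $N=\sum_{\x}[\bar q(\x)]_+$. Since $\tilde p\ge 0$ and $\sum_{\x}\bar q(\x)=1$, the total negative mass of $\bar q$ is at most $||\tilde p-\bar q||_1\le\epsilon$, whence $N=1+(\text{negative mass})\le 1+\epsilon$. Splitting $||q^\star-\tilde p||_1$ into a renormalisation part, $||q^\star-[\bar q]_+||_1=N-1\le\epsilon$, and a clipping part, $||[\bar q]_+-\tilde p||_1\le||\bar q-\tilde p||_1\le\epsilon$ (because clipping a negative value towards $\tilde p\ge 0$ only decreases each term), gives $||q^\star-\tilde p||_1\le 2\epsilon$, exactly the target bound.

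The main obstacle is that $q^\star$ is \emph{not} the distribution my efficient sampler produces: the oracle supplies the marginals of $\bar q$, not of $[\bar q]_+$, so the chain-rule sampler clips level-by-level rather than globally. I would close this gap with a hybrid (telescoping) argument that interpolates between $\tilde p$ and $q$ one coordinate at a time, using that marginalisation is an $l_1$-contraction so that each intermediate $k$-marginal of $\bar q$ is within $\epsilon$ of that of $\tilde p$. The delicate point — and precisely where the careful accounting of \cite{bremner2017achieving} enters — is to show that the per-coordinate clipping discrepancies collapse to a constant factor rather than accumulating an $n$-dependent blow-up, by charging each step's error against the normalisation defect $Z(x_{<k})-\bar q(x_{<k})$ and noting that any negative prefix mass is already absorbed into the global $\epsilon$-budget. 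The hypothesis $\sum_{\x}\bar q(\x)=1$ guarantees well-normalised conditionals at every level, and efficient evaluability of the required marginals is what keeps each query cheap.
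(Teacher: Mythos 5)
Your sampler and the $O(dn)$ query count are exactly right, and your analysis of the globally clipped distribution $q^{\star}=[\bar q]_+/N$ is correct as far as it goes. But there is a genuine gap, and you have located it yourself: the bound $\|q^{\star}-\tilde p\|_1\le 2\epsilon$ concerns a distribution the algorithm does not sample from, and the coordinate-wise "hybrid (telescoping) argument" you defer to is precisely the content of the lemma --- moreover, as sketched, it fails. Interpolating between $\tilde p$ and the sampler's output $q$ one coordinate at a time produces, at level $k$, a term of the form $\sum_{x_{\le k}}|\tilde p(x_{\le k})-\tilde p(x_{<k})[\bar q(x_{\le k})]_+/Z(x_{<k})|$; each such term is bounded only by $O(\epsilon)$ (via the $l_1$-contraction of marginalisation), and there are $n$ levels, so this route yields $\|\tilde p-q\|_1=O(n\epsilon)$, not $2\epsilon$. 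The proposed rescue ("charging each step's error against the normalisation defect") does not work: negative mass propagates down the marginal hierarchy, so the clipped mass at level $k$ can be of order $\epsilon$ at \emph{every} level simultaneously, and the per-level budgets genuinely sum to $\Theta(n\epsilon)$. No additive interpolation argument can give a constant bound here.

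The missing idea is a \emph{one-sided multiplicative} bound on the sampler's output, not an additive hybrid. Marginal consistency of $\bar q$ gives $Z(x_{<k})=\sum_{b}[\bar q(x_{<k},b)]_+\ge\big[\sum_b\bar q(x_{<k},b)\big]_+=[\bar q(x_{<k})]_+$, and at the root $Z(\emptyset)\ge\sum_{\boldsymbol{x}}\bar q(\boldsymbol{x})=1$ (this is where the normalisation hypothesis enters). Hence the chain-rule product telescopes:
\[
q(\boldsymbol{x})=\prod_{k=1}^{n}\frac{[\bar q(x_{\le k})]_+}{Z(x_{<k})}\;\le\;\frac{[\bar q(x_{\le 1})]_+}{Z(\emptyset)}\prod_{k=2}^{n}\frac{[\bar q(x_{\le k})]_+}{[\bar q(x_{<k})]_+}\;=\;\frac{[\bar q(\boldsymbol{x})]_+}{Z(\emptyset)}\;\le\;[\bar q(\boldsymbol{x})]_+,
\]
where prefixes with a vanishing intermediate $[\bar q(x_{\le j})]_+$ are reached with probability zero, so the uniform fallback never contributes. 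Since $\tilde p$ and $q$ are both normalised probability distributions, $\frac{1}{2}\|\tilde p-q\|_1=\sum_{\boldsymbol{x}}(q(\boldsymbol{x})-\tilde p(\boldsymbol{x}))_+\le\sum_{\boldsymbol{x}}([\bar q(\boldsymbol{x})]_+-\tilde p(\boldsymbol{x}))_+\le\sum_{\boldsymbol{x}}|\bar q(\boldsymbol{x})-\tilde p(\boldsymbol{x})|\le\epsilon$, giving $\|\tilde p-q\|_1\le 2\epsilon$ with no $n$-dependence and with your $q^{\star}$ never needed. For context: the paper itself does not prove this lemma but imports it from \cite{bremner2017achieving}, and this telescoping-domination step is exactly the core of that reference's argument.
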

\begin{corollary}
		\label{thm:sampling-corrolary}
			Suppose the 
		%probability distribution $p({\bf{x}})$ anti-concentrates and 
		probability distribution marginals $\sum_{x_{i_1},... x_{i_j}} \<{\bf{x}}|T^{\lambda}_{k} |\bf{x}\>$ can be efficiently evaluated for all basis operators. 
		% Suppose all the marginals $\tr ( |\bf{x}\>\< \bf{x}| T^{\lambda}_k)$ can be efficiently evaluated for 
		Given Assumptions 0-3 then there exists an explicit classical algorithm which outputs a function $\tilde{p}^{\T}({\bf{x}})$ with $l_1$-approximation error $ ||\tilde{p}- \tilde{p}^{\T} ||_1 \leq \epsilon$, provided that the noiseless residual (quasi)-distribution $p -p^{\T}$ anti-concentrates and with runtime either
		\begin{enumerate}
			\item  constant-degree polynomial scaling in $n$ and exponential scaling in the inverse spectral gap $\gamma^{-1}$ with  $\O(poly(n)) \left(\frac{\O(1)}{\epsilon} \right)^{\O(1/\gamma)}$ if $s, d_{max} = \O(1)$ % and $L = \O(1)$ or if $s, d_{max} = \O(1)$ and $L = \O(log(n))$
			\item quasi-polynomial in $n$ (for inverse polynomial precision $\epsilon$) and exponential in the inverse spectral gap $\gamma^{-1}$ with  $\O(poly(n)) n^{\O(\log{\epsilon^{-1}}/\gamma)}$ if $sd_{max} = poly(n)$  and  $\O(poly(n)) \log{n}^{\O(\log{\epsilon^{-1}}/\gamma)}$ if $sd_{max} = \O(log(n))$ % and $L = \O(log(n))$. 
		\end{enumerate}
	This can be slightly tightened by replacing $\gamma \rightarrow \log{(1-\gamma)^{-1}}$.
\end{corollary}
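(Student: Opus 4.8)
The plan is to obtain the corollary as a direct specialisation of Theorem~\ref{thm:mainsampling}, fixing the free truncation parameter $L$ so as to meet the target precision $\epsilon$ and then substituting the resulting $L$ into the runtime bound. Theorem~\ref{thm:mainsampling} already supplies, under Assumptions 0--3 and anti-concentration of $p-p^{\T}$, an explicit algorithm producing $\tilde{p}^{\T}$ with $||\tilde{p}-\tilde{p}^{\T}||_1 \leq \O(1)(1-\gamma)^L$ and runtime $\O(D)\,poly(n)\,\O((sd_{max})^{2L})$, so the only work remaining is arithmetic on $L$ together with the bookkeeping of the three parameter regimes.

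First I would choose $L$ to force the error below $\epsilon$. Requiring $(1-\gamma)^L \leq \epsilon$ and solving gives
\begin{align}
L = \left\lceil \frac{\log(1/\epsilon)}{\log\bigl((1-\gamma)^{-1}\bigr)} \right\rceil = \O\!\left(\frac{\log(1/\epsilon)}{\gamma}\right),
\end{align}
where the last estimate uses $\log\bigl((1-\gamma)^{-1}\bigr) \geq \gamma$. Retaining the exact denominator $\log\bigl((1-\gamma)^{-1}\bigr)$ rather than the weaker lower bound $\gamma$ is precisely the promised tightening $\gamma \rightarrow \log\bigl((1-\gamma)^{-1}\bigr)$ in the exponent.

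Next I would substitute this $L$ into the dominant factor $(sd_{max})^{2L}$ and split into the three regimes. If $s,d_{max}=\O(1)$ then $sd_{max}$ is a constant and $(sd_{max})^{2L}=\exp(\O(L))=(1/\epsilon)^{\O(1/\gamma)}$, giving runtime $poly(n)\,(\O(1)/\epsilon)^{\O(1/\gamma)}$. If $sd_{max}=poly(n)$ then $(sd_{max})^{2L}=n^{\O(L)}=n^{\O(\log(\epsilon^{-1})/\gamma)}$, which is quasi-polynomial in $n$ for inverse-polynomial $\epsilon$. If $sd_{max}=\O(\log n)$ then $(sd_{max})^{2L}=(\log n)^{\O(\log(\epsilon^{-1})/\gamma)}$. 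In each case the $\O(D)$, $cost(\texttt{rank})$ and $cost(\texttt{contract-full})$ factors are $poly(n)$ by Assumptions 0 and 3 and are absorbed into the $poly(n)$ prefactor.

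I do not expect a genuine obstacle here, since the content of the corollary is entirely carried by Theorem~\ref{thm:mainsampling} and the remaining step is routine manipulation of big-$\O$ exponents. The one point requiring care is that the passage from the $l_2$ bound supplied by Parseval to the $l_1$ bound needed for sampling costs a factor $\sqrt{d^n}$, which is controlled only because the residual quasi-distribution $p-p^{\T}$ is assumed to anti-concentrate; I would verify that this hypothesis enters exactly where needed and that the resulting $\O(1)$ constant is carried through the choice of $L$ without affecting the stated scaling. Finally, to state the end-to-end sampling guarantee, Lemma~\ref{lemma:samplingtocompute} converts the computed $\tilde{p}^{\T}$ and its efficiently evaluable marginals into samples at an $\O(dn)$ multiplicative overhead, which is again absorbed into $poly(n)$.
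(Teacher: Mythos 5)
Your proposal is correct and follows essentially the same route as the paper's own proof: invoke Theorem~\ref{thm:mainsampling}, set $(1-\gamma)^L \leq \epsilon$ to obtain $L \approx \gamma^{-1}\log(1/\epsilon)$ (with the tightening coming from keeping $\log\bigl((1-\gamma)^{-1}\bigr)$ in place of $\gamma$), substitute into the $(sd_{max})^{\O(L)}$ runtime factor, and split into the three regimes for $sd_{max}$. Your additional remarks on where anti-concentration enters and on Lemma~\ref{lemma:samplingtocompute} match the structure of the theorem's proof and the sampling reduction as used in the paper.
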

\begin{proof}
	We directly use Theorem~\ref{thm:mainsampling} such that we have a method to output a function $\tilde{p}^{\T}$ with  $||\tilde{p} - \tilde{p}^{\T}||\leq  (1-\gamma)^{L}$. Then to get the target approximation error we ask $(1-\gamma)^L \leq  \epsilon$, which implies that $L log(1-\gamma) \leq log(\epsilon)$ so $L \geq log(\epsilon)/ log(1-\gamma) = log(1/\epsilon) \frac{1}{log((1-\gamma)^{-1})} $  Therefore choosing a value of $L \approx \frac{1}{\gamma} log(1/\epsilon)$ satisfies the target approximation error. Turning now to the complexity of the algorithm we have that the total runtime is going to be at most $\O(\D) poly(n) (L (sd_{max})^L)$ where we emphasise the degree of the polynomial does not depend on $\gamma$ or  $\epsilon$. We have then
	\begin{align}
		\O(D) poly(n)(L (sd_{max})^L)  &= \O(D) poly(n) \gamma^{-1}\log{\epsilon^{-1}}  \  \  (sd_{max})^{\frac{1}{\gamma} \log{\epsilon^{-1}}}\\
		& = \O(D) poly(n)   \gamma^{-1}\log{\epsilon^{-1}}  \left(\frac{1}{ \epsilon}\right)^{ \frac{\log{sd_{max}}}{\gamma} }\\
		& = \begin{cases}
			 \O(poly (n)) \gamma^{-1}\log{\epsilon^{-1}}  \left( \frac{1}{\epsilon}\right)^{ \O(\frac{\log(n)}{\gamma})}  \hspace{1cm}  {\rm{if}}   \hspace{1cm}   sd_{max} = poly(n) \\
			 \O(poly (n)) \gamma^{-1} log \epsilon^{-1} \left( \frac{1}{\epsilon}\right)^{ \O(\frac{\log{\log{n}}}{\gamma})}   \hspace{0.9cm}  {\rm{if}}   \hspace{1cm}    sd_{max} = \O(\log{n}) \\
			\O(poly (n)) \gamma^{-1} log \epsilon^{-1} \left( \frac{1}{\epsilon}\right)^{ \O(\frac{1}{\gamma})}   \hspace{1.6cm}  {\rm{if}}   \hspace{1cm}    sd_{max} = 
			\O(1) 
			\end{cases}
	\end{align}
where we have used that the circuit depth is $D = poly(n)$.
\end{proof}

\subsection{Operator truncated classical simulation framework - for $W$'s permutation of operator basis}
In this section we will describe how the framework simplifies whenever the $W$-type operators in the ensemble of circuits act as permutation of the operator basis elements considered.  

\subsubsection{Operator truncated expansion for a restricted class of unitaries}%Heisenberg picture operator expansion.}}
%Given a circuit $C(\g)$ one is interested in preparing a state $|\psi\> = C(\g)|0^{\otimes n} \>$  by applying the circuit to a computational basis. As computational tasks, we typically either want to determine properties of the state by computing mean values for some observable $O$ with $\mathbb{E}(O) := \<\psi|O|\psi\>$  or we want to produce samples obtained from measuring $|\psi\>$. Our treatment here concerns both situations, but we focus for now on measuring expectation values. The reason is that the sampling task will have a sub-component that computes marginal probability distributions  which can be expressed as a mean value problem. With these marginal distributions in hand then one can employ a reduction from computing to sampling to obtain samples.  

\label{sec:paths}
In here we describe in more detail how the framework is applied to unitary classes where the $W$ operators are a permutation of the ITO basis.

The observable $\O$ can be decomposed in terms of the ITO basis as $O = \sum_{\lambda,k} O_{\lambda,k}  T^{\lambda}_k$ for some complex coefficients given by $O_{\lambda,k }= \tr [(T^{\lambda}_k)^{\dagger} O]$.  Therefore we have by linearity that the mean value of $O$ with respect to the state $C(\g)|0^n\>$ is given by
\begin{align}
o({\g}) = \sum_{\lambda,k} O_{\lambda,k} \< 0^{\otimes n}| C^{\dagger}(\g) T^{\lambda}_k C(\g) |0^{\otimes n} \> 
\end{align}
Now our restriction that $W$ is a permutation of the irreducible tensor operators means that $W^{\dagger} T^{\lambda}_{k} W = T^{w(\lambda)}_{w(k')}$  for $w$ a permutation of the irrep labels and components with $w(\lambda)$ denoting a potentially distinct irrep and $w(k')$ the corresponding component.  This makes it easier to see how each ITO $T^{\lambda}_k$ propagates through the circuit. Specifically, in the Heisenberg picture after applying the first layer the evolved operator becomes

\begin{align}
C^{\dagger} (\g) T^{\lambda}_k C (\g) = \sum_{k'} v^{\lambda}_{k'k}(g^{-1}_{1})   [W^{\dagger} U^{\dagger}(g_D)...WU^{\dagger}(g_2)] T^{w(\lambda)}_{w(k')} [U(g_2) W .... U(g_D) W].
\end{align}

We continue this to evolve through the entire circuit and obtain the full decomposition
%\begin{align}
%	C^{\dagger} (\g) T^{\lambda}_k C (\g) = \sum_{k_1,..., k_{D}} v^{\lambda}_{k_1k}(g^{-1}_{1}) v^{w(\lambda)}_{k_2w(k_1)}(g^{-1}_{2}) ... v^{w^{D-1}(\lambda)}_{k_{D} w^{D-1}(k_{D-1})}  (g^{-1}_{D}) T^{w^{D}(\lambda)}_{w^{D}(k_D)} 
%\end{align}
%Maybe this notation is clearer
\
\begin{align}
C^{\dagger} (\g) T^{\lambda}_k C (\g) =% \sum_{\substack{k_1,.. k_D \\  (\lambda_i,k_i') = w(\lambda_{i-1}, k_i) }}
\sum_{(\boldsymbol{\lambda},\boldsymbol{k})} v^{\lambda}_{k_1k}(g^{-1}_{1}) v^{\lambda_2}_{k_2 k_1'}(g^{-1}_{2}) ... v^{\lambda_D}_{k_{D} k_{D-1}'} (g^{-1}_{D}) T^{\lambda_D}_{k_D} .
\end{align}
In the above, it's worth emphasising that the summation should be viewed  not over all possible \emph{paths} choices of irreps and components $(\lambda,k)\rightarrow (\lambda_2,k_2)\rightarrow ...\rightarrow (\lambda_D,k_D)$, but only over those \emph{valid paths} that give a non-zero contribution to the backwards-evolved operator. The circuit structure rules out certain paths, giving us the following constraint.  Any subsequent possible choices of irrep and component $(\lambda_i, k_i)$  in the path depend on the component and irrep of the previous step, and this dependency is dictated by $W$. For simplicity, in this overview section we assumed that $W$ is a permutation of the basis elements. It implies that propagating the fixed ITO $T^{\lambda}_k$ through a single layer gives us a combination of at most $d_{\lambda}$ basis elements.  Starting from an operator $T^{\lambda}_k$  - the fact that these form a symmetry-adapted basis means that  the action of the unitary channel $U(g)$ does not mix basis operators associated to different $\lambda$-irrep subspaces.  Therefore there will be at most $d_{\lambda}$ possible branches.  On each of these branches the input operator $T^{\lambda}_k$ is mapped to some other basis operator $T^{\lambda}_{k'}$ with a particular weighting given by the corresponding matrix coefficient $v^{\lambda}_{k'k}(g)$. We can associate these weights to each edge starting at $T^{\lambda}_k$ and ending at one of the allowed operators. The branching process itself is independent of which group elements are applied at each layer - the only thing that changes is the value of the weights on the edges.  Subsequently we apply the unitaries $W$ which permute the symmetry-adapted operator basis. This means there's no further branching out, but the output operators on each of the $d_{\lambda}$ branches no longer correspond necessarily to the same irrep $\lambda$.  The operators at the end the branches now become input operators for the next layer, and the whole process continues as described in Fig.~\ref{paths}.

In light of these constraints on the valid paths that occur in the expansion of $C^{\dagger} (\g)T^{\lambda}_k C(\g)$ we can simplify notation for the paths and describe two types of functions, a path validation and path evaluation:

\begin{align}
(\boldsymbol{\lambda},\boldsymbol{k}) &:= (\lambda,k) \rightarrow (\lambda_2,k_2) \rightarrow ... \rightarrow (\lambda_D, k_D)  \  \ {\rm{with} } \ \boldsymbol{\lambda} =  [ \lambda, \lambda_2, ...\lambda_D]  \ \  {\rm{and} }  \  \  \boldsymbol{k}   =  [k,k_2,..., k_D]\\
\texttt{validate}[(\boldsymbol{\lambda}, \boldsymbol{k}), w] &= True   \  \  {\rm{iff}} \  \  \exists!  k',k_2',..., k_D'   \  \ {\rm{s.t}  } \  \ (\lambda_i,k_i') = w(\lambda_{i-1}, k_i)  \\
\texttt{matrix-coeff}[(\boldsymbol{\lambda}, \boldsymbol{k}), \g, w]  &= v^{\lambda}_{k_1k}(g^{-1}_{1}) v^{\lambda_2}_{k_2 k_1'}(g^{-1}_{2}) ... v^{\lambda_D}_{k_{D} k_{D-1}'} (g^{-1}_{D}) =: v^{\boldsymbol{\lambda}}_{\boldsymbol{k',k}}(\g^{-1}) 
\end{align}

%Furthermore let's denote $\boldsymbol{\lambda} = (\lambda, \lambda_1, ..., \lambda_{D} )$. We can see that the coefficients in the expansion correspond to the matrix coefficient of irreducible representations of $G^{\times D}$ which are all formed from tensor products of irreps of $G$.  Therefore we can write in the compact notation
%\begin{align}
%	C^{\dagger} (\g) T^{\lambda}_k C (\g) = \sum_{\substack{k_1,.. k_D \\  (\lambda_i,k_i') = w(\lambda_{i-1}, k_i) }} v^{\boldsymbol{\lambda}}_{(k 1,...k_D), (k,k_1',..., k_{d-1}')} (\g^{-1}) T^{\lambda_D}_{k_D} .
%\end{align}

Finally, using the above notation we have that the operator $O$ fully evolved in the Heisenberg picture decomposes 
\begin{align}
\O_{\g} := C^{\dagger}(\g) O C(\g) =   \sum_{\substack{(\boldsymbol{\lambda},\boldsymbol{k}) \\ \texttt{validate}[(\boldsymbol{\lambda}, \boldsymbol{k})] = True}} v^{\boldsymbol{\lambda}}_{\boldsymbol{k'},\boldsymbol{k}}(\g^{-1}) O_{\lambda,k}  T^{\lambda_D}_{k_D}.% \sum_{\substack{\lambda, k,k_1,.. k_D \\  (\lambda_i,k_i') = w(\lambda_{i-1}, k_i) }} O_{\lambda,k} \, v^{\boldsymbol{\lambda}}_{(k_1,...k_D), (k,k_1',..., k_{d-1}')} (\g^{-1}) T^{\lambda_D}_{k_D} .
\end{align}
In general, there will be multiple paths that start at  $(\lambda,k)$ and end up at $(\lambda_D, k_D)$ that correspond to different branching decisions. Counting the number of total valid paths depends on the dimension of the allowable irreps (e.g those that appear in the decomposition of $U\otimes U^{*}$) but also on how the basis elements get mixed up by the fixed unitaries such as $W$.   In the restricted situation we have considered here where $W$ is a permutation unitary in the ITO basis, it is fairly straightforward to compute a loose upper bound for the size $|\P|$ of all valid paths. This would be given by  $  |\P| \leq  d_{max}^D \, |\Lambda| $  where we denoted by $d_{max} := \max_{ \lambda\in U\otimes U^{*}\cap \hat{G} }{\rm{dim}}(\lambda)$  the maximal dimension of the irrep in the decomposition of $U\otimes U^{*}$ and by $\Lambda = \{\lambda\in \hat{G} :   \exists k  \  {\rm{s.t}} \   \Tr(T^{\lambda}_k O) \neq 0  \}$ the set of all irreps of $G$ on which $O$ has non-zero support. However, this is a very generic upper bound, so it is to be expected that specific circuit classes would lead to smaller bounds.

\subsubsection{Classical simulability of noisy circuits and average errors in the  truncated approximation for observables}

Once a truncation set $\mathcal{T}$ has been established, either by enumerating paths in $\mathcal{T}$ or defining a rule that tells us if a given path belongs to the set, then we want to check if it enables an efficient classical computation of the truncated operator. There are three ingredients that makes this possible: (i) The size of the truncation set (ii) Efficiency in the evaluation of each path (iii) Approximation error that is bounded. 

The truncated operator evolution for an observable $O$ will be given by
\begin{equation}
\O_{\g}^{\T} = \sum_{\substack{(\boldsymbol{\lambda},\boldsymbol{k}) \in \mathcal{T} }} v^{\boldsymbol{\lambda}}_{\boldsymbol{k'},\boldsymbol{k}}(\g^{-1}) O_{\lambda,k}  T^{\lambda_D}_{k_D},
\end{equation}
where we use the truncation set defined in~\eqref{eqn:truncation} determined by paths that satisfy $||\boldsymbol{\lambda}||_0 \leq L$ for a fixed cut-off parameter $L$.

{\bf{Size of the truncation set}} Recall we have previously defined $|\Lambda|$ as the size of the support of the given observable $O$ on the ITO basis. In here, we will assume this has a polynomial size in $n$, the number of qudits. Therefore, there will be $poly(n)$ number of starting operators $T^{\lambda}_k$ for which the coefficient $O_{\lambda,k } = Tr((T^{\lambda}_k)^{\dagger} O) \neq 0$ is non-zero. For each of these, we construct the valid paths $(\boldsymbol{\lambda}, \boldsymbol{k}) $ that start at $(\lambda,k)$ and satisfy the truncation rule. Starting with $T^{\lambda}_k$  and propagating through the circuit as we did in Section~\ref{sec:paths} we notice that we encounter multiple branching paths only when the propagated operator that becomes an input to the next layer does not lie in the trivial subspace. In other words, within a path $(\lambda_0,k_0 ) \longrightarrow (\lambda_1, k_1) \longrightarrow ... \longrightarrow (\lambda_{D-1},k_{D-1})$, whenever $(\lambda_i,k_i)$ correspond to a trivial irrep (we can view the component here as a multiplicity label), then this means knowledge of the circuit, in particular of $W$ will be enough to completely determine $(\lambda_{i+1}, k_{i+1})$. This is a result of  $W$ being a permutation in the ITO basis so then we deterministically move from $(\lambda_i, k_i)$ to $(\lambda_{i+1}, k_{i+1})$ and there are no other possible choices to propagate the $T^{\lambda_i}_{k_i}$ operator through the next layer.  Because of our constraint $||\boldsymbol{\lambda}||_0\leq L $ along each valid path in the truncation set we can encounter at most $L$ positions where the tree starting at $T^{\lambda}_k$ branches out into multiple paths. Therefore all the possible valid paths in $\T$ that start at $T^{\lambda}_k$ are described by a tree with height $L$.  The tree will have at each node a degree that's at most $d_{max}$, which is the maximal dimension of the irreps in the decomposition of $U\otimes U^{*}$.  Putting it all together, this gives us the size of the truncation set $\T$ as $|\T| \leq |\Lambda| d_{\rm{max}}^{L}$. Whenever $d_{max}$ is polynomial in the number of qudits then this leads to a polynomial-sized truncation set $|\T| = poly(n)$. 

{\bf{Path evaluation}} The weight corresponding to a valid path in the truncation set $(\boldsymbol{\lambda}, \boldsymbol{k}) \in \T$ will have at most $L$ non-trivial irreps, and therefore the corresponding coefficient  $v^{\boldsymbol{\lambda}}_{\boldsymbol{k',k}}(\g^{-1})$ will be a product of at most $L$ terms. If the individual irreducible matrix coefficients can be evaluated efficiently (i.e polynomial-time in the number of qudits) for all the irreducible representations of $G$ that appear in the decomposition of $U\otimes U^{*} $ then we can compute $v^{\boldsymbol{\lambda}}_{\boldsymbol{k',k}}(\g^{-1})$ in $O(L)$ time.  Similarly, to compute the expectation values for the truncated observables, we also need to be able to efficiently compute $O_{\lambda, k} $ and $Tr(T^{\lambda_D}_{k_D} |0^{n}\>\<0^n|)$.  %It is worthwhile to reflect on what it takes to ``efficiently evaluate" the matrix irreps. One constraint 

{\bf{Average approximation bounds}}  Finally, we need to quantify how well the truncated operator approximates the fully evolved observable when we include the effects of noise. To that aim we will employ an \emph{average} approximation error given by the $L_2$ norm average over all the independent group elements $\g = (g_1,..., g_D)$. This allows us to connect the approximation error with the contributions of paths not included in the truncation set. As before, we achieve this using a generalised form of Parseval's theorem which we state here in its simplified form for the particular circuits considered here where the $W$ operators are permutations of the basis operators.

\begin{lemma} 
	Let's define functions $f: G^{\times D} \longrightarrow \mathbb{C} $ via $f(\g) =  \<\psi| \O_{\g} |\phi\>$, then we have
	\begin{align}
		||f||_{L^2(G)}^2 & = \int |\<\psi|\O_{\g}|\phi\>|^2 d\, \g  =  \sum_{\substack{(\boldsymbol{\lambda},\boldsymbol{k}) \\ \texttt{\textup{validate}}[(\boldsymbol{\lambda}, \boldsymbol{k})] = True}}  |O_{\lambda,k} |^2 \frac{1}{d_{\boldsymbol{\lambda}}} |\<\psi|T^{\lambda_D}_{k_D} |\phi\>|^2. %\sum_{\substack{\lambda, k,k_1,.. k_D \\  (\lambda_i,k_i') = w(\lambda_{i-1}, k_i) }} |O_{\lambda,k} |^2 \frac{1}{d_{\boldsymbol{\lambda}}} |\<\psi|T^{\lambda_D}_{k_D} |\phi\>|^2.
	\end{align} 
	where $d_{\boldsymbol{\lambda}} = d_{\lambda} d_{\lambda_1} ... d_{\lambda_D}$, with each $d_{\lambda_i}$ corresponding to the dimension of the $\lambda_i$ irrep.
\end{lemma}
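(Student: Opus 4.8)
The plan is to recognise that $f(\g)=\langle\psi|\O_{\g}|\phi\rangle$ becomes, once the symmetry-adapted path decomposition of $\O_{\g}$ is inserted, a finite linear combination of matrix coefficients of irreducible representations of the product group $\mathbb{G}=G^{\times D}$, and then to let Schur orthogonality (the single-group relation recalled earlier, applied factorwise) collapse the resulting double sum to its diagonal. First I would substitute the decomposition of $\O_{\g}$ into $f$; writing the sum over validated paths $(\boldsymbol{\lambda},\boldsymbol{k})$, this gives
\[
f(\g)=\sum_{(\boldsymbol{\lambda},\boldsymbol{k})} c_{(\boldsymbol{\lambda},\boldsymbol{k})}\, v^{\boldsymbol{\lambda}}_{\boldsymbol{k'},\boldsymbol{k}}(\g^{-1}),\qquad c_{(\boldsymbol{\lambda},\boldsymbol{k})}:=O_{\lambda,k}\,\langle\psi|T^{\lambda_D}_{k_D}|\phi\rangle,
\]
so that $f$ is a superposition of the functions $\g\mapsto v^{\boldsymbol{\lambda}}_{\boldsymbol{k'},\boldsymbol{k}}(\g^{-1})$ with purely scalar coefficients $c_{(\boldsymbol{\lambda},\boldsymbol{k})}$ carrying all the dependence on $O$, $\psi$ and $\phi$.

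The key structural observation is that each such function is exactly a matrix coefficient of an irrep of $\mathbb{G}$: since $\boldsymbol{\lambda}=\lambda\otimes\lambda_2\otimes\cdots\otimes\lambda_D$ and the product $v^{\boldsymbol{\lambda}}_{\boldsymbol{k'},\boldsymbol{k}}(\g^{-1})=\prod_{l}v^{\lambda_l}_{\cdot\,\cdot}(g_l^{-1})$ has each group element $g_l$ appearing in exactly one factor, it is the matrix coefficient of the product irrep $\boldsymbol{\lambda}$ with a definite row and column multi-index. I would then form $\int_{\mathbb{G}}|f|^2\,d\g$, expand it into a double sum over pairs of paths $(\boldsymbol{\lambda},\boldsymbol{k})$ and $(\boldsymbol{\mu},\boldsymbol{m})$, use invariance of the Haar measure under inversion to replace $\g^{-1}$ by $\g$, and apply Schur orthogonality on each copy of $G$ separately, so that the factorised integral reads
\[
\int_{\mathbb{G}} v^{\boldsymbol{\lambda}}_{\boldsymbol{k'},\boldsymbol{k}}(\g)\,\big(v^{\boldsymbol{\mu}}_{\boldsymbol{m'},\boldsymbol{m}}(\g)\big)^{*}\,d\g=\frac{1}{d_{\boldsymbol{\lambda}}}\,\delta_{\boldsymbol{\lambda},\boldsymbol{\mu}}\,\delta_{\boldsymbol{k},\boldsymbol{m}}\,\delta_{\boldsymbol{k'},\boldsymbol{m'}},
\]
where $d_{\boldsymbol{\lambda}}=d_\lambda d_{\lambda_2}\cdots d_{\lambda_D}$ arises as the product of the individual $1/d_{\lambda_l}$ normalisations.

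The step I expect to be the main obstacle is arguing that the off-diagonal contributions genuinely vanish, i.e.\ that the assignment of a valid path to the triple (irrep $\boldsymbol{\lambda}$, row multi-index, column multi-index) is injective on the validated set, so that no two distinct paths produce the same matrix coefficient. Here I would use the hypothesis that $W$ permutes the ITO basis: the primed indices $\boldsymbol{k'}$ are then uniquely fixed by $(\boldsymbol{\lambda},\boldsymbol{k})$ through the permutation $w$, so that the row-index deltas $\delta_{\boldsymbol{k},\boldsymbol{m}}$ together with $\delta_{\boldsymbol{\lambda},\boldsymbol{\mu}}$ already force the two paths to coincide entirely and the column-index deltas $\delta_{\boldsymbol{k'},\boldsymbol{m'}}$ become automatically consistent. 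Once this injectivity is in hand, only the diagonal terms $(\boldsymbol{\mu},\boldsymbol{m})=(\boldsymbol{\lambda},\boldsymbol{k})$ survive, each weighted by $1/d_{\boldsymbol{\lambda}}$ and by $|c_{(\boldsymbol{\lambda},\boldsymbol{k})}|^2=|O_{\lambda,k}|^2\,|\langle\psi|T^{\lambda_D}_{k_D}|\phi\rangle|^2$, which is precisely the claimed identity. Finally I would remark that this is just the specialisation of the general Parseval/Peter-Weyl statement recorded earlier to the present permutation setting.
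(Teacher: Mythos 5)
Your proposal is correct and follows essentially the same route as the paper's own proof: insert the symmetry-adapted path decomposition of $\O_{\g}$, expand $\int|f|^2\,d\g$ as a double sum over pairs of validated paths, and collapse it to the diagonal via factorwise Schur orthogonality on $\mathbb{G}=G^{\times D}$, yielding the $1/d_{\boldsymbol{\lambda}}$ weights. Your added observations — that Haar invariance under inversion lets you trade $\g^{-1}$ for $\g$, and that the primed indices $\boldsymbol{k'}$ are functions of $(\boldsymbol{\lambda},\boldsymbol{k})$ when $W$ permutes the ITO basis, so the deltas $\delta_{\boldsymbol{\lambda},\boldsymbol{\mu}}\delta_{\boldsymbol{k},\boldsymbol{m}}$ already force the paths to coincide — are minor refinements that the paper leaves implicit, not a different argument.
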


\begin{proof}
	We have that $O = \sum_{\lambda, k}  O_{\lambda, k}  T^{\lambda}_k$ with $O_{\lambda,k } = \Tr( O (T^{\lambda}_l )^{\dagger})$ since $\{T^{\lambda}_k\}$ form an orthonormal operator basis. Furthermore, the evolved operator $\O_{\g} = C(\g)^{\dagger} O  C(\g)$ expands into valid paths as
	\begin{equation}
		\O_{\g}   =   \sum_{\substack{(\boldsymbol{\lambda},\boldsymbol{k}) \\ \texttt{validate}[(\boldsymbol{\lambda}, \boldsymbol{k})] = True}}   v^{\boldsymbol{\lambda}}_{\boldsymbol{k',k}} (\g^{-1} )  T^{\lambda_D}_{k_D}  O_{\lambda, k}
	\end{equation}
	
	Therefore we have that
	%\begin{widetext}
	\begin{align}
		||f||^2_{L^2(G)} & =  \int   \sum_{\substack{(\boldsymbol{\lambda},\boldsymbol{k}), (\boldsymbol{\mu},\boldsymbol{m}) \\ \texttt{validate}[(\boldsymbol{\lambda}, \boldsymbol{k})] = True  \\ \texttt{validate}[(\boldsymbol{\mu}, \boldsymbol{m})] = True}}    v^{\boldsymbol{\lambda}}_{\boldsymbol{k',k}} (\g^{-1} )  \<\psi|T^{\lambda_D}_{k_D} |\phi\> O_{\lambda, k}     (v^{\boldsymbol{\mu}}_{\boldsymbol{m',m}} (\g^{-1} ))^{*}  \<\psi|T^{\mu_D}_{m_D} |\phi\>^{*} O_{\mu, m} ^{*}d\, \g \\
		& =     \sum_{\substack{(\boldsymbol{\lambda},\boldsymbol{k}), (\boldsymbol{\mu},\boldsymbol{m}) \\ \texttt{validate}[(\boldsymbol{\lambda}, \boldsymbol{k})] = True  \\ \texttt{validate}[(\boldsymbol{\mu}, \boldsymbol{m})] = True}}  \<\psi|T^{\lambda_D}_{k_D} |\phi\> O_{\lambda, k}       \<\psi|T^{\mu_D}_{m_D} |\phi\>^{*} O_{\mu, m} ^{*}    \int v^{\boldsymbol{\lambda}}_{\boldsymbol{k',k}} (\g^{-1} ) (v^{\boldsymbol{\mu}}_{\boldsymbol{m',m}} (\g^{-1} ))^{*} d\, \g  
	\end{align}
	%\end{widetext}
	However from Schur orthogonality theorem we have that
	\begin{align}
		\int v^{\boldsymbol{\lambda}}_{\boldsymbol{k',k}} (\g^{-1} ) (v^{\boldsymbol{\mu}}_{\boldsymbol{m',m}} (\g^{-1} ))^{*} d\, \g  = \frac{1}{d_{\lambda_0}} \frac{1}{d_{\lambda_1}}... \frac{1}{d_{\lambda_{D-1}}}  \delta_{\boldsymbol{\lambda,\mu}} \delta_{\boldsymbol{k,m}}\delta_{\boldsymbol{k',m'}} \
	\end{align}
\end{proof}

Now we want to see the approximation error if we were to truncate the noisy operator decomposition. Keeping with the notation in the previous section and adding tilde's to denote the noisy version of the operator $\tilde{\O}_{\g} = \tilde{\C}(\g^{-1}) O \, \tilde{\C}(\g) $ and its truncated form $\tilde{\O}_{\g}^{\T}$ over a subset, we then obtain the average approximation error
\begin{align}
(\tilde{\Delta}^{\T} )^2:&= || \tilde{o}(\g)  -  \tilde{o}^{\T}(\g) ||_{L^2(G)} ^2 \\ &=  \sum_{(\boldsymbol{\lambda},\boldsymbol{k}) \notin \T } |O_{\lambda_0,k_0}|^2 \frac{| e_{\lambda} e_{\lambda_2} ... e_{\lambda_D}|^2}{d_{\lambda_0} d_{\lambda_1} ... d_{\lambda_{D-1}}} |\<0^n| T^{\lambda_D}_{k_D} |0^n\>|^2
\end{align}
which follows again from the application of Parseval's theorem. Since we have chosen the truncation set in such a way that all remaining contributions are contracted by the effect of the noise channel, then we have that $|e_\lambda ... e_{\lambda_D} |\leq E^{||\boldsymbol{\lambda}||_0} $. Furthermore since $E<1$ and then for every $(\boldsymbol{\lambda}, \boldsymbol{k})\notin \mathcal{T}$ we have $||\boldsymbol{\lambda}||_0 > L$, this implies that $|e_\lambda ... e_{\lambda_D} | \leq E^{L}$. Therefore we can bound the average approximation error as 
\begin{align}
(\tilde{\Delta}^{\T})^2 &= || \tilde{o}(\g)  -  \tilde{o}^{\T}(\g) ||_{L^2(G)}^2   \leq  E^{2L} ||o(\g)  -  o^{\T}(\g) ||_{L^2(G)} ^2  \\  \nonumber & \leq E^{2L} ||o(\g)||_{L^2(G)}^2
\end{align}

Now, recall that we defined $E$ as the largest eigenvalue of the channel that has magnitude strictly less than 1.  Therefore the \emph{spectral gap} of the channel can be defined as $\gamma =1 - E$.  This allows us to have an exponential decaying bound as
\begin{align}
(\tilde{\Delta^{\T} })^2 \leq (1-\gamma)^{2L} ||o(\g) ||_{L^2(G)} ^2\leq e^{-2\gamma L } ||o(\g)||^2_{L^2(G)}.
\end{align}
%However, we can relate the spectral gap of the noise channel $\E$ to the ``error rate" given by the distance (in diamond norm) between $\E$ and the identity map.  \cc{check if true: Thus we have that $ \gamma \geq  || id - \E||_{\Diamond}$,} which gives  a bound on the average approximation error that decays exponentially in the error rate and the cut-off parameters. 

Furthermore we can use Cauchy-Schwartz inequality for the Hilbert-Schmidt norm $||o(g) ||^2_{L^2(G)} = \int |\Tr[C(\g) |0\>\<0| C^{\dagger}(\g) O]|^2 d\, \g  \leq  ||O||_{HS}^2 $  where $||O||_{HS}^2  = Tr(O^{\dagger} O)$. 

\section{Extended analysis related to the examples in main text}

\subsection{U(1)  group action on circuit ensembles}
\label{app:realU1}
\subsubsection{Example: U(1) complex vs real representations and the corresponding Fourier expansion}
$U(1)$ corresponds to the circle group $\mathbb{T}$ (unit complex numbers under multiplication) and therefore can be parametrised by angle  $\theta$ via the exponential map $\theta \rightarrow e^{i\theta}$.  As $U(1)$ is abelian all its complex irreducible representations are one-dimensional and indexed by integers $m\in \mathbb{Z}$ so that $V_{m} (\theta) : v \rightarrow e^{im\theta}v$ for $v\in \mathbb{C}$ with character $\chi_m(\theta) = e^{im\theta}$. However, the \emph{real} irreducible representations, with the exception of the trivial representation $\chi_0(\theta) = 1$, are two-dimensional, \footnote{Irreducible representations on complex spaces, together with the Frobenius-Schur indicator given by $\int_{G} \chi(g^2) d\,g$ tells us how to construct the real representations.  Conversely one can obtain the complex irreps by complexification.} with characters $\chi_{m} ^{(real)}(\theta)= 2\cos{m\theta}$ for $m\in \mathbb{N}^{*}$ and given by elements of $SO(2)$ with
\begin{equation}
	V_m^{(real)}(\theta) = \left( \begin{array}{cc} \cos{m\theta} & -\sin {m\theta} \\ \sin{m\theta} &\cos{m\theta} \end{array}\right)
\end{equation}
Just like in the complex case, the real characters are orthogonal, however they are no longer normalised.  Namely, 
\begin{align}
	\<\chi_m, \chi_n\> &= \int_{0}^{2\pi} \chi_m(\theta) \chi_n(-\theta) \frac{d\theta}{2\pi} = \delta_{m,n}\\
	\<\chi_m^{(real)}, \chi_n^{(real)}\>  &= 4\int_{0}^{2\pi} \cos{m\theta} \cos{n\theta} \frac{d\theta}{2\pi} = 2 \delta_{mn}, \   \ m\in \mathbb{N}^{*}
\end{align}

Now, consider a representation of $U(1)$ on a single qubit $\H \cong \mathbb{C}^2 $ given by $U_1 (\theta) = e^{iZ\theta}$ where $Z$ is the Pauli operator. The object of interest $U_1\otimes U_1^{*}$  also forms a representation of $U(1)$ that decomposes into complex irreps with $m=\pm 2, 0$ with the space decomposition $\h\otimes \h^{*} \cong 0\oplus 0\oplus 2\oplus -2$ and as matrices
\begin{align}
	U_1(\theta)\otimes U_1^{*} (\theta) = \left( \begin{array}{cccc}
		1 & 0&0 &0  \\
		0 & e^{2i\theta}& 0 & 0\\
		0 & 0& e^{-2i\theta}& 0 \\
		0 & 0&0 &1  \\
	\end{array} \right)
\end{align}

Now the character of $\chi_{U_1\otimes U_1^{*}} (\theta) = |\Tr(U_1(\theta))|^2$ is real, which means that we can decompose it also in terms of $\emph{irreducible real}$ representations as $\h\otimes \h^{*} \cong V_0^{(real)} \oplus V_0^{(real)}  \oplus V_2^{(real)}$.

\begin{align}
	U_1(\theta)\otimes U_1^{*} (\theta) = C \left( \begin{array}{cccc}
		1 & 0&0 &0  \\
		0 &\cos{2\theta}& -\sin{2\theta} & 0\\
		0 & \sin{2\theta}& \cos{2\theta}& 0 \\
		0 & 0&0 &1  \\
	\end{array} \right) C^{\dagger}
\end{align}

With the change of basis unitary 
\begin{align}
	C  = \left( \begin{array}{cccc}
		1 & 0&0 &0  \\
		0 &1/\sqrt{2}& i/\sqrt{2}& 0\\
		0 & 1/\sqrt{2}&-i/\sqrt{2}& 0 \\
		0 & 0&0 &1  \\
	\end{array} \right) 
\end{align}

Now let's express $U_1\otimes U_1^{*}$ in the PTM representation $\{\mathbb{I}/\sqrt{2}, X/\sqrt{2}, Y/\sqrt{2}, Z/\sqrt{2}\}$ 
\begin{align}
	U_1\otimes U_1^{*}  \stackrel{PTM}{=} \left( \begin{array}{cccc}
		1 & 0&0 &0  \\
		0 &\cos{2\theta}& -\sin{2\theta} & 0\\
		0 & \sin{2\theta}& \cos{2\theta}& 0 \\
		0 & 0&0 &1  \\
	\end{array} \right) 
\end{align}
where we have used that $e^{iP\theta} P' e^{-iP\theta} = cos{2\theta} P' +  i \sin{2\theta} PP' $ for anti-commuting Pauli operators $P$ and $P'$. 

What does the Fourier analysis using the complex representation looks like:
\begin{align}
	\reallywidehat{U_1\otimes U_1^{*}} (0) & =   \left( \begin{array}{cccc}
		1 & 0&0 &0  \\
		0 &0& 0& 0\\
		0 & 0&0& 0 \\
		0 & 0&0 &1  \\
	\end{array} \right)  \\
	\reallywidehat{U_1\otimes U_1^{*}} (2) & =     \left( \begin{array}{cccc}
		0 & 0&0 &0  \\
		0 &0& 0& 0\\
		0 & 0&1 & 0 \\
		0 & 0&0 &0 \\
	\end{array} \right)  \\
	\reallywidehat{U_1\otimes U_1^{*}} (-2) & =     \left( \begin{array}{cccc}
		0 & 0&0 &0  \\
		0 &1& 0& 0\\
		0 & 0&0& 0 \\
		0 & 0&0 &0 \\
	\end{array} \right)  \\
\end{align}
with the inverse formula
\begin{align}
	U_1\otimes U_1^{*} (\theta ) = \reallywidehat{U_1\otimes U_1^{*}}(0) + \reallywidehat{U_1\otimes U_1^{*}}(2) e^{-2i\theta}  + \reallywidehat{U_1\otimes U_1^{*}}(-2)e^{2i\theta}
\end{align}
and equivalently in the PTM representation 
\begin{align}
	\reallywidehat{U_1\otimes U_1^{*}} (2) & =     \left( \begin{array}{cccc}
		0 & 0&0 &0  \\
		0 &1/2& -i/2& 0\\
		0 & i/2&1/2 & 0 \\
		0 & 0&0 &0 \\
	\end{array} \right)  \\
	\reallywidehat{U_1\otimes U_1^{*}} (-2) & =     \left( \begin{array}{cccc}
		0 & 0&0 &0  \\
		0 &1/2& i/2& 0\\
		0 & -i/2&1/2& 0 \\
		0 & 0&0 &0 \\
	\end{array} \right)  \\
\end{align}

What does the Fourier analysis using the real representation looks like:
\begin{align}
	\reallywidehat{U_1\otimes U_1^{*}} (0) & =   \left( \begin{array}{cccc}
		1 & 0&0 &0  \\
		0 &0& 0& 0\\
		0 & 0&0& 0 \\
		0 & 0&0 &1  \\
	\end{array} \right)  \\
	\reallywidehat{U_1\otimes U_1^{*}} (2) & =     \left( \begin{array}{cccc}
		0 & 0&0 &0  \\
		0 &1/2 & 0& 0\\
		0 & 0&1/2 & 0 \\
		0 & 0&0 &0 \\
	\end{array} \right)  \otimes \left(\begin{array}{cc} 1& 0 \\ 0& 1 \end{array} \right) +  \left( \begin{array}{cccc}
		0 & 0&0 &0  \\
		0 &i/2 & 0& 0\\
		0 & 0&-i/2 & 0 \\
		0 & 0&0 &0 \\
	\end{array} \right)  \otimes \left(\begin{array}{cc} 0& -1 \\ 1& 0 \end{array} \right)
\end{align}
And similarly in the PTM representation
\begin{align}
	\reallywidehat{U_1\otimes U_1^{*}} (2) & =     \left( \begin{array}{cccc}
		0 & 0&0 &0  \\
		0 &1/2 & 0& 0\\
		0 & 0&1/2 & 0 \\
		0 & 0&0 &0 \\
	\end{array} \right)  \otimes \left(\begin{array}{cc} 1& 0 \\ 0& 1 \end{array} \right) +  \left( \begin{array}{cccc}
		0 & 0&0 &0  \\
		0 &0 & -1/2& 0\\
		0 & 1/2& 0 & 0 \\
		0 & 0&0 &0 \\
	\end{array} \right)  \otimes \left(\begin{array}{cc} 0& -1 \\ 1& 0 \end{array} \right)
\end{align}
with the inverse formula 
\begin{align}
	U_1\otimes U_1^{*} (\theta) = \reallywidehat{U_1\otimes U_1^{*}} (0) + \Tr_{2}[\reallywidehat{U_1\otimes U_1^{*}}(2) \mathbb{I} \otimes (V_2^{(real)}(\theta))^{T}]
\end{align}
Note this is the inversion formula for \emph{real} representations. %so in this case we have.  % -- It seems that in the inversion formula we need the dimensionality of the $d_\lambda$ to be $d_\lambda/2$ this would come from the fact that in terms of real representations we have half the number of the complex ones, and they have dimension 2 therefore the coefficient $d_\lambda$ remain 1.}

\subsection{Noisy Random Circuit Sampling}
In this section we describe an algorithm for noisy random circuit sampling that does not rely on \emph{Assumption 3} regarding efficient contractions of sequences of $W_i$ operators on the trivial subspace. This uses the $\texttt{GTT}$ constructions and a layer-based circuit structure that differs from the analysis in the main text.  We emphasise here that $D$ is the number of layers in the circuit as opposed to the number of gates (as used in the main text).

We look at the family of random circuit unitaries $\mathfrak{C}_n$ acting on $n$-qubits. In line with our notation every unitary $C(\g) \in \mathfrak{R}_n$ is indexed by an element $\g \in G^{\times D}$. Here $G$ itself is a product group given by $U(4)^{\times  q}$ where $q$ is the number of pairs of qubits in each layer on which we act non-trivially with an element of $U(4)$, drawn at random according to the uniform Haar measure.  In particular, a layer $U(g_i)W_{\sigma_i}$ will be composed of a fixed permutation $W_{\sigma}$ of the $n$ qubits and unitary representation $U(\g_i)$ given by $q$ tensor products of the fundamental representation 
\begin{equation}
	U(\g_i) = U(g_{i1})\otimes ...\otimes U(g_{iq})\otimes \I
\end{equation}
where  each $U(g_{ij})$ acts on disjoint pairs of two qubits and the identity $\I$ acts on the remaining Hilbert space. Notably, tensor products of irreps form an irreducible representation of the group product $U(4)^{\times q}$. For any Pauli operator (normalised w.r.t Hilbert-Schmidt inner product) $P_{i_1} \otimes ... \otimes P_{i_n}$ there exists $ (\lambda_i, k_i) = [(\lambda_{i_1}, k_{i_1}),..., (\lambda_{i_q}, k_q) ]$ such that $P_{i_1} \otimes ... \otimes P_{i_n} = T^{\lambda_{i_1}}_{k_{i_1}} \otimes ... T^{\lambda_{i_q}}_{k_{i_q}} \otimes P_{i_{2q+1}}  \otimes ... \otimes P_{i_n}$. This transforms as
g
%\begin{widetext}
	\begin{align}
		U(\g_i)^{\dagger}  P_{i_1} \otimes .... \otimes P_{i_n} U(\g_i)  =  \sum_{k'_{i_1},... k'_{i_q} }  v^{\lambda_{i_1}}_{k_{i_1} k'_{i_1}} (g_{i1}^{-1}) ...  v^{\lambda_{i_q}}_{k_{q_1} k'_{i_q}} (g_{iq}^{-1}) \, T^{\lambda_{i_1}}_{k'_{i_1}} \otimes ... T^{\lambda_{i_q}}_{k'_{i_q}} \otimes P_{i_{2q+1}}  \otimes ... \otimes P_{i_n}
		\label{eq:randomlayer1}
	\end{align}
%\end{widetext}
Since group elements are chosen independently, every $\lambda$-irrep can only be the 0 or 1 - irrep and thus the number of terms in this expansion will generally be $15^{||\lambda_i||_0}$, where $||\lambda_i||_0$ denotes the number of non-trivial irreps in $(\lambda_{i_1}, ... \lambda_{i_q})$.  We keep in mind that this is completely determined by the starting Pauli operator that we propagate through the circuit in the Heisenberg picture.  Furthermore, as   $W_{\sigma_i}$ is a permutation of the $n$ qubits then the number of terms in the expansion in Eqn.~\eqref{eq:randomlayer1} remains the same; however the support of those operators in the expansion can correspond to different irreps. 
\subsubsection{Truncated path decomposition for estimating probabilities} 
Suppose we truncate paths according to the following rule  $\sum_{i=1}^{D} ||\lambda_i||_0 \leq L$ for some \emph{fixed} truncation value $L$, where $||\lambda_i||_0$ denotes the number of non-trivial irreps in $(\lambda_{i_1}, ... \lambda_{i_q})$. Then the number of valid paths starting from a given Pauli operator $P_{1_1} \otimes...P_{1_n}$, will be at most $15^L$.  Now if we were to go over all possible input Pauli operators that satisfy the truncation rule we would end up with $\binom{q}{k}15^k$ where $q \leq [n/2]$ and $k\leq L$.  But we can do slightly more clever and notice that the constraint imposes that in any valid paths at least one layer $i$ satisfies $||\lambda_i||_0\leq L/D$  (as we have at least one non-trivial irrep at each layer). The second useful observation is that the group structure involved implies that $||\lambda_i||_0/2 \leq ||\lambda_{i+1}||_0 \leq 2||\lambda_i||_0$.  Therefore we can start from Pauli operators that satisfy $||\lambda_1||_0 \leq  L/D$, and for each of them we evolve backwards in  the Heisenberg picture until we reach the truncation constraint  - overall the number of all such valid paths ranging over all Pauli's that satisfy the constrain will be at most $\sum_{k=1}^{L/D} \binom{[n/2]}{k}15^{L}$. This is not enough to range over all all valid paths that satisfy the truncation rule, however any other valid paths must satisfy $||\lambda_1||_0\ge L/D$.  So we turn to the next layer and we seek those irreps that satisfy $||\lambda_2||_0 \leq L/D$, however this will lead to valid paths when the condition $L/2D \leq ||\lambda_2||_0\leq L/D$ is met.  For any given choice of irrep at this layer, there are $15^{||\lambda_2||_0}$ Paulis  which we propagate forwards and backwards for a total of at most $15^{L}$ valid paths per choice of irrep. Overall the number of valid paths for which either $||\lambda_1||_0\leq L/D$ or $||\lambda_2||_0\leq L/D$ will be at most  $ \sum_{k=1}^{L/D} \binom{[n/2]}{k} 15^{L}  + \sum_{k=L/2D}^{L/D} \binom{[n/2]}{k} 15^{L}$.  Then we continue this procedure for all $D$ layers.  In total, we obtain $\sum_{k=1}^{L/D} \binom{[n/2]}{k} 15^L  + (D-1) \sum_{k=L/2D}^{L/D}\binom{[n/2]}{k} 15^L \leq (n/2)^{L/D} 15^{L} (L/D + (D-1) L/2D)  \approx  L \, 15^L n^{L/D}$ - a similar complexity to that obtained in ~\cite{aharonov2023polynomial}.

%Now consider observables  $P_{1_1} \otimes. . . \otimes P_{1_n}$ with $||\lambda_1||_0 = k$  for some $k \in \{1,...,L/D\}$. For each $P_{1_1} \otimes. . . \otimes P_{1_n}$ 
%Since group elements are chosen independently, every $\lambda$-irrep can only be the 0 or 1 - irrep and thus the number of terms in this expansion will generally be $15^{||\lambda_i||_0}$, where $||\lambda_i||_0$ denotes the number of non-trivial irreps in $(\lambda_{i_1}, ... \lambda_{i_q})$.   We keep in mind that this is completely determined by the starting Pauli operator that we propagate through the circuit in the Heisenberg picture.  Furthermore, as   $W_{\sigma_i}$ is a permutation of the $n$ qubits then the number of terms in the expansion in Eqn.~\eqref{eq:randomlayer1} remains the same; however the support of those operators in the expansion can correspond to different irreps. 

The full circuit $C(\g) \in \mathfrak{R}_n$ with $D$ layers is given by
\begin{equation}
	C(\g) = U(\g_1) W_{\sigma_1} ... U(\g_D)W_{\sigma_D}.
\end{equation}

Let's denote $p(C(\g),x) = |\<x| C(\g) | 0\>|^2$  the ideal output distribution of outcome $x\in \{0,1\}^{n}$ given a random circuit $C(\g) \in \mathfrak{R}_n$ and by $\tilde{p}(C(\g)),x)$ the output distribution of the noisy circuit. First we look at the expansion of the observable $|x\>\<x|$ in terms of the irreducible operator basis  $\{ T^{\lambda}_k \}$ such that $|x\>\<x| = \sum_{\lambda,k}  O_{\lambda,k} (x) T^{\lambda}_k$  where orthonormality of ITOs implies that the coefficients in the expansion are given by $O_{\lambda,k}=Tr((T^{\lambda}_k)^{\dagger} |x\>\<x|) $ and satisfy the following $|\<x_1 |x_2\>|^2 = \sum_{\lambda,k}  O_{\lambda,k}^{*}(x_1) O_{\lambda,k}(x_2) $.  

\iffalse
\cc{In our formalism, count the number of terms in the path}
1. Truncation method from \cite{aharonov2023polynomial} is based on the Hamming weight.  Show that if total Hamming weight of paths is at most  $l$ then $|\T|\leq 2^{O(l)}$.  At each layer we have that an irrep $(\lambda,k)$ in the path is product of at most $[n/2]$ 

In our case $W$ acts as a permutation, so it will not change the Hamming weight itself. Given a path $(\lambda,k) \longrightarrow ... \longrightarrow (\lambda_D, k_D)$ then we have 
\fi
\subsubsection{Approximation bounds}
Bounds on square of the total variation distance $||\tilde{p}-\bar{q}||_1 = \sum_{x} |\tilde{p}(C(\g),x) -\bar{q}(C(\g),x)|$  between distributions $\tilde{p}$ and $\bar{q}$ averaged over the group is given by

\begin{align}
	\int &||\tilde{p} - \bar{q} ||_{1} ^2 d\, \g  \leq 2^n \sum_{x} \int |\tilde{p}(C(\g), x) -\bar{q}(C(\g),x )|^2 d\, g  \\ &= 2^n \sum_{x} || \<0| \tilde{O}_{\g} (x) - \tilde{O}_{\g}^{\I} (x) |0\>||_{L^2(G)}\\
	& = 2^n \sum_{x}  \sum_{\boldsymbol{\lambda} \notin \mathcal{I}} |O_{\lambda,k}(x)|^2  \frac{|e_{\boldsymbol{\lambda}}|^2}{d_{\boldsymbol{\lambda}}} |\<0| T^{\lambda_D}_{k_D} |0\>|^2\\
	& =  2^{2n} \sum_{{\boldsymbol{\lambda}} \notin \I}    \frac{|e_{\boldsymbol{\lambda}}|^2}{d_{\boldsymbol{\lambda}}} |\<0| T^{\lambda_D}_{k_D} |0\>|^2 |O_{\lambda,k} (0) |^2 \\ & \leq |e_{\boldsymbol{\lambda}_{max}}|^2  \left(   2^{2n} \sum_{{\boldsymbol{\lambda}} \notin \I}    \frac{1}{d_{\boldsymbol{\lambda}}} |\<0| T^{\lambda_D}_{k_D} |0\>|^2 |O_{\lambda,k} (0) |^2\right) 
\end{align}
where in the first line we used  the Parseval's theorem and in the second line we used that $|O_{\lambda,k} (x) |^2 = |O_{\lambda,k}(0)|^2$. 
What does anti-concentration tell us in addition? It allows us to put an appropriate bound on the above quantity, in particular it is equivalent to saying that the term in the bracket is bounded.

Let's se what it gives us:
\begin{align}
	O(1) & =  2^n \sum_x \int |p(C(\g),x) |^2 d\, \g \\
	& = 2^n \sum_x || \<0| \O_{\g}(x) |0\>||_{L^2(G)}\\
	& = 2^n \sum_x \sum_{\boldsymbol{\lambda}} |O_{\lambda,k}(x)|^2 \frac{1}{d_{\boldsymbol{\lambda}}} |\<0|T^{\lambda_D}_{k_D} |0\>|^2\\
	& = 2^{2n} \sum_{\boldsymbol{\lambda}} \frac{1}{d_{\boldsymbol{\lambda}}} |\<0|T^{\lambda_D}_{k_D} |0\>|^2 |O_{\lambda,k} (0)|^2
\end{align}

\end{document}